\DeclareMathOperator{\spec}{spec}
\DeclareMathOperator{\tr}{tr}
\DeclareMathOperator{\Exp}{\mathbb{E}}
\DeclareMathOperator*{\argmin}{\arg\min}
\DeclareMathOperator*{\argmax}{\arg\max}
\newcommand{\eps}{\varepsilon}
\newcommand{\cS}{\mathcal{S}}
\newcommand{\cH}{\mathscr{H}}
\newcommand{\cF}{\mathcal{F}}
\newcommand{\Herm}{\mathcal{H}}
\newcommand{\Pos}{\mathcal{P}}
\newcommand{\Unit}{\mathcal{U}}
\newcommand{\cP}{\mathcal{P}}
\newcommand{\id}{{1}}
\newtheorem{lemma}{Lemma}
\newtheorem{proposition}[lemma]{Proposition}
\newtheorem{theorem}[lemma]{Theorem}
\newtheorem{corollary}[lemma]{Corollary}
\numberwithin{equation}{section}
\begin{document}

\title{\Large Correlation Detection and an Operational Interpretation of the R\'enyi Mutual Information}

\author{Masahito Hayashi}
\affiliation{Graduate School of Mathematics, Nagoya University, 
Furocho, Chikusaku, Nagoya, 464-860, Japan}
\affiliation{Centre for Quantum Technologies, National University of Singapore, Singapore 117543, Singapore}
\author{Marco Tomamichel}
\affiliation{School of Physics, The University of Sydney, Sydney 2006, Australia}
\affiliation{Centre for Quantum Technologies, National University of Singapore, Singapore 117543, Singapore}

\begin{abstract}
A variety of new measures of quantum R\'enyi mutual information and quantum R\'enyi conditional entropy have recently been proposed, and some of their mathematical properties explored. Here, we show that the R\'enyi mutual information attains operational meaning in the context of composite hypothesis testing, when the null hypothesis is a fixed bipartite state and the alternative hypothesis consists of all product states that share one marginal with the null hypothesis. This hypothesis testing problem occurs naturally in channel coding, where it corresponds to testing whether a state is the output of a given quantum channel or of a ``useless'' channel whose output is decoupled from the environment. Similarly, we establish an operational interpretation of R\'enyi conditional entropy by choosing an alternative hypothesis that consists of product states that are maximally mixed on one system. Specialized to classical probability distributions, our results also establish an operational interpretation of R\'enyi mutual information and R\'enyi conditional entropy.
\end{abstract}

\maketitle



\section{Introduction}
\label{sc:intro}

In order to distill useful measures of R\'enyi mutual information and R\'enyi conditional entropy from a plethora of possible definitions, it is important to find out which definitions correspond to relevant operational quantities. 
As such an operational task, let us consider how efficiently an arbitrary bipartite correlated quantum state $\rho_{AB}$ on systems $A$ and $B$
can be distinguished from product states
when the marginal of $\rho_{AB}$ on $A$ is known to be $\rho_A$.
This problem can be regarded as the problem of detecting correlations in the state $\rho_{AB}$.
More precisely, we want to consider the minimal probability that we
erroneously select the state $\rho_{AB}$ 
when the actual state is a product state of the form
$\rho_A \otimes \sigma_B$,
under a constraint for the opposite error. 
This problem can be studied as
the Hoeffding bound and Stein's lemma for the following sequence of binary \emph{composite hypothesis testing} problems for $n \in \mathbb{N}$:
\begin{description}[itemsep=0mm]
  \item[null hypothesis] The state is $\rho_{AB}^{\otimes n}$.
  \item[alternative hypothesis] The state is of the form $\rho_A^{\otimes n} \otimes \sigma_{B^n}$ with $\sigma_{B^n}$ any state on $B^{\otimes n}$.
\end{description}
The literature on quantum hypothesis testing is mostly focused on hypothesis testing against a simple hypothesis~\cite{ogawa04,ogawa00,nussbaum09,noetzel13,nagaoka07,nagaoka01,nagaoka06,mosonyiogawa13,mosonyi14,audenaert07,tomamichel12,li12,hiai91,hayashi07}, except for the papers \cite{kumagai13,hayashi06,brandao10}.

The above hypothesis test figures prominently when analyzing various channel coding questions in classical~\cite{polyanskiy10,hayashi09,fabregas16} as well as quantum information processing~\cite{hayashi03}. This connection is particularly important when analyzing how much information can be transmitted with a single use of a quantum channel~\cite{wang10,matthews12} or when approximating how much information can be transmitted with finitely many uses of the channel~\cite{tomamicheltan14,datta14}.
There, the problem is specified by a description of a channel $\mathcal{E}_{A' \to B}$ and a bipartite state $\rho_{AA'}$ where the system $A$ constitutes an environment of the channel, $A'$ is the channel input, and $B$ its output. We are given an unknown state on $n$ copies of $A$ and $B$ and consider the following two hypotheses.
\begin{description}[itemsep=0mm]
  \item[null hypothesis] The state is the output of $n$ uses of the channel $\mathcal{E}_{A'\to B}$, namely the state is exactly $\rho_{AB}^{\otimes n}$ where $\rho_{AB} := \mathcal{E}_{A' \to B}[\rho_{AA'}]$.
  \item[alternative hypothesis] The state is the output of a ``useless'' channel and decoupled from the environment, namely it is of the form $\rho_A^{\otimes n} \otimes \sigma_{B^n}$ with $\sigma_{B^n}$ any state on $n$ copies of $B$.
\end{description}
Polyanskiy~\cite[Sec.~II]{polyanskiy13} discusses the classical special case of this hypothesis testing problem.

A \emph{hypothesis test} for this problem is a binary positive operator-valued measure $\{ Q_{A^nB^n}, \id_{A^nB^n} - Q_{A^nB^n} \}$ on the $n$ copies of the systems $A$ and $B$, 
determined by an operator $0 \leq Q_{A^nB^n} \leq \id_{A^nB^n}$. If the first event (corresponding to $Q_{A^nB^n}$) occurs we select the null hypothesis, and in case the second event (corresponding to $\id_{A^nB^n} - Q_{A^nB^n}$) occurs we select the alternative hypothesis. 
The \emph{error of the first kind}, $\alpha_n(Q_{A^nB^n})$, is defined as the probability with which we wrongly conclude that the alternative hypothesis is correct even if the state is $\rho_{AB}^{\otimes n}$, given by
\begin{align}
  \alpha_n(Q_{A^nB^n}) = \tr[\rho_{AB}^{\otimes n} (\id_{A^nB^n} - Q_{A^nB^n})] .
\end{align}
Conversely, \emph{the error of the second kind}, $\beta_n(Q_{A^nB^n})$, is defined as the probability with which we wrongly conclude that the null hypothesis is correct even if the state is of the form $\rho_A^{\otimes n} \otimes \sigma_{B^n}$ for some $\sigma_{B^n}$, given by
\begin{align}
  \beta_n(Q_{A^nB^n}) = \max_{\sigma_{B^n}} \tr[ \rho_A^{\otimes n} \otimes \sigma_{B^n} \, Q_{A^nB^n}] ,
\end{align}
where the maximum is taken over all states $\sigma_{B^n}$ on $n$ copies of $B$.

\paragraph*{Main Results.}
The main contribution of this paper is an asymptotic analysis of the fundamental trade-off between these two errors as $n$ goes to infinity.
To investigate this trade-off, we ask the following questions: let us assume that our test is such that 
$\beta_n(Q_{A^nB^n}) \leq \exp(-nR)$, what is the minimum value of $\alpha_n(Q_{A^nB^n})$ we can achieve? The answer is different depending on whether $R$ is smaller or larger than the \emph{mutual information} between $A$ and $B$, denoted $I(A\!:\!B)_{\rho}$. If $R < I(A\!:\!B)_{\rho}$, we show that the minimal error of the first kind vanishes exponentially fast in $n$. This implies a \emph{quantum Stein's lemma}~\cite{hiai91} for the above composite hypothesis testing problem. 

More formally, we define
\begin{align}
  \hat{\alpha}_n(nR) = 
\min_{0 \leq Q_{A^nB^n} \leq \id} \Big\{ \alpha_n(Q_{A^nB^n}) \,\Big|\, \beta_n(Q_{A^nB^n}) \leq \exp(-nR) \Big\} 
\end{align}
and investigate the exact exponent with which this error vanishes as $n$ goes to infinity, yielding a \emph{quantum Hoeffding bound}~\cite{hayashi07,nagaoka06} for our composite hypothesis testing problem. We find that the exponents are determined by the \emph{R\'enyi mutual information}, defined as
\begin{align}
  I_{\alpha}(A\!:\!B)_{\rho} = \min_{\sigma_B} D_{\alpha}(\rho_{AB}\|\rho_A \otimes \sigma_B) , \qquad \textrm{for} \qquad \alpha \in (0, 1) ,
\end{align}
where $
D_{\alpha}(\rho\|\sigma) := \frac{1}{\alpha-1} \log \tr \big[ \sigma^{\frac{1-\alpha}2}\rho^{\alpha} \sigma^{\frac{1-\alpha}2} \big]$ 
is the R\'enyi relative entropy first investigated by Petz (see, e.g.~\cite{ohya93}) and the minimization is over all states $\sigma_B$ on $B$.
We obtain 
  \begin{equation}
  \label{eq:hoeffding-thm4}
    \lim_{n \to \infty} \left\{ - \frac{1}{n} \log \hat{\alpha}_n(nR) \right\} 
= \sup_{s \in (0, 1)} \left\{ \frac{1-s}{s} 
\big(   I_{s}(A\!:\!B)_{\rho} - R \big) \right\}.
  \end{equation}

On the other hand, if $R > I(A\!:\!B)_{\rho}$, we show that $\hat{\alpha}_n(n R)$ must approach one exponentially fast in $n$. This implies the strong converse for quantum Stein's lemma~\cite{ogawa00} for our problem. We then find the exact exponents (also called \emph{strong converse exponenents}, see~\cite[Ch.~3]{hayashi06} and~\cite{ogawa00,mosonyiogawa13}) with which the error of the first kind goes to one as $n$ goes to infinity and we find that in our case the exponent is determined by the
\emph{sandwiched R\'enyi mutual information}~\cite{beigi13,gupta13}, given as
\begin{align}
  \widetilde{I}_{\alpha}(A\!:\!B)_{\rho} = \min_{\sigma_B} \widetilde{D}_{\alpha}(\rho_{AB}\|\rho_A \otimes \sigma_B) , \qquad \textrm{for} \qquad \alpha > 1 ,
\end{align}
where $\widetilde{D}_{\alpha}(\rho\|\sigma) 
:= \frac{1}{\alpha-1} \log \tr \big[ \big( \sigma^{\frac{1-\alpha}{2\alpha}} 
\rho \sigma^{\frac{1-\alpha}{2\alpha}}
\big)^{\alpha}  \big]$ 
is the (sandwiched) R\'enyi divergence~\cite{lennert13,wilde13}. 
We obtain 
  \begin{equation}
    \lim_{n \to \infty} \left\{ - \frac{1}{n} \log \big( 1 - \hat{\alpha}_n(nR) \big) \right\} = \sup_{s > 1} \left\{ \frac{s-1}{s} \left( R - \widetilde{I}_{s}(A:B)_\rho \right) \right\}. 
  \end{equation}

Hence, we show that the above composite hypothesis testing problem yields an operational interpretation
for different definitions of the R\'enyi mutual information for the two ranges of $\alpha$, paralleling the observation in~\cite{mosonyiogawa13}.

Finally, we also perform a second order analysis for quantum Stein's lemma~\cite{tomamichel12,li12} and show that the minimal error of the first kind converges to a constant if $\beta_n(Q_{A^nB^n}) \leq \exp(-n I(A\!:\!B)_{\rho} - \sqrt{n} r)$ for some $r \in \mathbb{R}$.
Then, for any $r \in \mathbb{R}$, we have
  \begin{align}
\lim_{n \to \infty} \left\{ \hat{\alpha}_n\big( n I(A\!:\!B)_\rho 
+ \sqrt{n}\,r \big) \right\} = \Phi \left( \frac{r}{\sqrt{V(A\!:\!B)_\rho}} \right) ,
  \end{align}
  where $\Phi$ is the cumulative standard normal (Gaussian) distribution
and
\begin{align}
  V(A\!:\!B)_\rho := \tr \Big[ \rho_{AB} \big(\log \rho_{AB} - \log \rho_{A} \otimes \rho_{B}
- I(A\!:\!B)_\rho\big)^2 \Big].
\end{align}
is the \emph{mutual information variance}.

Analogously, an operational interpretation for \emph{conditional R\'enyi entropies} is established by considering the following binary hypotheses testing problem, which is motivated by the task of decoupling of quantum states. The problem is specified by a description of a state $\rho_{AB}$. Given an unknown state on $A$ and $B$, consider the following two hypotheses:
\begin{description}[itemsep=0mm]
  \item[null hypothesis] The state is the $n$-fold product of $\rho_{AB}$, namely $\rho_{AB}^{\otimes n}$.
  \item[alternative hypothesis] The state is uniform on $A^n$ and decoupled form $B^n$, i.e.\ it is of the form $\pi_A^{\otimes n} \otimes \sigma_{B^n}$, where $\pi_A$ is the fully mixed state on $A$.
\end{description}
The same analysis as above applied to this problem reveals that the exponents in the quantum Hoeffding bound are determined by the R\'enyi conditional entropies defined as~\cite{tomamichel13}
\begin{align}
  H_{\alpha}^{\uparrow}(A|B)_{\rho} = - \min_{\sigma_B} D_{\alpha}(\rho_{AB}\|\id_A \otimes \sigma_B), \qquad \textrm{for} \qquad \alpha \in (0, 1) \,,
\end{align}
and the strong converse exponents are determined by the \emph{sandwiched} conditional R\'enyi entropies~\cite{lennert13}
\begin{align}
    \widetilde{H}_{\alpha}^{\uparrow}(A|B)_{\rho} = - \min_{\sigma_B} \widetilde{D}_{\alpha}(\rho_{AB}\|\id_A \otimes \sigma_B), \qquad \textrm{for} \qquad \alpha > 1 \,.
\end{align}

%
\paragraph*{Related Work.}
Complementary and concurrent to this work, Cooney \emph{et al.}~\cite{cooney14} investigated the strong converse exponents for a similar hypothesis testing problem when adaptive strategies are allowed\,---\,however, they did not treat the case of a composite alternative hypothesis and they also did not analyze the error exponents in the quantum Hoeffding bound. 

Our proof of the strong converse exponents parallels the development in a very recent preprint by Mosonyi and Ogawa~\cite{mosonyi14}. There, the authors consider correlated states and use the G\"artner-Ellis theorem of classical large deviation theory in order to investigate the asymptotic error exponents in the presence of correlations. Here, we are not interested in correlated states \emph{per se}, 
but our proof technique based on pinching naturally leads us to a classical hypothesis testing problem with correlated distributions, for which the G\"artner-Ellis theorem again provides the right solution.
The main difference from their paper is the composite alternative hypothesis, which is not discussed in \cite{mosonyi14}.
To derive the strong converse exponents even for the composite alternative hypothesis, 
we combine the pinching method and the irreducible decomposition.
This idea was firstly initiated by \cite{hayashi97} and was developed in \cite{hayashi02b,tomamichel12}.

\paragraph*{Outline.}
The remainder of this paper is structured as follows. In Section~\ref{sc:pre} we introduce the necessary notation and mathematical preliminaries, and we discuss some properties of the R\'enyi divergence. We believe that Lemma~\ref{lm:pinching} and Corollary~\ref{cor:achieve} may be of independent interest. In Section~\ref{sc:mui} we define the generalized R\'enyi mutual information (which formally generalizes both R\'enyi mutual information and R\'enyi conditional entropy) and discuss various properties, including a duality relation and additivity. Most importantly, in Proposition~\ref{pr:universal}, we show that it can be represented as an asymptotic limit of classical R\'enyi divergences. 

Then, in Section~\ref{sc:prob} we formally define the composite hypothesis test we consider and the required operational quantities. In doing so, we introduce a slightly more general problem that includes the two hypothesis testing problems discussed previously as special cases.
 In Section~\ref{sc:hoeffding} we prove an analogue of the quantum Hoeffding bound that establishes the operational meaning for R\'enyi mutual information and R\'enyi conditional entropy for $\alpha < 1$. Moreover, in Section~\ref{sc:sc} we find the strong converse exponents for our problem, yielding an operational meaning for the R\'enyi mutual information and R\'enyi conditional entropy for $\alpha > 1$. As in the non-composite case, for $\alpha > 1$ the relevant R\'enyi divergence is the ``sandwiched'' R\'enyi divergence. 
 
 We conclude our treatment of this problem by considering the second order asymptotics in~Section~\ref{sc:second}. This section is interesting on its own since it provides a new and more intuitive proof of the achievability of the second order that also easily adapts to non-composite hypothesis testing. 


\section{Notation and Preliminaries}
\label{sc:pre}

We model \emph{quantum systems}, denoted by capital letters (e.g., $A$, $B$), by finite-dimensional Hilbert spaces (e.g., $\cH_A$, $\cH_B$). Moreover, $A^n$ denotes a quantum system composed of $n$ copies of the system $A$, modeled by an $n$-fold tensor product of Hilbert spaces, $\cH_{A^n} = \cH_A^{\otimes n}$. We denote by $\Unit(A)$, $\Herm(A)$ and $\Pos(A)$ the set of unitary, Hermitian, and positive semi-definite operators acting on $\cH_A$, respectively. We denote the identity operator on $\cH_A$ by $\id_A$ and the partial trace by $\tr_A$. Furthermore, we use $|A|$ to denote the dimension of the Hilbert space $\cH_A$. Finally for $L, R \in \Pos(A)$, the shorthand $L \times R = \sqrt{L} R \sqrt{L}$ is often used, and clearly $L \times R \in \Pos(A)$. For two Hermitian operators $L, K \in \Herm(A)$, we write $L \leq K$ if and only if $K - L \in \cP(A)$ and we write $L \ll K$ if the support of $L$ is contained in the support of $K$.

Let $\cS(A)$ be the set of \emph{quantum states}, i.e., $\cS(A) := \{\rho_A \in \cP(A) \,|\, \tr[\rho_A] = 1 \}$, where $\tr$ denotes the trace. Given a bipartite state $\rho_{AB} \in \cS(AB)$, we denote by $\rho_A = \tr_B[\rho_{AB}]$ its marginal on $A$. We consequently use subscripts to indicate which physical system an operator acts on. Finally, $\pi_A \in \cS(A)$ denotes the maximally mixed state given by $\pi_A = \id_A/|A|$.
%

\subsection{Projectors and Pinching}

We write $\{ L \geq K \} = \id - \{ L < K \}$ for the projector onto the subspace spanned by eigenvectors corresponding to non-negative eigenvalues of $L - K$. By definition we have $(L - K) \{L\geq K \} \geq 0$, and, thus, 
$L \{ L \geq K \} \geq K \{ L \geq K \}$.
For any unitary $V$, we further have
\begin{align}
  V \{ L \geq K \} V^{\dagger} = \big\{ V L V^{\dagger} \geq V K V^{\dagger} \big\} .
  \label{eq:proj-unitary}
\end{align}
We will also use an inequality by Audenaert~\emph{et al.}~\cite[Thm.~1]{audenaert07}, which can be conveniently stated as follows~\cite[Eq.~(24)]{audenaert07-3}. Let $L$ and $K$ be positive semi-definite and $s \in (0,1)$. Then,
\begin{align}
  \tr [L^s K^{1-s}] &\geq \tr \left[ K \{ L \geq K \} \right] + \tr \left[ L \{ L < K \} \right]  . \label{eq:audenaert}
\end{align}

For any Hermitian $L$, we write its spectral decomposition as $L = \sum_{\lambda \in \mathrm{spec}(L)} \lambda P^{L,\lambda}$, where $P^{L,\lambda}$ are projectors and $\mathrm{spec}(L) \subset \mathbb{R}$ is its discrete spectrum.
We denote by $\cP_{L}$ the \emph{pinching map} for this spectral decomposition, i.e.\ the following completely positive trace-preserving map:
\begin{align}
  \cP_{L} : K \mapsto \sum_{\lambda \in \mathrm{spec}(L)} P^{L,\lambda} K P^{L,\lambda} \,. 
\end{align}

\subsection{Permutation Invariance and Universal State}

We will use the following observation from the representation theory of the group~$S_n$ of permutations of $n$ elements. 
Let $U_{A^n}\!: S_n \mapsto \mathcal{U}(A^n)$ denote the natural unitary representation of $S_n$ that permutes the subsystems $A_1, A_2, \ldots, A_n$. An operator $L_{A^n}$ is called \emph{permutation invariant} if it satisfies $U_{A^n}(\pi) L_{A^n} U_{A^n}(\pi)^{\dagger} = L_{A^n}$ for all $\pi \in S_n$. Similarly, we say that $L_{A^n}$ is \emph{invariant under ($n$-fold) product unitaries} if it satisfies $V_A^{\otimes n} L_{A^n} V_A^{\dagger \otimes n} = L_{A^n}$ for all $V_A \in \mathcal{U}(A)$.

\begin{lemma}
  \label{lm:universal}
  Let $A$ be a system with $|A| = d$. For all $n \in \mathbb{N}$ there exists a state $\omega_{A^n}^n \in \cS(A^n)$, which we call \emph{universal state}, such that the following holds: 
  \begin{enumerate}
    \item For all permutation invariant states $\tau_{A^n} \in \cS(A^n)$, we have
  \begin{align}
    \tau_{A^n} \leq g_{n,d}\ \omega_{A^n}^n
  \qquad \textrm{with} \qquad g_{n,d} = { n + d^2 - 1 \choose n } \leq (n + 1)^{d^2-1} \,. \label{eq:universal-state}
  \end{align}
  \item 
     The universal state has the form
     \begin{align}
        \omega_{A^n}^n = \bigoplus_{\lambda \in \Lambda_{n,d}} p_{\lambda} \frac{P_{A^n}^{\lambda}}{\tr(P_{A^n}^{\lambda})}, \label{eq:uni0}
     \end{align}
     where $\Lambda_{n,d}$ is the set of Young diagrams of size $n$ and depth $d$ and satisfies $|\Lambda_{n,d}| \leq (n+1)^{d-1}$, $\{ P_{A^n}^{\lambda} \}_{\lambda}$ are mutually orthogonal projectors and $\{ p_{\lambda} \}_{\lambda}$ is a probability distribution.
  \item The state $\omega_{A^n}^n$ is permutation invariant and invariant under $n$-fold product unitaries, and commutes with all permutation invariant states.
  \end{enumerate}
\end{lemma}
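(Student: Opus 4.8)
The plan is to build $\omega_{A^n}^n$ from Schur--Weyl duality and read off both claims from its block structure. Decompose $\cH_A^{\otimes n} = \bigoplus_{\lambda \in \Lambda_{n,d}} \mathcal{U}_\lambda \otimes \mathcal{V}_\lambda$, where $\Lambda_{n,d}$ ranges over the Young diagrams with $n$ boxes and at most $d$ rows, $\mathcal{U}_\lambda$ carries the irreducible $\mathcal{U}(A)$-representation obtained by restricting $V_A \mapsto V_A^{\otimes n}$, and $\mathcal{V}_\lambda$ carries the irreducible $S_n$-representation obtained by restricting $U_{A^n}$. Let $P_{A^n}^\lambda$ be the orthogonal projector onto the component $\mathcal{U}_\lambda \otimes \mathcal{V}_\lambda$; the $P_{A^n}^\lambda$ are mutually orthogonal, sum to $\id_{A^n}$, and commute with every $U_{A^n}(\pi)$ and every $V_A^{\otimes n}$, so each of them is permutation invariant and invariant under product unitaries. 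Counting weakly decreasing tuples $(\lambda_1, \dots, \lambda_d)$ of nonnegative integers with $\sum_i \lambda_i = n$\,---\,each fixed by its first $d-1$ entries, all lying in $\{0, \dots, n\}$\,---\,gives $|\Lambda_{n,d}| \leq (n+1)^{d-1}$. I then define
\begin{align}
  \omega_{A^n}^n := \sum_{\lambda \in \Lambda_{n,d}} p_\lambda\, \frac{P_{A^n}^\lambda}{\tr[P_{A^n}^\lambda]}, \qquad p_\lambda := \frac{\dim \mathcal{U}_\lambda}{g}, \quad g := \sum_{\mu \in \Lambda_{n,d}} \dim \mathcal{U}_\mu .
\end{align}
Then $\{p_\lambda\}$ is a probability distribution, $\omega_{A^n}^n$ is a state whose spectral decomposition has at most $|\Lambda_{n,d}|$ distinct eigenvalues $p_\lambda/\tr[P_{A^n}^\lambda]$ with eigenprojectors $P_{A^n}^\lambda$, it is permutation invariant and invariant under product unitaries since each $P_{A^n}^\lambda$ is, and it commutes with every permutation invariant operator (in particular every permutation invariant state) because it is a multiple of $\id$ on each isotypic block, on which any permutation invariant operator is block diagonal. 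This yields the second claim.

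For the first claim I would invoke the double commutant theorem: a permutation invariant state $\tau_{A^n}$, commuting with all $U_{A^n}(\pi)$, has the form $\tau_{A^n} = \bigoplus_\lambda \tilde\tau_\lambda \otimes \id_{\mathcal{V}_\lambda}$ with each $\tilde\tau_\lambda \geq 0$ on $\mathcal{U}_\lambda$. Put $q_\lambda := \tr[\tilde\tau_\lambda]\,\dim\mathcal{V}_\lambda$, so that $\sum_\lambda q_\lambda = \tr[\tau_{A^n}] = 1$; since every eigenvalue of the positive operator $\tilde\tau_\lambda$ is at most $\tr[\tilde\tau_\lambda]$, we have $\tilde\tau_\lambda \leq \tr[\tilde\tau_\lambda]\,\id_{\mathcal{U}_\lambda}$, hence
\begin{align}
  \tau_{A^n} \leq \bigoplus_\lambda \frac{q_\lambda}{\dim\mathcal{V}_\lambda}\, P_{A^n}^\lambda \leq \bigoplus_\lambda \frac{P_{A^n}^\lambda}{\dim\mathcal{V}_\lambda} = \sum_\lambda \dim\mathcal{U}_\lambda\, \frac{P_{A^n}^\lambda}{\tr[P_{A^n}^\lambda]} = g\, \omega_{A^n}^n ,
\end{align}
using $q_\lambda \leq 1$ and $\tr[P_{A^n}^\lambda] = \dim\mathcal{U}_\lambda \dim\mathcal{V}_\lambda$. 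As $\omega_{A^n}^n \geq 0$, it remains only to verify $g \leq g_{n,d}$.

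This is the one genuinely quantitative step, and I would regard it as the crux. Since $\dim\mathcal{U}_\lambda \geq 1$ we have $g \leq \sum_\lambda (\dim\mathcal{U}_\lambda)^2$, and by Schur--Weyl duality $\bigoplus_\lambda \mathcal{B}(\mathcal{U}_\lambda) \otimes \id_{\mathcal{V}_\lambda}$ is exactly the linear span of $\{V_A^{\otimes n} : V_A \in \mathcal{U}(A)\}$, so $\sum_\lambda (\dim\mathcal{U}_\lambda)^2$ equals the dimension of that span. Viewed inside $\mathcal{B}(\cH_A)^{\otimes n}$, each $V_A^{\otimes n}$ is invariant under permuting its $n$ tensor factors, so the span lies inside $\mathrm{Sym}^n(\mathcal{B}(\cH_A))$, which has dimension $\binom{n+d^2-1}{n}$ because $\dim\mathcal{B}(\cH_A) = d^2$. (Equivalently, Cauchy's identity specialized to $d$ equal variables on each side gives $\sum_{\lambda \vdash n}(\dim\mathcal{U}_\lambda)^2 = \binom{n+d^2-1}{n}$ outright.) Hence $g \leq \binom{n+d^2-1}{n} = g_{n,d}$, so $\tau_{A^n} \le g\,\omega_{A^n}^n \le g_{n,d}\,\omega_{A^n}^n$, and $\binom{n+d^2-1}{n} = \binom{n+d^2-1}{d^2-1} \leq (n+1)^{d^2-1}$ by the same counting as for $|\Lambda_{n,d}|$. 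Overall the argument is a textbook application of Schur--Weyl duality; the only real decisions are the structural one of weighting the isotypic blocks by $p_\lambda \propto \dim\mathcal{U}_\lambda$ (which keeps $\{p_\lambda\}$ normalized without spoiling domination) and the dimension bound $\sum_\lambda(\dim\mathcal{U}_\lambda)^2 \leq g_{n,d}$.
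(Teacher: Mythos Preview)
Your proof is correct but takes a genuinely different route from the paper's. The paper constructs $\omega_{A^n}^n$ by purifying an arbitrary permutation-invariant $\tau_{A^n}$ into the symmetric subspace of $(\cH_A\otimes\cH_{A'})^{\otimes n}$ (with $\cH_{A'}\cong\cH_A$), using $\tau_{A^nA'^n}\le P^{\mathrm{symm}}_{A^nA'^n}$, and then defining $\omega_{A^n}^n:=g_{n,d}^{-1}\tr_{A'^n}[P^{\mathrm{symm}}_{A^nA'^n}]$; the domination bound is immediate from the partial trace, and the block structure~\eqref{eq:uni0} is read off after the fact via Schur--Weyl. You instead build $\omega_{A^n}^n$ directly in the Schur--Weyl decomposition of the single system and dominate $\tau_{A^n}$ block by block using the double-commutant form $\tau_{A^n}=\bigoplus_\lambda\tilde\tau_\lambda\otimes\id_{\mathcal V_\lambda}$. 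The two constructions even produce different states: the paper's has eigenvalue $\dim\mathcal U_\lambda/(g_{n,d}\dim\mathcal V_\lambda)$ on the $\lambda$-block, while yours has $1/(g\dim\mathcal V_\lambda)$. What the paper's purification trick buys is brevity\,---\,the constant $g_{n,d}=\dim\mathrm{Sym}^n(\cH_{AA'})$ appears automatically without any dimension-counting argument. What your approach buys is a sharper intermediate constant $g=\sum_\lambda\dim\mathcal U_\lambda\le\sum_\lambda(\dim\mathcal U_\lambda)^2=g_{n,d}$ and an entirely self-contained argument that never leaves $\cH_A^{\otimes n}$; your identification of $\sum_\lambda(\dim\mathcal U_\lambda)^2$ with $\dim\mathrm{Sym}^n(\mathcal B(\cH_A))=\binom{n+d^2-1}{n}$ via Cauchy's identity is exactly the bridge between the two pictures.
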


Note that a related construction is presented in~\cite{christandl09}, and we refer the reader to~\cite{matthiasphd,hayashi16} for a thorough discussion of group representation theory in the context of quantum information. A different explicit construction of such a universal state is also proposed in~\cite[Sec.~3]{hayashi09b}, but the constant given there instead of $g_{n,d}$ is not optimal.

\begin{proof}
  Since $\tau_{A^n}$ is invariant under permutations, it has a purification $\tau_{A^nA'^n}$ in the symmetric subspace of
  $(\cH_A \otimes \cH_{A'})^{\otimes n}$ where $\cH_{A'} \equiv \cH_{A}$ are isomorphic (see, e.g.,~\cite[Lem.~4.2.2.]{renner05}). Recall that the symmetric subspace is spanned by all vectors in $(\cH_A \otimes \cH_{A'})^{\otimes n}$ that are invariant under $U_{A^nA^{\prime n}}(\pi)$ for all $\pi \in S_n$.
  Now let $P_{A^nA'^n}^{\,\textrm{symm}}$ denote the projector onto this symmetric subspace, and its dimension by $g_{n,d}$. Then,
  \begin{align}
    \tau_{A^n A'^n} \leq P_{A^nA'^n}^{\,\textrm{symm}}, \qquad \textrm{and} \qquad P_{A^nA'^n}^{\,\textrm{symm}} = \frac{1}{|S_n|} \sum_{\pi \in S_n} U_{A^n}(\pi) \otimes U_{A'^n}(\pi) \,. \label{eq:uni1}
  \end{align}
  Let us now define the universal state as
  \begin{align}
    \omega_{A^n}^n := \frac{1}{g_{n,d}} \tr_{A'^n} \big[ P_{A^nA'^n}^{\,\textrm{symm}} \big] = \frac{1}{g_{n,d}|S_n|} \sum_{\pi \in S_n} \tr[ U_{A'^n}(\pi) ] \, U_{A^n}(\pi) \label{eq:uni2} \,.
  \end{align}
 The state clearly has the desired first property, due to~\eqref{eq:uni1}. Moreover, it is evident from~\eqref{eq:uni2} that $\omega_{A^n}^n$ is invariant under permutations and product unitaries. By the Schur-Weyl duality the natural  representation of $S_n \times \mathcal{U}(A)$ given by $\pi \times V_A \mapsto U_{A^n}(\pi) \cdot V_A^{\otimes n}$ decomposes into different irreducible representations labelled by the Young diagrams in $\Lambda_{n,d}$ and Schur's lemma thus ensures that $\omega_{A^n}^n$ is of the form given in~\eqref{eq:uni0}. 

 The number $|\Lambda_{n,d}|$ is upper bounded by the number of types of strings of length $n$ with $d$ symbols, which in turn is bounded by $(n+1)^{d-1}$. (See, e.g.,~\cite[Eq.~(1)]{hayashi09b}).

 Finally, note that all irreducible representations of $S_n$ of a type $\lambda$ are contained in the support of $P_{A^n}^{\lambda}$. Thus, by Schur's Lemma every permutation invariant state $\tau_{A^n}$ can be written in the block-diagonal form
$\tau_{A^n} = \bigoplus_{\lambda \in \Lambda_{n,d}} \tau_{A^n}^{\lambda}$, 
  where $\tau_{A^n}^{\lambda} \ll P_{A^n}^{\lambda}$. We conclude the proof by noting that states of this form commute with $\omega_{A^n}^n$.
  \end{proof}

\subsection{R\'enyi Divergence}

Let us define the following two families of R\'enyi divergences for $\alpha \in (0, 1) \cup (1, \infty)$.
For any quantum state $\rho \in \cS(A)$ and positive semi-definite operator $\sigma \in \cP(A)$, we define the \emph{R\'enyi relative entropy}~\cite{ohya93} and the \emph{(sandwiched) R\'enyi divergence}~\cite{lennert13,wilde13}, respectively, as follows. If either $\alpha < 1$ and $\rho \not\perp \sigma$ or $\alpha > 1$ and $\rho \ll \sigma$, we define
\begin{align}
  D_{\alpha}(\rho\|\sigma) &:= \frac{1}{\alpha-1} \log \tr \Big[ \sigma^{\frac{1-\alpha}2}\rho^{\alpha} \sigma^{\frac{1-\alpha}2} \Big] \qquad \textrm{and}\\
  \widetilde{D}_{\alpha}(\rho\|\sigma) &:= \frac{1}{\alpha-1} \log \tr \Big[ \big( \sigma^{\frac{1-\alpha}{2\alpha}} \rho \sigma^{\frac{1-\alpha}{2\alpha}}\big)^{\alpha}  \Big] \,,
\end{align}
and else we set $D_{\alpha}(\rho\|\sigma) = \widetilde{D}_{\alpha}(\rho\|\sigma) = \infty$. Here and hereafter we  employ the generalized inverse to take negative powers of positive operators that do not have full support.
Note that both definitions are continuous functions onto the extended positive real axis $\mathbb{R} \cup \{+\infty\}$. (See, e.g.,~\cite[Lem.~13]{lennert13}.) 

The data-processing inequality states that for any completely positive trace preserving map $\mathcal{E}$, we have $D_{\alpha}(\rho\|\sigma) \geq D_{\alpha}(\mathcal{E}(\rho)\|\mathcal{E}(\sigma))$ for $\alpha \in [0,2]$~\cite{petz86} and $\widetilde{D}_{\alpha}(\rho\|\sigma) \geq \widetilde{D}_{\alpha}(\mathcal{E}(\rho)\|\mathcal{E}(\sigma))$ for $\alpha \in [\frac12,\infty)$~\cite{frank13,beigi13,lennert13}.
We also note the following consequence of the data-processing inequality:
\begin{lemma} \label{lm:iso}
  For any states $\rho \in \cS(A)$, $\sigma \in \cS(A')$, and any isometry $U: \cH_A \to \cH_{A'}$, we have
  \begin{align}
    D_{\alpha}(\rho\|U^{\dag} \sigma U) &\leq D_{\alpha}(U\rho U^{\dagger}\| \sigma ) && \textrm{if } \alpha \in [0, 2] \\
    \widetilde{D}_{\alpha}(\rho\|U^{\dag} \sigma U) &\leq \widetilde{D}_{\alpha}(U\rho U^{\dagger}\| \sigma )  && \textrm{if } \alpha \in \Big[\frac12, \infty\Big) \,.
  \end{align}
\end{lemma}

\begin{proof}
  Define the projector onto the image of $U$ as $P_{A'} = U U^{\dag}$. Then we leverage on the data-processing inequality for the map $X \mapsto P_{A'} X P_{A'} + (1_{A'}-P_{A'})X(1_{A'}-P_{A'})$ to find
  \begin{align}
     D_{\alpha}(U\rho U^{\dag}\| \sigma ) &\geq D_{\alpha}\big( P_{A'}U \rho U^{\dag}P_{A'} \big\| P_{A'} \sigma P_{A'} + (1_{A'}-P_{A'})\sigma(1_{A'}-P_{A'}) \big) \\
     &= D_{\alpha}(U\rho U^{\dag} \| U U^{\dag} \sigma U U^{\dag} ) \\
     &= D_{\alpha}( \rho \| U^{\dag} \sigma U)
  \end{align}
  for $\alpha \in [0,2]$. The analogous argument for $\widetilde{D}_{\alpha}$ and $\alpha \in [\frac12, \infty)$ concludes the proof. 
\end{proof}

The two families of divergences coincide when $\rho$ and $\sigma$ commute. For $\alpha \in \{ 0, 1, \infty \}$ we define $D_{\alpha}(\rho\|\sigma)$ and $\widetilde{D}_{\alpha}(\rho\|\sigma)$ as the corresponding limit. The relative entropy emerges when we take the limit $\alpha \to 1$ in both cases, namely~\cite{lennert13,wilde13}
\begin{align}
  \lim_{\alpha \to 1} \widetilde{D}_{\alpha}(\rho\|\sigma) = \lim_{\alpha \to 1} D_{\alpha}(\rho\|\sigma) = \tr \big[ \rho (\log \rho - \log \sigma) \big] =: D(\rho\|\sigma) \,.
\end{align}
Some special cases of these entropies, in particular $D_{0}(\rho\|\sigma)$ and $\widetilde{D}_{\infty}(\rho\|\sigma)$ have previously been discussed in~\cite{datta08} and are based on Renner's min- and max-entropy~\cite{renner05}. A comprehensive overview of other special cases is given in~\cite{lennert13}.

For the second order analysis we will employ the \emph{information variance}~\cite{li12,tomamichel12}, given as
\begin{align}
  V(\rho\|\sigma) := \tr \Big[ \rho \big(\log \rho - \log \sigma - D(\rho\|\sigma) \big)^2 \Big] \,.
\end{align}
In particular, we will use the fact that~\cite[Props.~4--5]{lintomamichel14} (see also~\cite{mybook})
\begin{align}
  \frac{\partial}{\partial \alpha} D_{\alpha}(\rho\|\sigma)  \bigg|_{\alpha = 1} = \frac{\partial}{\partial \alpha} \widetilde{D}_{\alpha}(\rho\|\sigma)  \bigg|_{\alpha = 1} = \frac{1}{2 \log e} V(\rho\|\sigma).
  \label{H-28}
\end{align}

Let $|\spec(\sigma)|$ denote the number of mutually different eigenvalues of $\sigma$. The following property of the sandwiched R\'enyi divergence is crucial for our derivations:
\begin{lemma}
  \label{lm:pinching}
  Let $\rho \in \cS$ and $\sigma \in \cP$. For all $\alpha \geq 0$, we have
  \begin{align}
     \widetilde{D}_{\alpha} (\cP_{\sigma}(\rho) \| \sigma ) \leq \widetilde{D}_{\alpha}(\rho\|\sigma) \leq \widetilde{D}_{\alpha}(\cP_{\sigma}(\rho) \| \sigma ) + \begin{cases} \log |\spec(\sigma)| & \textrm{if } \alpha \in [0, 2] \\
      2 \log |\spec(\sigma)| & \textrm{if } \alpha > 2  \end{cases}\,.
  \end{align}
\end{lemma}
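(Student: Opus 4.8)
The plan is to exploit the fact that $\cP_\sigma(\rho)$ commutes with $\sigma$, so that on the right-hand side the two R\'enyi divergences $D_\alpha$ and $\widetilde D_\alpha$ coincide and everything reduces to operators that can be diagonalized simultaneously. The left inequality, $\widetilde D_\alpha(\cP_\sigma(\rho)\|\sigma) \le \widetilde D_\alpha(\rho\|\sigma)$, is just the data-processing inequality for the sandwiched R\'enyi divergence under the CPTP pinching map $\cP_\sigma$ — valid for all $\alpha \ge 1/2$, and for $\alpha \in [0,1/2)$ one can either invoke the known monotonicity there or note separately that pinching can only decrease the relevant trace quantity; since $\cP_\sigma(\sigma) = \sigma$, this is immediate. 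So the real content is the upper bound on $\widetilde D_\alpha(\rho\|\sigma)$.

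For the upper bound the key tool is the \emph{pinching inequality}: if $\sigma$ has $N := |\spec(\sigma)|$ distinct eigenvalues, then $\rho \le N\, \cP_\sigma(\rho)$. This is the standard fact that averaging over the (at most $N$) spectral projectors of $\sigma$ loses at most a factor $N$. First I would establish this, either by writing $\cP_\sigma$ as a uniform average of conjugations by the $N$ unitaries $\exp(2\pi i k \sigma'/N)$ for a suitably relabeled $\sigma'$, or by the direct Cauchy–Schwarz / operator-Jensen argument. Then I would feed this into the definition of $\widetilde D_\alpha$. Write $A := \sigma^{\frac{1-\alpha}{2\alpha}} \rho\, \sigma^{\frac{1-\alpha}{2\alpha}}$ and $A_0 := \sigma^{\frac{1-\alpha}{2\alpha}} \cP_\sigma(\rho)\, \sigma^{\frac{1-\alpha}{2\alpha}}$; conjugation by the fixed positive operator $\sigma^{\frac{1-\alpha}{2\alpha}}$ preserves the operator ordering, so $A \le N A_0$.

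The final step is to convert $A \le N A_0$ into a bound on $\tr[A^\alpha]$ versus $\tr[A_0^\alpha]$, and this is where the case distinction in $\alpha$ enters and where the main care is needed. For $\alpha \in (0,1]$, the map $x \mapsto x^\alpha$ is operator monotone, so $A \le N A_0$ gives $A^\alpha \le (N A_0)^\alpha = N^\alpha A_0^\alpha$, hence $\tr[A^\alpha] \le N^\alpha \tr[A_0^\alpha]$; dividing by $\alpha - 1 < 0$ and taking logs yields $\widetilde D_\alpha(\rho\|\sigma) \le \widetilde D_\alpha(\cP_\sigma(\rho)\|\sigma) + \frac{\alpha}{\alpha-1}\log N \cdot(-1)$... more carefully, one checks the sign works out to give a $\log N$ penalty (and for $\alpha\to 1$ this is consistent with $\widetilde D_1 = D$). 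For $\alpha \in (1,2]$, operator monotonicity of $x\mapsto x^\alpha$ fails, but $x \mapsto x^{\alpha/2}$ with $\alpha/2 \in (1/2,1]$... actually the clean route is: $x\mapsto x^\alpha$ is operator \emph{convex} for $\alpha\in[1,2]$, but what we want is a trace inequality, so instead use that for $0 \le A \le N A_0$ one has $\tr[A^\alpha] \le \tr[(N A_0)^\alpha]$ whenever $\alpha \ge 1$ — this follows because $\tr[\,\cdot\,^\alpha]$ is monotone with respect to the operator order on positive operators for $\alpha \ge 1$ (it is the $\alpha$-th power of a symmetric monotone norm composed with itself, or directly: $\tr[B^\alpha] = \|B\|_\alpha^\alpha$ and $B \mapsto \|B\|_\alpha$ is monotone). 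That gives $\tr[A^\alpha] \le N^\alpha \tr[A_0^\alpha]$ for \emph{all} $\alpha \ge 1$, and hence a penalty of $\frac{\alpha}{\alpha-1}\log N$, which is bounded by $2\log N$ once $\alpha \ge 2$ and by... hmm, for $\alpha$ slightly above $1$ the factor $\frac{\alpha}{\alpha-1}$ blows up, so this crude bound is \emph{not} enough for $\alpha\in(1,2]$ and one needs the sharper $\log N$ bound there.

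The sharper bound for $\alpha\in(1,2]$ comes from a different manipulation: instead of $\rho \le N\cP_\sigma(\rho)$ directly inside the $\alpha$-power, one uses that for $\alpha \in [1,2]$, $\widetilde D_\alpha$ can be related to a quantity where only a single power of $\rho$ appears and operator monotonicity of $x\mapsto x^{1/\alpha}$ (with $1/\alpha \in [1/2,1]$) or the concavity/almost-concavity properties of the trace functional under pinching apply with only a $\log N$ loss; concretely, I expect the argument to run: $\tr[A^\alpha] = \tr[A \cdot A^{\alpha-1}]$ and bound $A^{\alpha - 1} \le (N A_0)^{\alpha-1} = N^{\alpha-1} A_0^{\alpha-1}$ using operator monotonicity of $x\mapsto x^{\alpha-1}$ for $\alpha-1\in[0,1]$, so $\tr[A^\alpha] = \tr[A\, A^{\alpha-1}]\le N^{\alpha-1}\tr[A\, A_0^{\alpha-1}]$, then further compare $\tr[A A_0^{\alpha-1}]$ with $\tr[A_0^\alpha] = \tr[A_0 A_0^{\alpha-1}]$ using $A \le N A_0$ once more — but that would give $N^\alpha$ again. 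The resolution is surely that $\tr[A\,A_0^{\alpha-1}] \le \tr[A_0\, A_0^{\alpha-1}]$ directly, because $\cP_\sigma$ is a pinching and $A_0 = \cP_\sigma$-image-like so $\tr[\rho X] = \tr[\cP_\sigma(\rho) X]$ whenever $X$ commutes with $\sigma$ — and $A_0^{\alpha-1}$, being a function of $A_0$ which itself commutes with $\sigma$, does commute with $\sigma$. Hence $\tr[A A_0^{\alpha-1}] = \tr[\sigma^{\frac{1-\alpha}{2\alpha}}\rho\sigma^{\frac{1-\alpha}{2\alpha}} A_0^{\alpha-1}] = \tr[\rho\, \sigma^{\frac{1-\alpha}{2\alpha}} A_0^{\alpha-1}\sigma^{\frac{1-\alpha}{2\alpha}}]$ and the operator in the middle commutes with $\sigma$, so this equals the same expression with $\rho$ replaced by $\cP_\sigma(\rho)$, i.e.\ equals $\tr[A_0^\alpha]$. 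Combining, $\tr[A^\alpha]\le N^{\alpha-1}\tr[A_0^\alpha]$, giving the penalty $\frac{\alpha-1}{\alpha-1}\log N = \log N$ for $\alpha\in(1,2]$. For $\alpha>2$ the exponent $\alpha - 1 > 1$ is no longer in the operator-monotone range, so one splits $A^{\alpha-1} \le (N A_0)^{\alpha-1}$ fails; instead write $\tr[A^\alpha]\le \tr[(NA_0)^{\alpha/2} \cdot \text{something}]$ — I anticipate using $\lceil \alpha \rceil$-type splitting or applying the $\alpha\in(1,2]$ bound twice (once to go from $\rho$ to an intermediate and using $\|A\|_\infty \le N\|A_0\|_\infty$ plus the trace bound), netting a factor $N^2$, i.e.\ the $2\log N$ penalty.

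The main obstacle, as the above makes clear, is getting the \emph{sharp} $\log|\spec(\sigma)|$ constant (rather than $2\log|\spec(\sigma)|$) in the range $\alpha\in(1,2]$: the naive "substitute $\rho\le N\cP_\sigma(\rho)$ inside the $\alpha$-power" throws away a factor and must be replaced by the observation that only $\alpha - 1$ copies of the ordering need to be used, together with the commutation trick $\tr[\rho X] = \tr[\cP_\sigma(\rho)X]$ for $[X,\sigma]=0$ to absorb the remaining factor for free. Once that identity is isolated, the $\alpha\in(0,1]$ and $\alpha>2$ cases are routine applications of operator monotonicity of $x\mapsto x^\beta$ for $\beta\in(0,1]$ and monotonicity of Schatten norms, respectively, and the left inequality is just data processing.
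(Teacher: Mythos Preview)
Your proposal is essentially the paper's proof: the pinching inequality $\rho \le |\spec(\sigma)|\,\cP_\sigma(\rho)$, the splitting $\tr[A^\alpha] = \tr[A \cdot A^{\alpha-1}]$, operator monotonicity of $x \mapsto x^{\alpha-1}$ for $\alpha-1 \in (0,1]$, and the commutation identity $\tr[A\, A_0^{\alpha-1}] = \tr[A_0^\alpha]$ (via $\tr[\rho X]=\tr[\cP_\sigma(\rho)X]$ when $[X,\sigma]=0$) are exactly what the paper uses for $\alpha\in(1,2]$, and for $\alpha > 2$ the paper likewise falls back on the cruder trace monotonicity $A \le B \Rightarrow \tr[A^\alpha] \le \tr[B^\alpha]$ to get the $\frac{\alpha}{\alpha-1}\log N \le 2\log N$ penalty.

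The one real gap is your handling of $\alpha \in (0,1)$. Your first attempt --- operator monotonicity of $x \mapsto x^\alpha$ gives $\tr[A^\alpha] \le N^\alpha\tr[A_0^\alpha]$ --- goes the wrong way after dividing by $\alpha - 1 < 0$: you obtain $\widetilde D_\alpha(\rho\|\sigma) \ge \widetilde D_\alpha(\cP_\sigma(\rho)\|\sigma) + \frac{\alpha}{\alpha-1}\log N$, a \emph{lower} bound (and a vacuous one, since $\frac{\alpha}{\alpha-1}<0$). The ``sign works out'' hope is not correct here. The fix, which the paper states only as ``an analogous argument with the opposite inequality,'' is to run your $(1,2]$ argument verbatim but use that $x \mapsto x^{\alpha-1}$ is operator \emph{anti}monotone for $\alpha-1 \in (-1,0)$: then $A \le NA_0$ gives $A^{\alpha-1} \ge N^{\alpha-1}A_0^{\alpha-1}$, hence $\tr[A^\alpha] = \tr[A \cdot A^{\alpha-1}] \ge N^{\alpha-1}\tr[A\, A_0^{\alpha-1}] = N^{\alpha-1}\tr[A_0^\alpha]$ by your commutation trick, and now dividing by $\alpha-1<0$ yields the desired $\widetilde D_\alpha(\rho\|\sigma) \le \widetilde D_\alpha(\cP_\sigma(\rho)\|\sigma) + \log N$.
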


  Since $\cP_{\sigma}(\sigma) = \sigma$, the first inequality is a special case of the data-processing inequality. This special case was first established in~\cite[Prop.~14]{lennert13}.
The second inequality contains Hiai-Petz's evaluation \cite{hiai91} as the special case with $\alpha=1$.
  
\begin{proof}
  It suffices to show the inequalities for the case were all quantities are finite as they otherwise hold trivially due to the fact that $\rho \ll \sigma \iff \cP_{\sigma}(\rho) \ll \sigma$ and $\rho \perp \sigma \iff \cP_{\sigma}(\rho) \perp \sigma$.
  To derive the upper bound for $\alpha \in (1,2]$, we write
  \begin{align}
    \exp \big( (\alpha - 1) \widetilde{D}_{\alpha}(\rho\|\sigma) \big) 
    &= \tr \big[ \big( \sigma^{\frac{1-\alpha}{2\alpha}} \rho \sigma^{\frac{1-\alpha}{2\alpha}} \big)^{\alpha-1}  \sigma^{\frac{1-\alpha}{2\alpha}} \rho \sigma^{\frac{1-\alpha}{2\alpha}} \big] \\
    &\leq \tr \big[ \big( \sigma^{\frac{1-\alpha}{2\alpha}} |\spec(\sigma)| \cP_{\sigma}(\rho) \sigma^{\frac{1-\alpha}{2\alpha}} \big)^{\alpha-1} \sigma^{\frac{1-\alpha}{2\alpha}} \rho \sigma^{\frac{1-\alpha}{2\alpha}} \big] \label{eq:pinching-new-ineq}\\
    &= |\spec(\sigma)|^{\alpha -1} \tr \big[ \big( \sigma^{\frac{1-\alpha}{2\alpha}} \cP_{\sigma}(\rho) \sigma^{\frac{1-\alpha}{2\alpha}} \big)^{\alpha-1} \sigma^{\frac{1-\alpha}{2\alpha}} \cP_{\sigma}(\rho) \sigma^{\frac{1-\alpha}{2\alpha}} \big] \\
    &= |\spec(\sigma)|^{\alpha -1} \exp \big( (\alpha - 1) \widetilde{D}_{\alpha}(\cP_{\sigma}(\rho) \|\sigma\big) \big) \label{eq:pinching-new-final}.
  \end{align}
  To establish~\eqref{eq:pinching-new-ineq}, we use~\cite[Lem.~9]{hayashi06} which states that
  \begin{align}
    \rho \leq |\spec(\sigma)|\, \cP_{\sigma}(\rho) \label{eq:pinching}
  \end{align}
  and, since the function $x \mapsto x^{\alpha-1}$ is operator monotone for $\alpha \in (1, 2)$,
  \begin{align}
    \big( \sigma^{\frac{1-\alpha}{2\alpha}} \rho \sigma^{\frac{1-\alpha}{2\alpha}} \big)^{\alpha-1}\leq |\spec(\sigma)|^{\alpha-1} \, \big( \sigma^{\frac{1-\alpha}{2\alpha}} \cP_{\sigma}(\rho) \sigma^{\frac{1-\alpha}{2\alpha}} \big)^{\alpha-1}
    \label{eq:pinching-ineq-applied} .
  \end{align}
  
   An analogous argument, with the opposite inequality 
   \eqref{eq:pinching-new-ineq}, holds for $\alpha \in (0, 1)$.
  Thus, for all $\alpha \in (0, 1) \cup (1, 2]$, we conclude 
  that
  \begin{align}
    \widetilde{D}_{\alpha} (\rho\|\sigma) \leq \widetilde{D}_{\alpha}(\cP_{\sigma}(\rho) \|\sigma\big) + \log |\spec(\sigma)| . \label{eq:stronger-upper}
  \end{align}
  
  To get an upper bound for $\alpha > 2$ we observe that
  \begin{align}
        \exp \big( (\alpha - 1) \widetilde{D}_{\alpha}(\rho\|\sigma) \big) &= \tr \big[ \big( \sigma^{\frac{1-\alpha}{2\alpha}} \rho \sigma^{\frac{1-\alpha}{2\alpha}} \big)^{\alpha} \big] \\
        &\leq |\spec(\sigma)|^{\alpha} \, \tr \big[ \big( \sigma^{\frac{1-\alpha}{2\alpha}} \cP_{\sigma}(\rho) \sigma^{\frac{1-\alpha}{2\alpha}} \big)^{\alpha} \big]
  \end{align}
  since $A \leq B$ implies $\tr[f(A)] \leq \tr[f(B)]$ for every monotonically increasing function $f$. Thus,
  \begin{align}
    \widetilde{D}_{\alpha} (\rho\|\sigma) \leq \widetilde{D}_{\alpha}(\cP_{\sigma}(\rho) \|\sigma\big) + \frac{\alpha}{\alpha-1} \log |\spec(\sigma)| \leq \widetilde{D}_{\alpha}(\cP_{\sigma}(\rho) \|\sigma\big) + 2 \log |\spec(\sigma)| .
  \end{align}
  Finally, note that the inequality thus also holds for the limiting cases $\alpha \in \{0, 1, \infty\}$. 
\end{proof}

    Note that if $\sigma$ has $k$ different eigenvalues, the number of different eigenvalues of $\sigma^{\otimes n}$ is upper bounded by ${n+k-1 \choose k-1} \leq (n+1)^{d-1}$ (see, e.g.,~\cite{csiszar98}).
As a direct consequence of this and Lemma~\ref{lm:pinching}, we find
\begin{corollary}
  \label{cor:achieve}
  Let $\rho \in \cS$ and $\sigma \in \cP$. For all $\alpha \geq 0$, we have
  \begin{align}
    \lim_{n \to \infty} \left\{ \frac{1}{n} D_{\alpha}\big( \cP_{\sigma^{\otimes n}} (\rho^{\otimes n}) \,\big\|\, \sigma^{\otimes n} \big)  \right\}
    = \widetilde{D}_{\alpha}(\rho\|\sigma) \,. \label{eq:measurement-achievability}
  \end{align}
\end{corollary}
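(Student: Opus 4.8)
The plan is to read the corollary off from Lemma~\ref{lm:pinching} by specializing it to $n$-fold tensor powers, combined with two elementary facts: that the pinched operator commutes with the reference operator (so the Petz and sandwiched divergences agree on that pair), and that the sandwiched R\'enyi divergence is additive under tensor powers. Throughout I would assume $\rho \ll \sigma$, so that $\rho^{\otimes n} \ll \sigma^{\otimes n}$ and all quantities below are finite; if $\rho \not\ll \sigma$ both sides of~\eqref{eq:measurement-achievability} are $+\infty$ and there is nothing to prove.

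First I would note that, by construction, $\cP_{\sigma^{\otimes n}}(\rho^{\otimes n})$ is block-diagonal in the spectral projectors of $\sigma^{\otimes n}$ and hence commutes with $\sigma^{\otimes n}$; consequently the two divergence families coincide on this pair,
\begin{align}
  D_{\alpha}\big( \cP_{\sigma^{\otimes n}}(\rho^{\otimes n}) \,\big\|\, \sigma^{\otimes n} \big) = \widetilde{D}_{\alpha}\big( \cP_{\sigma^{\otimes n}}(\rho^{\otimes n}) \,\big\|\, \sigma^{\otimes n} \big) , \nonumber
\end{align}
so it suffices to work with $\widetilde{D}_{\alpha}$. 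Next I would apply Lemma~\ref{lm:pinching} with $\rho \mapsto \rho^{\otimes n}$ and $\sigma \mapsto \sigma^{\otimes n}$, using the (uniform, slightly weaker) additive term $2\log|\spec(\sigma^{\otimes n})|$ valid for all $\alpha \geq 0$, to obtain
\begin{align}
  \widetilde{D}_{\alpha}\big( \cP_{\sigma^{\otimes n}}(\rho^{\otimes n}) \,\big\|\, \sigma^{\otimes n} \big) \leq \widetilde{D}_{\alpha}\big( \rho^{\otimes n} \,\big\|\, \sigma^{\otimes n} \big) \leq \widetilde{D}_{\alpha}\big( \cP_{\sigma^{\otimes n}}(\rho^{\otimes n}) \,\big\|\, \sigma^{\otimes n} \big) + 2 \log \big| \spec(\sigma^{\otimes n}) \big| . \nonumber
\end{align}
From the defining formula the trace inside $\widetilde{D}_{\alpha}$ factorizes across the $n$ tensor factors, so $\widetilde{D}_{\alpha}(\rho^{\otimes n}\|\sigma^{\otimes n}) = n\,\widetilde{D}_{\alpha}(\rho\|\sigma)$, while the counting bound recalled just before the corollary gives $\log|\spec(\sigma^{\otimes n})| \leq (d-1)\log(n+1)$ with $d = |\spec(\sigma)|$ fixed. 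Dividing the last chain by $n$ and letting $n \to \infty$ makes the additive term vanish, squeezing $\frac{1}{n}\widetilde{D}_{\alpha}(\cP_{\sigma^{\otimes n}}(\rho^{\otimes n})\|\sigma^{\otimes n})$ to $\widetilde{D}_{\alpha}(\rho\|\sigma)$; together with the commutation identity above this is exactly~\eqref{eq:measurement-achievability}.

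I do not expect a genuine obstacle here, since the corollary is essentially a repackaging of Lemma~\ref{lm:pinching} together with additivity and the $o(n)$ growth of $|\spec(\sigma^{\otimes n})|$. The only two points I would state carefully are (i) that commutation really does force $D_{\alpha}$ and $\widetilde{D}_{\alpha}$ to coincide — this is already noted in the preliminaries, both reducing to the classical R\'enyi divergence of the jointly diagonalizing eigenvalue distributions — and (ii) that the additive constant in Lemma~\ref{lm:pinching} applies at the limiting exponents $\alpha \in \{0,1,\infty\}$ as well, which that lemma already asserts. Retaining instead the sharper constant $\log|\spec(\sigma^{\otimes n})|$ for $\alpha \in [0,2]$ would work equally well, as it is also $o(n)$.
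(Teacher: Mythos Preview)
Your proposal is correct and mirrors the paper's own argument: the corollary is stated there precisely as a direct consequence of Lemma~\ref{lm:pinching} together with the bound $|\spec(\sigma^{\otimes n})| \leq (n+1)^{d-1}$, and the identification $D_{\alpha}=\widetilde{D}_{\alpha}$ for the commuting pair is exactly the step that lets one work with $\widetilde{D}_{\alpha}$ and its additivity. Your extra remarks on the $\rho\not\ll\sigma$ case and on the choice of additive constant are harmless refinements that do not alter the approach.
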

This extends a prior result by Mosonyi and Ogawa in~\cite{mosonyiogawa13} to all $\alpha \geq 0$,
and includes Hiai-Petz's result \cite{hiai91} as the case with $\alpha=1$.



\section{Generalized R\'enyi Mutual Information}
\label{sc:mui}

We state our results in a general form that allows us to treat mutual information and conditional entropies at the same time. The more common definitions of the R\'enyi mutual information and the R\'enyi conditional entropies can then be recovered as special cases.

\subsection{Definitions and Basic Properties}

The mutual information can be expressed in different forms. Let $\rho_{AB} \in \cS(AB)$ be a bipartite state, then the \emph{mutual information} is given as
\begin{align}
  I(A\!:\!B)_{\rho} := D(\rho_{AB}\| \rho_A \otimes \rho_B) \label{eq:mi1} 
  &= \min_{\sigma_B \in \cS(B)} D(\rho_{AB}\|\rho_A \otimes \sigma_B) \\
  &= \min_{\tau_A \in \cS(A),\atop \sigma_B \in \cS(B)} D(\rho_{AB}\|\tau_A \otimes \sigma_B) \label{eq:mi2}.
\end{align}
The latter two equalities essentially follow by the chain rule $D(\rho_{AB}\|\tau_A\otimes\sigma_B) = D(\rho_{AB}\|\rho_A\otimes\rho_B) + D(\rho_A\|\tau_A) + D(\rho_B\|\sigma_B)$ and the positive definiteness of the relative entropy, which in particular implies that
the minimizers are given by the marginals $\rho_A$ and $\rho_B$.

Clearly, this allows for various, potentially different extensions to $\alpha \neq 1$. In this work we mostly focus on the expression on the right hand side of~\eqref{eq:mi1}. 
Let us therefore define the \emph{generalized R\'enyi mutual information} and the \emph{generalized sandwiched R\'enyi mutual information} for a state $\rho_{AB} \in \cS(AB)$ and any $\tau_A \geq 0$ such that $\tau_A \gg \rho_A$ as follows: 
\begin{align}
  \label{eq:ialpha}
  I_{\alpha}(\rho_{AB} \| \tau_A)  &:= \inf_{\sigma_B \in \cS(B)} D_{\alpha}(\rho_{AB}\|\tau_A \otimes \sigma_B) \,,\\
  \widetilde{I}_{\alpha}(\rho_{AB} \| \tau_A) &:=  \inf_{\sigma_B \in \cS(B)} \widetilde{D}_{\alpha}(\rho_{AB}\|\tau_A \otimes \sigma_B)\,. \label{eq:ialpha2}
\end{align}
It is easy to verify that the infimum in the above definitions can be replaced by a minimum since the divergences are lower semi-continuous in $\sigma_B$ (as can verified immediately from the definition in the finite-dimensional case). Moreover, without loss of generality, we can always restrict the spaces $\cH_A$ and $\cH_B$ to the supports of $\rho_A$ and $\rho_B$, respectively, as a consequence of Lemma~\ref{lm:iso} when $\alpha$ satisfies the constraints given there.
%
For later reference, we also define
\begin{align}
  V(\rho_{AB} \|\tau_A) := V(\rho_{AB}\|\tau_A \otimes \rho_B) .
\end{align}

Note that the R\'enyi mutual information and the sandwiched R\'enyi mutual information~\cite{beigi13,gupta13} is 
recovered by choosing $\tau_A = \rho_A$, namely we define 
\begin{align}
  I_{\alpha}^{\downarrow}(A\!:\!B)_{\rho} 
:= I_{\alpha}(\rho_{AB}\|\rho_A)\,, \qquad \textrm{and} \qquad
  \widetilde{I}_{\alpha}^{\downarrow}(A\!:\!B)_{\rho} 
:= \widetilde{I}_{\alpha}(\rho_{AB}\|\rho_A) \,.
\end{align}

%
Similarly, we define the R\'enyi conditional entropy~\cite{tomamichel13} and the sandwiched R\'enyi conditional entropy~\cite{lennert13} by choosing $\tau_A = \id_A$. Using the notation of~\cite{tomamichel13}, we have
\begin{align}
  H_{\alpha}^{\uparrow}(A|B)_{\rho} &:= -I_{\alpha}(\rho_{AB}\|\id_A) = \log |A| - I_{\alpha}(\rho_{AB}\|\pi_A) \qquad \textrm{and} \\
    \widetilde{H}_{\alpha}^{\uparrow}(A|B)_{\rho} &:= -\widetilde{I}_{\alpha}(\rho_{AB}\|\id_A) = \log |A| - \widetilde{I}_{\alpha}(\rho_{AB}\|\pi_A) \,.
\end{align}

\subsection{Characterization of the Minimizers}

Specializing the definitions in~\eqref{eq:ialpha} and~\eqref{eq:ialpha2} to $\alpha = 1$, we find that
\begin{align}
  I(\rho_{AB} \| \tau_A) := \min_{\sigma_B \in \cS(B)} D(\rho_{AB}\|\tau_A \otimes \sigma_B) = D(\rho_{AB}\| \tau_A \otimes \rho_B) \,,
\end{align}
i.e.\ the minimizer is given by the marginal $\rho_B$ at $\alpha = 1$ as noted above. For $I_{\alpha}(\rho_{AB}\|\tau_A)$ and general $\alpha$ we can determine the minimizer using a quantum Sibson's identity (see~\cite[Lem.~3 in Suppl.~Mat.]{sharma13}), which yields the unique minimizer of~\eqref{eq:ialpha} for all $\alpha$. Namely,
   \begin{equation}
   I_{\alpha}(\rho_{AB}\|\tau_A) = D_{\alpha}(\rho_{AB}\|\tau_A \otimes \sigma_B^*(\alpha)) 
   \quad \textrm{with} \quad 
   \sigma_B^*(\alpha) := \frac{ \left( \tr_A \left[ \tau_A^{{1-\alpha}}  \rho_{AB}^{\alpha} \right] \right)^{\frac{1}{\alpha}}}{\tr\Big[\left( \tr_A \left[ \tau_A^{{1-\alpha}}  \rho_{AB}^{\alpha} \right] \right)^{\frac{1}{\alpha}} \Big]} \,. \label{eq:sibson-def}
   \end{equation}

Such a characterization is harder to find for the minimizer in~\eqref{eq:ialpha2}. 
The following lemma, proved in Appendix~\ref{app:diff}, 
provides a characterization of the minimizer for general $\alpha$ as a fixed-point of a peculiar non-linear map.

\begin{lemma}
\label{pr:fixed-point-lemma}
  Let $\rho_{AB} \in \cS(AB)$ and $\tau_A \geq 0$ such that $\tau_A \gg \rho_A$ and $\alpha \geq \frac12$. Then there exists a unique state $\widetilde{\sigma}_B^*(\alpha) \gg \rho_B$ such that
 \begin{align}
    \widetilde{\sigma}_B^*(\alpha) = \frac{\tr_A \Big[ \Big( \big(\tau_A \otimes \widetilde{\sigma}_B^*(\alpha)\big)^{\frac{1-\alpha}{2\alpha}} \rho_{AB} \big(\tau_A \otimes \widetilde{\sigma}_B^*(\alpha)\big)^{\frac{1-\alpha}{2\alpha}} \Big)^{\alpha} \Big] }{ \tr \Big[ \Big( \big(\tau_A \otimes \widetilde{\sigma}_B^*(\alpha)\big)^{\frac{1-\alpha}{2\alpha}} \rho_{AB} \big(\tau_A \otimes \widetilde{\sigma}_B^*(\alpha)\big)^{\frac{1-\alpha}{2\alpha}} \Big)^{\alpha} \Big]} \,.
  \end{align}
  which satisfies $\widetilde{I}_{\alpha}(\rho_{AB}\|\tau_A) = \widetilde{D}_{\alpha}(\rho_{AB}\|\tau_A \otimes \widetilde{\sigma}_B^*(\alpha))$
 \end{lemma}
For $\alpha = 1$ the fixed-point of this map is clearly given uniquely by $\widetilde{\sigma}_B^*(1) = \rho_B$, as expected.

\subsection{Duality Relations}

We will take advantage of the following duality relation for the mutual information:
\begin{lemma}
  \label{lm:duality}
  Let $\rho_{AB} \in \cS(AB)$, $\tau_A \geq 0$ such that $\tau_A \gg \rho_A$. Then, for any purification $\rho_{ABC}$ of $\rho_{AB}$, we have
\begin{align}
   \widetilde{I}_{\alpha}(\rho_{AB}\|\tau_A) &= -\widetilde{I}_{\beta}(\rho_{AC}\|\tau_A^{-1}) \,, \quad &&\textrm{for} \quad  \alpha, \beta \in \big[1/2,\infty\big), \quad \alpha^{-1} + \beta^{-1} = 2  \label{eq:duality1}, \\
   I_{\alpha}(\rho_{AB}\|\tau_A) &= - \widetilde{D}_{\beta}(\rho_{AC} \| \tau_A^{-1} \otimes \rho_C) 
\quad &&\textrm{for} \quad \alpha \in [0,\infty), \quad \beta = \alpha^{-1} \label{eq:duality2} , \\
D_\alpha(\rho_{AB}\|\tau_A\otimes \rho_B)
&=-D_\beta(\rho_{AC}\|\tau_A^{-1}\otimes \rho_C)
\quad &&\textrm{for} \quad 
\alpha,\beta \in [0,2], \quad  \alpha+\beta=2
\label{eq:duality3},
\end{align}
where the inverse is taken on the support of $\tau_A$.
\end{lemma}
The first equation is a rather straightforward generalization of the duality relation for the conditional R\'enyi entropy that was recently established independently in~\cite[Thm.~10]{lennert13} and in~\cite{beigi13}. The second equation similarly generalizes the duality relation in~\cite[Thm.~2]{tomamichel13}.
We provide a proof in Appendix~\ref{app:dual} for completeness.

Finally, combining \eqref{H-28} and \eqref{eq:duality3}, we find
\begin{align}
V(\rho_{AC}\|\tau_A^{-1})=
V(\rho_{AB}\|\tau_A)
\label{eq:duality4}.
\end{align}

\subsection{Additivity}

We are interested in the additivity of the mutual informations $I_{\alpha}$ and $\widetilde{I}_{\alpha}$, which follow immediately from the duality relations established in the previous section. To see this, note that the inequality trivially holds in one direction by definition\,---\,the other direction then follows by applying the duality relation on both sides for a product purification. (We note that a simple extension of~\cite[Th.~10 and 11]{beigi13} also establishes that $\widetilde{I}_{\alpha}(\rho_{AB}\|\tau_A)$ is additive.)

\begin{lemma}
  \label{lm:add-old}  \label{lm:add-new}
  Let $\rho_{AB} \in \cS(AB)$, $\omega_{A'B'} \in \cS(A'B')$, and $\tau_{A}, \kappa_{A'} \geq 0$ with $\tau_A \gg \rho_A$, $\kappa_{A'} \gg \omega_{A'}$. Then, 
  \begin{align}
    I_{\alpha}\big(\rho_{AB} \otimes \omega_{A'B'} \| \tau_{A} \otimes \kappa_{A'} \big) &= I_{\alpha}(\rho_{AB}\|\tau_A) + I_{\alpha}(\omega_{A'B'}\|\kappa_{A'}), \quad && \textrm{for} \quad \alpha \geq 0, \qquad \textrm{and}\\ 
  \widetilde{I}_{\alpha}\big(\rho_{AB} \otimes \omega_{A'B'} \| \tau_{A} \otimes \kappa_{A'} \big) &= \widetilde{I}_{\alpha}(\rho_{AB}\|\tau_A) + \widetilde{I}_{\alpha}(\omega_{A'B'}\|\kappa_{A'}), \quad && \textrm{for} \quad \alpha \geq \frac12 . 
  \end{align}
\end{lemma}

So, in particular, we find that the mutual information quantities $I_{\alpha}^{\downarrow}(A\!:\!B)_{\rho}$ and $\widetilde{I}_{\alpha}^{\downarrow}(A\!:\!B)_{\rho}$ as well as the conditional entropies $H_{\alpha}^{\uparrow}(A|B)$ and $\widetilde{H}_{\alpha}^{\uparrow}(A|B)$ are additive for product states (for the ranges of $\alpha$ indicated in the lemma).

\subsection{Uniform Asymptotic Achievability}


The following result forms the core of our proof for the achievability of the strong converse exponent. It establishes that the mutual information $\widetilde{I}_{\alpha}(\rho_{AB}\|\tau_A)$ can be expressed as a limit of classical R\'enyi divergences. More precisely, we have the following.

\begin{proposition}
  \label{pr:universal}
  Let $\rho_{AB} \in \cS(AB)$ and $\tau_A \in \cS(A)$ such that $\tau_A \gg \rho_A$. For any $\alpha \geq \frac{1}{2}$, we have
  \begin{align}
    \label{eq:universal}
   \frac{1}{n} D_{\alpha} \big(\cP_{\tau_A^{\otimes n} \otimes\, \omega_{B^n}^n} ( \rho_{AB}^{\otimes n} ) \big\|\, \tau_A^{\otimes n} \otimes \omega_{B^n}^n \big) = \widetilde{I}_{\alpha}(\rho_{AB}\|\tau_A) + O\left(\frac{\log n}{n}\right) \,,
  \end{align}
  where $\omega_{B^n}^n$ is the universal state of Lemma~\ref{lm:universal}.
  Moreover, the implied constants are independent of $\alpha$.
\end{proposition}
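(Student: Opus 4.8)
The plan is to sandwich the quantity $D_\alpha\big(\cP_{\Sigma_n}(\rho_{AB}^{\otimes n})\,\big\|\,\Sigma_n\big)$, where I abbreviate $\Sigma_n := \tau_A^{\otimes n}\otimes\omega_{B^n}^n$ and write $d_A := |A|$, $d_B := |B|$, between $n\,\widetilde I_\alpha(\rho_{AB}\|\tau_A)$ plus and minus a term of order $\log n$ whose constant is independent of $\alpha$; dividing by $n$ then yields the statement. The reduction that makes everything go through is the elementary observation that $\cP_{\Sigma_n}(\rho_{AB}^{\otimes n})$ commutes with $\Sigma_n$ (that is what the pinching does), so on this particular pair the Petz and the sandwiched R\'enyi divergences coincide, which lets me move back and forth between $D_\alpha$ and $\widetilde D_\alpha$ and apply Lemma~\ref{lm:pinching}. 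I also note that $\rho_{AB}\ll\rho_A\otimes\rho_B$ holds for any state, which together with $\tau_A\gg\rho_A$ and $\rho_B^{\otimes n}\ll\omega_{B^n}^n$ (the latter from Lemma~\ref{lm:universal} applied to the permutation-invariant state $\rho_B^{\otimes n}$) gives $\rho_{AB}^{\otimes n}\ll\Sigma_n$, so all divergences below are finite.

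For the upper bound I would argue as follows. Since $\alpha\ge\tfrac12$, the data-processing inequality for $\widetilde D_\alpha$ applied to the pinching map $\cP_{\Sigma_n}$, which fixes $\Sigma_n$, gives $\widetilde D_\alpha(\cP_{\Sigma_n}(\rho_{AB}^{\otimes n})\|\Sigma_n)\le\widetilde D_\alpha(\rho_{AB}^{\otimes n}\|\Sigma_n)$, and the left-hand side equals $D_\alpha(\cP_{\Sigma_n}(\rho_{AB}^{\otimes n})\|\Sigma_n)$ by the commutation remark. Now for any $\sigma_B\in\cS(B)$ the state $\sigma_B^{\otimes n}$ is permutation invariant, so Lemma~\ref{lm:universal} gives $\tau_A^{\otimes n}\otimes\sigma_B^{\otimes n}\le g_{n,d_B}\,\Sigma_n$. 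Using that $\widetilde D_\alpha(\rho\|\cdot)$ is antitone in the operator order for $\alpha\ge\tfrac12$, that $\widetilde D_\alpha(\rho\|c\sigma)=\widetilde D_\alpha(\rho\|\sigma)-\log c$, and that $\widetilde D_\alpha$ is additive on tensor products, I obtain $\widetilde D_\alpha(\rho_{AB}^{\otimes n}\|\Sigma_n)\le n\,\widetilde D_\alpha(\rho_{AB}\|\tau_A\otimes\sigma_B)+\log g_{n,d_B}$. Since $g_{n,d_B}$ does not depend on $\sigma_B$, I may take the infimum over $\sigma_B$ at the end and conclude $D_\alpha(\cP_{\Sigma_n}(\rho_{AB}^{\otimes n})\|\Sigma_n)\le n\,\widetilde I_\alpha(\rho_{AB}\|\tau_A)+\log g_{n,d_B}$.

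For the lower bound I would use the second inequality of Lemma~\ref{lm:pinching} with $\rho\to\rho_{AB}^{\otimes n}$ and $\sigma\to\Sigma_n$, which (again using the commutation remark on the left) gives $D_\alpha(\cP_{\Sigma_n}(\rho_{AB}^{\otimes n})\|\Sigma_n)\ge\widetilde D_\alpha(\rho_{AB}^{\otimes n}\|\Sigma_n)-c_\alpha\log|\spec(\Sigma_n)|$ with $c_\alpha\le 2$. Since $\omega_{B^n}^n$ is itself an admissible choice in the infimum defining $\widetilde I_\alpha(\rho_{AB}^{\otimes n}\|\tau_A^{\otimes n})$, we have $\widetilde D_\alpha(\rho_{AB}^{\otimes n}\|\Sigma_n)\ge\widetilde I_\alpha(\rho_{AB}^{\otimes n}\|\tau_A^{\otimes n})$, and iterating the additivity Lemma~\ref{lm:add-new}, which holds for $\alpha\ge\tfrac12$, this equals $n\,\widetilde I_\alpha(\rho_{AB}\|\tau_A)$. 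Hence $D_\alpha(\cP_{\Sigma_n}(\rho_{AB}^{\otimes n})\|\Sigma_n)\ge n\,\widetilde I_\alpha(\rho_{AB}\|\tau_A)-c_\alpha\log|\spec(\Sigma_n)|$. Combining the two bounds and dividing by $n$ finishes the proof, because $\log g_{n,d_B}\le(d_B^2-1)\log(n+1)$ and, using $|\spec(\Sigma_n)|\le|\spec(\tau_A^{\otimes n})|\cdot|\spec(\omega_{B^n}^n)|\le(n+1)^{d_A-1}(n+1)^{d_B-1}$, also $c_\alpha\log|\spec(\Sigma_n)|\le 2(d_A+d_B-2)\log(n+1)$; thus both error terms are $O(\log n/n)$ with $\alpha$-independent constants, and the convergence is automatically uniform.

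I expect the main obstacle to be purely a matter of keeping the right hypotheses in sight rather than any hard estimate: the three ingredients I rely on\,---\,data processing for $\widetilde D_\alpha$, antitonicity of $\widetilde D_\alpha(\rho\|\cdot)$ in the second argument, and additivity of $\widetilde I_\alpha$\,---\,must all be invoked on the whole range $\alpha\ge\tfrac12$, which is exactly where that hypothesis is used (and why the statement cannot extend below $\tfrac12$). The one genuinely structural point is the order of operations in the upper bound: one must fix the \emph{single, $\alpha$-independent} universal pinching $\cP_{\Sigma_n}$ first and only afterwards optimize over $\sigma_B$, which is precisely what upgrades ordinary asymptotic achievability to the ``uniform'' version that the strong-converse argument in Section~\ref{sc:sc} will need.
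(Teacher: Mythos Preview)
Your argument is correct and is essentially the paper's own proof: both use the data-processing inequality for $\widetilde D_\alpha$ under the pinching plus the universal-state domination $\sigma_B^{\otimes n}\le g_{n,d_B}\,\omega_{B^n}^n$ for the upper bound, and the second inequality of Lemma~\ref{lm:pinching} together with the additivity of $\widetilde I_\alpha$ (Lemma~\ref{lm:add-new}) for the lower bound, with the polynomial control of $g_{n,d_B}$ and $|\spec(\Sigma_n)|$ giving the uniform $O(\log n/n)$ error. If anything your write-up is slightly more careful than the paper's: you make explicit the identification $D_\alpha=\widetilde D_\alpha$ on the commuting pair $(\cP_{\Sigma_n}(\rho^{\otimes n}),\Sigma_n)$ that bridges the statement (phrased for $D_\alpha$) and the proof (run with $\widetilde D_\alpha$), and you keep the factor $c_\alpha\le 2$ from Lemma~\ref{lm:pinching} that is needed for $\alpha>2$ to get a truly $\alpha$-independent constant.
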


\begin{proof}
  For any $\sigma_B \in \cS(B)$, employing the data-processing inequality we find
\begin{align}
  \widetilde{D}_{\alpha} \big(\cP_{\tau_A^{\otimes n} \otimes\, \omega_{B^n}^n} ( \rho_{AB}^{\otimes n} ) \,\big\|\, \tau_A^{\otimes n} \otimes \omega_{B^n}^n \big) &\leq \widetilde{D}_{\alpha} \big( \rho_{AB}^{\otimes n} \,\big\|\, \tau_A^{\otimes n} \otimes \omega_{B^n}^n \big) \\
  &\leq \widetilde{D}_{\alpha} \big( \rho_{AB}^{\otimes n} \,\big\|\, \tau_A^{\otimes n} \otimes \sigma_B^{\otimes n} \big) + \log g_{n,d} ,
\end{align}
where we used~\cite[Prop.~5]{lennert13} in the last step and set $d = \max\{|A|,|B|\}$ for later convenience.
Then,
\begin{align}
    \frac{1}{n} \widetilde{D}_{\alpha} \big(\cP_{\tau_A^{\otimes n} \otimes\, \omega_{B^n}^n} ( \rho_{AB}^{\otimes n} ) \,\big\|\, \tau_A^{\otimes n} \otimes \omega_{B^n}^n \big) &\leq \min_{\sigma_B \in \cS(B)} \widetilde{D}_{\alpha} \big( \rho_{AB} \,\big\|\, \tau_A \otimes \sigma_B \big) + \frac{\log g_{n,d}}{n} \\
    &= \widetilde{I}_{\alpha}(\rho_{AB}\|\tau_A) + \frac{\log g_{n,d}}{n}  .
\end{align}

The upper bound then follows from the fact that $g_{n,d}$ grows polynomially in $n$. For the lower bound, we first invoke Lemma~\ref{lm:pinching}, which yields
\begin{align}
  \widetilde{D}_{\alpha} \big(\cP_{\tau_A^{\otimes n} \otimes\, \omega_{B^n}^n} ( \rho_{AB}^{\otimes n} ) \,\big\|\, \tau_A^{\otimes n} \otimes \omega_{B^n}^n \big) &\geq \widetilde{D}_{\alpha}\big(\rho_{AB}^{\otimes n}\,\big\|\, \tau_A^{\otimes n} \otimes \omega_{B^n}^n \big) - \log 
  \big( v_{\tau_A^{\otimes n} \otimes\, \omega_{B^n}^n} \big) \\
  &\geq \widetilde{I}_{\alpha}\big(\rho_{AB}^{\otimes n} \,\big\|\, \tau_{A}^{\otimes n} \big) - \log \big(v_{\tau_A^{\otimes n} \otimes\, \omega_{B^n}^n} \big) .
\end{align}
where we use the shorthand notation $v_{\sigma} = |\spec(\sigma)|$.
Next, we recall that Lemma~\ref{lm:add-new} establishes the additivity of $\widetilde{I}_{\alpha}$, in particular $\widetilde{I}_{\alpha}(\rho_{AB}^{\otimes n} \,\|\, \tau_{A}^{\otimes n} ) = n \widetilde{I}_{\alpha}(\rho_{AB}\|\tau_A)$. Thus, we have
\begin{align}
  \frac{1}{n} \widetilde{D}_{\alpha} \big(\cP_{\tau_A^{\otimes n} \otimes\, \omega_{B^n}^n} ( \rho_{AB}^{\otimes n} ) \,\big\|\, \tau_A^{\otimes n} \otimes \omega_{B^n}^n \big) &\geq \widetilde{I}_{\alpha}(\rho_{AB}\|\tau_A) + \frac{\log \big(v_{\tau_A^{\otimes n} \otimes\, \omega_{B^n}^n}\big)}{n} .
\end{align}
Finally, note that 
  $v_{\tau_A^{\otimes n} \otimes\, \omega_{B^n}^n} \leq v_{\tau_A^{\otimes n}}  v_{\omega_{B^n}} \leq (n+1)^{2 (d - 1)}$
 to conclude the proof.
\end{proof}

It is important that the correction terms in the above derivation are of the order $o\big(n^{-\frac12}\big)$. This allows for the following corollary.

\begin{corollary}
  \label{corr:2nd-order}
  For any $t \in \mathbb{R}$, we have
  \begin{align}
  \lim_{n \to \infty} \left\{ \frac{t}{\sqrt{n}} \bigg( D_{1+\frac{t}{\sqrt{n}}} \Big(\cP_{\tau_A^{\otimes n} \otimes\, \omega_{B^n}^n} ( \rho_{AB}^{\otimes n} ) \,\Big\|\, \tau_A^{\otimes n} \otimes \omega_{B^n}^n \Big) - n I(\rho_{AB}\|\tau_A) \bigg) \right\} = \frac{t^2}{2 \log e} V(\rho_{AB}\| \tau_A )  \,.
  \end{align}
\end{corollary}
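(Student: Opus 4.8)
The plan is to combine Proposition~\ref{pr:universal} with a Taylor expansion of the R\'enyi divergence $\alpha \mapsto D_\alpha$ around $\alpha = 1$, taking care that the error terms in Proposition~\ref{pr:universal} are controlled uniformly enough to survive the $\sqrt n$ scaling. Write $\alpha_n = 1 + t/\sqrt n$ and abbreviate $\rho^{(n)} := \cP_{\tau_A^{\otimes n} \otimes\, \omega_{B^n}^n}(\rho_{AB}^{\otimes n})$ and $\sigma^{(n)} := \tau_A^{\otimes n} \otimes \omega_{B^n}^n$. Proposition~\ref{pr:universal} gives
\begin{align}
  \frac1n D_{\alpha_n}\big(\rho^{(n)}\big\|\sigma^{(n)}\big) = \widetilde I_{\alpha_n}(\rho_{AB}\|\tau_A) + O\!\left(\frac{\log n}{n}\right),
\end{align}
with the $O(\log n / n)$ term uniform in $\alpha$. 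Hence
\begin{align}
  \frac{t}{\sqrt n}\bigg( D_{\alpha_n}\big(\rho^{(n)}\big\|\sigma^{(n)}\big) - n I(\rho_{AB}\|\tau_A)\bigg)
  = t\sqrt n\,\Big( \widetilde I_{\alpha_n}(\rho_{AB}\|\tau_A) - I(\rho_{AB}\|\tau_A)\Big) + O\!\left(\frac{\log n}{\sqrt n}\right),
\end{align}
and the error vanishes. So the corollary reduces to showing $\sqrt n\,(\widetilde I_{\alpha_n} - I) \to \tfrac t2 V(\rho_{AB}\|\tau_A)$, i.e.\ to the differentiability statement
\begin{align}
  \frac{\partial}{\partial\alpha}\,\widetilde I_{\alpha}(\rho_{AB}\|\tau_A)\bigg|_{\alpha = 1} = \frac12 V(\rho_{AB}\|\tau_A).
\end{align}

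The second step establishes this derivative formula. Since $\widetilde I_\alpha(\rho_{AB}\|\tau_A) = \min_{\sigma_B} \widetilde D_\alpha(\rho_{AB}\|\tau_A\otimes\sigma_B)$ and the minimizer at $\alpha = 1$ is $\rho_B$ (the relative entropy being minimized by the marginal), I would use the envelope theorem: the derivative of the value function equals the partial derivative of the objective at the optimizer, i.e.\ $\partial_\alpha \widetilde D_\alpha(\rho_{AB}\|\tau_A\otimes\rho_B)|_{\alpha=1}$. By the fact quoted in the excerpt (\cite[Prop.~35]{lin14}), $\partial_\alpha \widetilde D_\alpha(\rho\|\sigma)|_{\alpha=1} = \tfrac12 V(\rho\|\sigma)$, so this equals $\tfrac12 V(\rho_{AB}\|\tau_A\otimes\rho_B) = \tfrac12 V(\rho_{AB}\|\tau_A)$ by definition. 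To make the envelope argument rigorous I would instead sandwich: on one side, $\widetilde I_{\alpha_n} \le \widetilde D_{\alpha_n}(\rho_{AB}\|\tau_A\otimes\rho_B)$ for all $n$, giving the correct one-sided limit directly from Lin–Tomamichel's Taylor expansion; on the other side, lower-bound $\widetilde I_{\alpha_n}$ using the optimal $\sigma_B^*(\alpha_n)$ (for $\widetilde D_\alpha$) together with continuity of this optimizer in $\alpha$ near $\alpha = 1$ — since $\sigma_B^*(\alpha_n) \to \rho_B$, the divergence $\widetilde D_{\alpha_n}(\rho_{AB}\|\tau_A\otimes\sigma_B^*(\alpha_n))$ has the same first-order expansion. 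Alternatively, one can avoid optimizer continuity entirely by noting $\widetilde I_\alpha$ is a pointwise infimum of functions each differentiable at $\alpha=1$ with derivative $\ge$ something, hence concave-type bounds pin down the one-sided derivatives; but the cleanest route is the explicit $\sigma_B^*$.

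The main obstacle is the uniformity issue in the first step combined with the regularity of $\alpha \mapsto \widetilde I_\alpha$ near $\alpha=1$: I must know not just that $\widetilde I_{\alpha_n}\to I$ but that the convergence is $O(1/\sqrt n)$ with the right constant, and the $O(\log n/n)$ term from Proposition~\ref{pr:universal} must genuinely be $o(1/\sqrt n)$ and uniform over the shrinking window $\alpha \in [1, 1+t/\sqrt n]$ — this is exactly the point flagged in the remark after the proposition, so it is available. The one genuinely delicate analytic point is justifying that $\widetilde D_\alpha(\rho_{AB}\|\tau_A\otimes\sigma_B)$ is jointly smooth enough in $(\alpha,\sigma_B)$ near $(1,\rho_B)$ for the optimizer to move continuously and for the second-order Taylor expansion to hold with a remainder uniform in the relevant range; this follows from the smooth (indeed analytic) dependence of the operator functions involved, given that $\tau_A \gg \rho_A$ guarantees no support issues, but it should be stated carefully. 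Everything else is routine bookkeeping.
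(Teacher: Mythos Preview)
Your overall strategy matches the paper's exactly: invoke Proposition~\ref{pr:universal} with its uniform $O(\log n/n)$ remainder to reduce the claim to
\[
\sqrt{n}\,\big(\widetilde I_{1+t/\sqrt n}(\rho_{AB}\|\tau_A)-I(\rho_{AB}\|\tau_A)\big)\ \longrightarrow\ \tfrac{t}{2}\,V(\rho_{AB}\|\tau_A),
\]
i.e.\ to the derivative formula $\partial_\alpha \widetilde I_\alpha|_{\alpha=1}=\tfrac12 V(\rho_{AB}\|\tau_A)$, and then substitute. The paper packages this derivative formula as Proposition~\ref{pr:diff} and simply cites it.

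Where you and the paper differ is in how that derivative is obtained. Your upper bound (plug in $\sigma_B=\rho_B$ and use $\partial_\alpha\widetilde D_\alpha(\rho\|\sigma)|_{\alpha=1}=\tfrac12 V(\rho\|\sigma)$) is exactly the paper's. For the matching lower bound, however, the paper does \emph{not} track the optimizer $\sigma_B^*(\alpha)$; instead it applies the duality relation of Lemma~\ref{lm:duality},
\[
\widetilde I_{1+h}(\rho_{AB}\|\tau_A)-I(\rho_{AB}\|\tau_A)=I(\rho_{AC}\|\tau_A^{-1})-\widetilde I_{1-f(h)}(\rho_{AC}\|\tau_A^{-1}),\qquad f(h)=\tfrac{h}{1+2h},
\]
which converts the needed $\liminf$ into another one-sided bound of the same easy type (plug in the marginal $\rho_C$), and then notes $V(\rho_{AC}\|\tau_A^{-1}\otimes\rho_C)=V(\rho_{AB}\|\tau_A\otimes\rho_B)$. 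This sidesteps precisely the ``genuinely delicate analytic point'' you flag\,---\,the joint smoothness of $(\alpha,\sigma_B)\mapsto\widetilde D_\alpha(\rho_{AB}\|\tau_A\otimes\sigma_B)$ and continuity of the minimizer\,---\,at the cost of needing the duality lemma. Your envelope/optimizer-continuity route is also viable (and the paper in fact develops the required regularity of $\sigma_B^*(\alpha)$ in Appendix~\ref{app:diff} to prove differentiability away from $\alpha=1$), but the duality trick is shorter and avoids the second-order bookkeeping you anticipate.
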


\begin{proof}
   According to Proposition~\ref{pr:diff} below, the Taylor expansion of $\widetilde{I}_{\alpha}(\rho_{AB}\|\tau_A)$ for $\alpha$ close to $1$ is
   \begin{align}
     \widetilde{I}_{\alpha}(\rho_{AB}\|\tau_A) = D(\rho_{AB}\|\tau_A \otimes \rho_B) 
     + \frac{(\alpha-1)}{2 \log e} V(\rho_{AB}\|\tau_A \otimes \rho_B) + o\left( (\alpha-1) \right) .
   \end{align}
   Substituting this into~\eqref{eq:universal} with $\alpha = 1 + \frac{t}{\sqrt{n}}$ yields the desired limit.
\end{proof}

In particular this establishes that the function $\alpha \mapsto \widetilde{I}_{\alpha}(\rho_{AB}\|\tau_A)$ is a pointwise limit of a sequence of classical R\'enyi divergences and the convergence is uniform in $\alpha$. From this we deduce that the function inherits continuity and monotonicity from the classical analgue.
\begin{corollary}
  \label{corr:point-wise}
  The function $\alpha \mapsto \widetilde{I}_{\alpha}(\rho_{AB}\|\tau_A)$ is continuous and monotonically increasing. Moreover, the function $t \mapsto t \widetilde{I}_{1+t}(\rho_{AB}\|\tau_A)$ is continuous and convex.
\end{corollary}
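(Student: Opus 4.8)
\textit{Proof plan for Corollary~\ref{corr:point-wise}.} The strategy is to transfer both properties from the approximating sequence supplied by Proposition~\ref{pr:universal}. Write $f_n(\alpha) := \tfrac1n D_{\alpha}\big(\cP_{\tau_A^{\otimes n}\otimes\,\omega_{B^n}^n}(\rho_{AB}^{\otimes n})\,\big\|\,\tau_A^{\otimes n}\otimes\omega_{B^n}^n\big)$, so that Proposition~\ref{pr:universal} states $f_n(\alpha)\to\widetilde{I}_{\alpha}(\rho_{AB}\|\tau_A)$ with the convergence \emph{uniform} in $\alpha\geq\tfrac12$. The key observation is that each $f_n$ is, up to the prefactor $\tfrac1n$, an entirely \emph{classical} R\'enyi divergence: the pinched operator commutes with $\tau_A^{\otimes n}\otimes\omega_{B^n}^n$, so writing the two commuting operators as probability vectors $P$ and $Q$ in a common eigenbasis gives $n f_n(\alpha)=\tfrac{1}{\alpha-1}\log\sum_i P(i)^{\alpha}Q(i)^{1-\alpha}=:\tfrac{1}{\alpha-1}g_n(\alpha)$. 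Here $P\ll Q$, since $\omega_{B^n}^n$ has full support on $\cH_{B^n}$, $\tau_A\gg\rho_A$, and pinching by an operator $M$ preserves the relation $\ll M$; hence $g_n$ is finite and indeed real-analytic on $(0,\infty)$, and $f_n(\alpha)$ is finite for every $\alpha\geq\tfrac12$.

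Next I would record the two elementary facts I need about these classical objects. By H\"older's inequality applied to the exponents, $\alpha\mapsto g_n(\alpha)=\log\sum_i P(i)^{\alpha}Q(i)^{1-\alpha}$ is convex on $(0,\infty)$; moreover $g_n(1)=\log\sum_i P(i)=0$. Convexity of $g_n$ together with $g_n(1)=0$ immediately yields that the chord slope $f_n(\alpha)=\tfrac1n\,\tfrac{g_n(\alpha)-g_n(1)}{\alpha-1}$ is continuous (real-analyticity of $g_n$ removes the apparent singularity at $\alpha=1$) and monotonically non-decreasing in $\alpha$, and that $t\mapsto t\,f_n(1+t)=\tfrac1n\,g_n(1+t)$ is convex in $t$.

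It then only remains to pass to the limit $n\to\infty$. A pointwise limit of convex functions is convex, so $t\mapsto t\,\widetilde{I}_{1+t}(\rho_{AB}\|\tau_A)$ is convex; a pointwise limit of non-decreasing functions is non-decreasing, so $\alpha\mapsto\widetilde{I}_{\alpha}(\rho_{AB}\|\tau_A)$ is monotonically increasing; and, crucially, since a \emph{uniform} limit of continuous functions is continuous, Proposition~\ref{pr:universal} gives continuity of $\alpha\mapsto\widetilde{I}_{\alpha}(\rho_{AB}\|\tau_A)$ on the whole range $\alpha\geq\tfrac12$, including across $\alpha=1$. (Continuity of the convex map $t\mapsto t\,\widetilde{I}_{1+t}(\rho_{AB}\|\tau_A)$ on the interior of its domain is in any case automatic from convexity, and continuity at $\alpha=1$ alternatively follows from the Taylor expansion of Proposition~\ref{pr:diff}.)

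I do not expect any serious obstacle: the substance is entirely packaged in Proposition~\ref{pr:universal}, together with two textbook facts about classical R\'enyi divergence. The two points that require a moment's care are (i) invoking the \emph{uniformity} of the convergence, rather than mere pointwise convergence, for the continuity assertion, since a pointwise limit of continuous functions may fail to be continuous; and (ii) the absolute-continuity bookkeeping $\cP_{\tau_A^{\otimes n}\otimes\,\omega_{B^n}^n}(\rho_{AB}^{\otimes n})\ll\tau_A^{\otimes n}\otimes\omega_{B^n}^n$ noted above, which ensures the classical divergences $f_n(\alpha)$ are finite for $\alpha>1$ so that convexity of $g_n$ may be applied unconditionally.
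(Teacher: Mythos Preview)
Your proposal is correct and is precisely the argument the paper has in mind: the paper states the corollary as an immediate consequence of the fact that $\alpha\mapsto\widetilde{I}_{\alpha}(\rho_{AB}\|\tau_A)$ is a \emph{uniform} limit of classical R\'enyi divergences (Proposition~\ref{pr:universal}), and you have correctly spelled out the textbook facts (convexity of $g_n$ via H\"older, monotonicity of chord slopes, preservation of convexity/monotonicity under pointwise limits, and continuity under uniform limits) that make this deduction go through. Your attention to the support condition $\cP_{\tau_A^{\otimes n}\otimes\,\omega_{B^n}^n}(\rho_{AB}^{\otimes n})\ll\tau_A^{\otimes n}\otimes\omega_{B^n}^n$ is a nice touch the paper leaves implicit.
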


Clearly these results immediately specialize to $\widetilde{I}_{\alpha}^{\downarrow}(A\!:\!B)_{\rho}$ and $\widetilde{H}_{\alpha}(A|B)_{\rho}$.

\subsection{Differentiability in $\alpha$}

The argument used to derive Corollary~\ref{corr:point-wise} does not suffice to establish differentiability of the sandwiched R\'enyi mutual information.

\begin{proposition}
  \label{pr:diff}
  Let $\rho_{AB} \in \cS(AB)$ and $\tau_A \in \cS(A)$ such that $\tau_A \gg \rho_A$. Then, the function $\alpha \mapsto \widetilde{I}_{\alpha}(\rho_{AB}\|\tau_A)$ is continuously differentiable for $\alpha \geq \frac12$ with
  \begin{align}
      \frac{\mathrm{d}}{\mathrm{d} \alpha} \widetilde{I}_{\alpha}(\rho_{AB}\|\tau_A) = \frac{\partial}{\partial \alpha} \widetilde{D}_{\alpha}(\rho_{AB}\|\tau_A \otimes \sigma_B) \Big|_{\sigma_B = \widetilde{\sigma}_B^*(\alpha)} ,
  \end{align}
  where $\widetilde{\sigma}_B^*(\alpha)$ is the optimizer in Lemma~\ref{pr:fixed-point-lemma}.
  In particular,
 $\frac{\mathrm{d} }{\mathrm{d}  \alpha} \widetilde{I}_{\alpha}(\rho_{AB}\|\tau_A) \big|_{\alpha = 1} = \frac{1}{2 \log (e)} V(\rho_{AB} \| \tau_A)$.
\end{proposition}

Let us remark that continuity of $\alpha \mapsto \widetilde{I}_{\alpha}(\rho_{AB}\|\tau_A)$ also follows  from the fact that $\alpha \mapsto \widetilde{D}_{\alpha}(\rho_{AB}\|\tau_A \otimes \sigma_B)$ is continuous and the duality relation in Lemma~\ref{lm:duality}. However, due to the optimization over $\sigma_B$ involved in the definition of $\widetilde{I}_{\alpha}(\rho_{AB}\|\tau_A)$, it is not at all clear that the function is differentiable. We show this proposition in Appendices~\ref{app:diff} and~\ref{app:diff2}.

\section{Problem Definition and Operational Quantities}
\label{sc:prob}

We define a more general hypothesis testing problem that allows us to treat both problems discussed in the introduction together.

Let $\rho_{AB} \in \cS(AB)$ be a bipartite quantum state on systems $A$ and $B$ and let $\tau_A \in \cS(A)$ be a state on system $A$. Throughout this paper we assume that $\rho_A \ll \tau_A$. We are interested in the following \emph{composite hypothesis testing} problem:
\begin{align}
  \textrm{null hypothesis:} & \qquad \textrm{state is}\  \rho_{AB} \\
  \textrm{alternative hypothesis:} & \qquad \textrm{state is}\ \tau_A \otimes \sigma_B, \textrm{ for some state } 
  \sigma_B \in \cS(B).
\end{align}

We consider arbitrary bipartite hypothesis tests, given by an operator $0 \leq Q_{AB} \leq \id$ on $AB$ and define the \emph{type-I error} and \emph{type-II error}, respectively, as follows:
\begin{align}
  \alpha(Q_{AB}; \rho_{AB}) &:= \tr\big[ ( \id - Q_{AB}) \rho_{AB} \big], \quad \qquad \textrm{and}\\
  \beta(Q_{AB}; \tau_A) &:= \max_{\sigma_B \in \cS(B)} 
  \tr\big\{ Q_{AB} (\tau_A \otimes \sigma_B) \big\} .
\end{align}
It is convenient to define the quantity $\hat{\alpha}(\mu; \rho_{AB}, \tau_A)$ as the minimum type-I error when the type-II error is below $\mu$, i.e.\ we consider the following optimization problem:
\begin{align}
  \hat{\alpha}(\mu; \rho_{AB} \| \tau_A) := 
\min_{0 \leq Q_{AB} \leq \id} \Big\{ \alpha(Q_{AB}; \rho_{AB}) \,\Big|\, \beta(Q_{AB}; \tau_A) \leq \mu \Big\} 
\end{align}
and note that this quantity can trivially be bounded as
\begin{align}
  \hat{\alpha}(\mu; \rho_{AB} \| \tau_A) &\geq \max_{\sigma_B \in \cS(B)} \min_{0 \leq Q_{AB} \leq \id \atop \tr [Q_{AB} (\tau_A \otimes \sigma_B)] \leq \mu}  \tr \big[ (\id_{AB} - Q_{AB}) \rho_{AB} \big]  \\
  &\geq  \min_{0 \leq Q_{AB} \leq \id \atop \tr [Q_{AB} (\tau_A \otimes \tau_B)] \leq \mu}  \tr \big[ (\id_{AB}-Q_{AB}) \rho_{AB} \big] \\
  &=: \hat{\alpha}(\mu; \rho_{AB} \| \tau_A \otimes \tau_B) \,,
   \label{eq:hatbound2}
\end{align}
for any $\tau_B \in \cS(B)$. The quantity $\hat{\alpha}(\mu; \rho_{AB} \| \tau_A \otimes \tau_B)$ describes the corresponding minimal error probability for binary hypothesis testing between $\rho_{AB}$ and $\tau_A \otimes \tau_B$, which is well understood.
On the other hand, the quantity $\hat{\alpha}(\mu; \rho_{AB}\|\tau_A)$ is the object of our study here. More precisely,
for any fixed $n \in \mathbb{N}$, we consider the  
following $n$-fold extension of this composite hypothesis testing problem:
\begin{align}
  \textrm{null hypothesis:} & \qquad \textrm{state is}\  \rho_{AB}^{\otimes n} \\
  \textrm{alternative hypothesis:} & \qquad \textrm{state is}\ \tau_A^{\otimes n} \otimes \sigma_{B^n}, \textrm{ for some state } \sigma_{B^n} \in \cS(B^n).
\end{align}
Here, it is important to note that $\sigma_{B^n}$ is an arbitrary state in $\cS(B^n)$, and not restricted to product or permutation invariant states.
We are interested in the asymptotic behavior of $\hat{\alpha}\big(\mu_n; \rho_{AB}^{\otimes n} \big\| \tau_{A}^{\otimes n}\big)$ for suitably chosen sequences $\{\mu_n \}_n$ for large $n$.


\section{Hoeffding Bound}
\label{sc:hoeffding}

Our first result considers the case where the error of the second kind goes to zero exponentially with a rate below the mutual information $I(\rho_{AB}\|\tau_A)$. In this case, we find that the error of the first kind converges to zero exponentially fast, and the exponent is determined by the generalized R\'enyi mutual information, $I_{\alpha}(\rho_{AB}\|\tau_A)$, for $\alpha < 1$.

\begin{theorem}\label{th:hoeffding}
  Let $\rho_{AB} \in \cS(AB)$ and $\tau_{A} \in \cS(A)$. Then, for any $R > 0$, we have
  \begin{equation}
  \label{eq:hoeffding-thm}
    \lim_{n \to \infty} \left\{ - \frac{1}{n} \log \hat{\alpha}\Big(\!\exp(-n R); \rho_{AB}^{\otimes n} \Big\| \tau_A^{\otimes n}\Big) \right\} = \sup_{s \in (0, 1)} \left\{ \frac{1-s}{s} \big( I_{s}(\rho_{AB}\| \tau_A) - R \big) \right\}.
  \end{equation}
\end{theorem}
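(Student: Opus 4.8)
The plan is to prove the two inequalities (achievability and optimality) separately, reducing the composite quantum problem to a simpler one via pinching and then invoking the classical/commutative quantum Hoeffding bound together with Sibson-type identities.

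\paragraph*{Achievability (lower bound on the exponent).}
First I would use the pinching trick to reduce to a test that lives in a commuting algebra. Fix a candidate state $\sigma_B \in \cS(B)$; by the bound~\eqref{eq:hatbound2} we have $\hat{\alpha}\big(\exp(-nR);\rho_{AB}^{\otimes n}\|\tau_A^{\otimes n}\big) \geq \hat{\alpha}_{\mathrm{simple}}\big(\exp(-nR); \rho_{AB}^{\otimes n}, \tau_A^{\otimes n}\otimes\sigma_B^{\otimes n}\big)$, the ordinary (non-composite) Hoeffding quantity for null $\rho_{AB}^{\otimes n}$ versus alternate $\tau_A^{\otimes n}\otimes\sigma_B^{\otimes n}$. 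The standard quantum Hoeffding bound (Hayashi~\cite{hayashi07}, Nagaoka~\cite{nagaoka06}) gives a lower bound on this quantity with exponent $\sup_{s\in(0,1)} \frac{1-s}{s}\big(D_s(\rho_{AB}\|\tau_A\otimes\sigma_B) - R\big)$. Since this holds for every $\sigma_B$, I take the infimum over $\sigma_B$ inside — but one must be careful: we want $\inf_{\sigma_B}$ of the \emph{error}, which after optimizing the exponent becomes $\sup_s \frac{1-s}{s}\big(\inf_{\sigma_B} D_s(\rho_{AB}\|\tau_A\otimes\sigma_B) - R\big) = \sup_s \frac{1-s}{s}\big(I_s(\rho_{AB}\|\tau_A) - R\big)$, using the definition~\eqref{eq:ialpha}. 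The order of the $\inf_{\sigma_B}$ and the $\sup_s$ needs justification; this is clean because for fixed $s$ we may choose the single best $\sigma_B$, and then take $\sup_s$ of that family. Actually the cleanest route: for each $s$, pick $\sigma_B = \sigma_B^*(s)$ (the Sibson minimizer from Lemma~\ref{lm:add-old}), giving exponent $\frac{1-s}{s}(I_s(\rho_{AB}\|\tau_A) - R)$, and optimize over $s$.

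\paragraph*{Optimality (upper bound on the exponent).}
Here the composite nature of the alternate hypothesis is a help, not a hindrance: the alternate set is larger (any $\sigma_{B^n}$), so $\beta_n$ is larger, so the constraint is harder and $\hat{\alpha}_n$ can only be larger — meaning the exponent can only be smaller. Concretely, I would restrict the maximization over $\sigma_{B^n}$ to permutation-invariant states, or even just use the universal state $\omega_{B^n}^n$ of Lemma~\ref{lm:universal}: any test $Q_{A^nB^n}$ can be symmetrized (averaging over $U_{B^n}(\pi)$) without changing $\alpha_n$ (since $\rho_{AB}^{\otimes n}$ is permutation invariant) and without increasing $\beta_n$ over the permutation-invariant alternates. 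Then by~\eqref{eq:universal-state}, $\tau_A^{\otimes n}\otimes\sigma_{B^n} \leq g_{n,|B|}\,\tau_A^{\otimes n}\otimes\omega_{B^n}^n$ for any permutation-invariant $\sigma_{B^n}$, so the composite problem is sandwiched, up to the polynomial factor $g_{n,|B|}$, between the simple problem with alternate $\tau_A^{\otimes n}\otimes\omega_{B^n}^n$. Since $g_{n,|B|}$ is polynomial in $n$, it contributes nothing to the exponent. Now I invoke the converse part of the quantum Hoeffding bound applied to null $\rho_{AB}^{\otimes n}$ versus $\tau_A^{\otimes n}\otimes\omega_{B^n}^n$: the exponent is at most $\sup_s \frac{1-s}{s}\big(\frac1n D_s(\rho_{AB}^{\otimes n}\|\tau_A^{\otimes n}\otimes\omega_{B^n}^n) - R\big)$, and I need $\liminf_n \frac1n D_s(\rho_{AB}^{\otimes n}\|\tau_A^{\otimes n}\otimes\omega_{B^n}^n) \leq I_s(\rho_{AB}\|\tau_A)$ for $s\in(0,1)$. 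This follows from $D_s$ being subadditive-like under the universal-state comparison: $D_s(\rho_{AB}^{\otimes n}\|\tau_A^{\otimes n}\otimes\omega_{B^n}^n) \leq D_s(\rho_{AB}^{\otimes n}\|\tau_A^{\otimes n}\otimes(\sigma_B^*)^{\otimes n}) + \log g_{n,|B|} = n I_s(\rho_{AB}\|\tau_A) + O(\log n)$, using $\omega_{B^n}^n \geq g_{n,|B|}^{-1}(\sigma_B^*)^{\otimes n}$ (from~\eqref{eq:universal-state} applied to the permutation-invariant state $(\sigma_B^*)^{\otimes n}$) and monotonicity properties of $D_s$ in the second argument, plus additivity of $I_s$ from Lemma~\ref{lm:add-old}.

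\paragraph*{Assembly and the main obstacle.}
Combining the two bounds, and noting that the $O(\log n)$ corrections vanish in the rate, gives the claimed equality. The supremum is over $s\in(0,1)$; one should also check the cases where the right-hand side is zero or where $R \geq I(\rho_{AB}\|\tau_A) = \lim_{s\to1} I_s$, in which case the optimal exponent is $0$ and both sides agree (the $\sup$ is a supremum of nonpositive quantities tending to $0$). I expect the main obstacle to be the \emph{interchange of the optimization over $\sigma_B$ with the Hoeffding optimization over $s$} in the achievability direction — more precisely, making sure that using a single $\sigma_B$ adapted to each $s$ really recovers the full $\sup_s \frac{1-s}{s}(I_s - R)$ rather than something weaker like $\sup_s \frac{1-s}{s}(\inf_{s'} \text{stuff})$. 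The resolution is that for the lower bound on the error exponent we are free to fix any $s$ first and then any $\sigma_B$, so choosing $\sigma_B = \sigma_B^*(s)$ is legitimate and immediately yields the term $\frac{1-s}{s}(I_s(\rho_{AB}\|\tau_A)-R)$; taking $\sup_s$ then completes it. A secondary technical point is confirming that the standard (non-composite) quantum Hoeffding bound of~\cite{hayashi07,nagaoka06} applies verbatim with a general positive operator $\tau_A\otimes\sigma_B$ (not necessarily a state) in the second slot — but since $\tau_A,\sigma_B$ are states here this is not an issue, and in the optimality direction $\tau_A^{\otimes n}\otimes\omega_{B^n}^n$ is also a genuine state.
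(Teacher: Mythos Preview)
You have the two directions reversed, and this is not merely a labeling slip\,---\,the arguments as written do not establish what they claim.

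In your ``Achievability'' paragraph you invoke~\eqref{eq:hatbound2}, which gives $\hat{\alpha}(\textrm{composite}) \geq \hat{\alpha}(\textrm{simple},\sigma_B)$. This is a \emph{lower} bound on the composite type-I error, hence an \emph{upper} bound on the composite exponent: it is an optimality tool, not an achievability tool. Knowing that the simple exponent is at least $\sup_s \frac{1-s}{s}(D_s-R)$ tells you nothing about the composite exponent from below, because the inequality between the two $\hat{\alpha}$'s points the wrong way. The paper uses exactly this reduction\,---\,fix $\sigma_B$, apply~\eqref{eq:hatbound2}, then Nagaoka's converse~\cite{nagaoka06}\,---\,but for \emph{optimality}, and the $\inf_{\sigma_B}/\sup_s$ interchange you flag as the main obstacle is resolved there by the minimax theorem of Proposition~\ref{pr:minimax} (your ``pick $\sigma_B^*(s)$ for each $s$'' only gives the trivial direction $\sup_s\inf_{\sigma_B}\leq\inf_{\sigma_B}\sup_s$).

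Conversely, your ``Optimality'' paragraph deploys the universal state $\omega_{B^n}^n$ and the domination~\eqref{eq:universal-state}. After symmetrization this yields $\beta(\textrm{composite})\leq g_{n,|B|}\,\beta(\textrm{simple},\omega)$, hence $\hat{\alpha}(\textrm{composite},\mu)\leq\hat{\alpha}(\textrm{simple},\omega;\mu/g_{n,|B|})$: a \emph{lower} bound on the composite exponent, i.e.\ achievability. Trying to extract a converse from the same sandwich forces you to run the Hoeffding converse against the non-i.i.d.\ alternate $\tau_A^{\otimes n}\otimes\omega_{B^n}^n$, which is neither standard nor necessary. The paper instead uses the universal state only for achievability: it builds the Neyman--Pearson test $\{\rho_{AB}^{\otimes n}\geq e^{\lambda_n}\tau_A^{\otimes n}\otimes\omega_{B^n}^n\}$, bounds $\beta$ via~\eqref{eq:universal-state} and Audenaert's inequality~\eqref{eq:audenaert}, bounds $\alpha$ the same way, and closes with the additivity of $I_s$ from Lemma~\ref{lm:add-old}.
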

Note that if $R \geq I(\rho_{AB}\|\tau_A)$ the right hand side of~\eqref{eq:hoeffding-thm} evaluates to zero, revealing that in this case the error of the first kind will decay slower than exponential in $n$. (In fact, we will see in Theorems~\ref{th:han} and~\ref{th:second}, respectively, that the error of the first kind will converge to $1$ exponentially fast in $n$ if $R > I(\rho_{AB}\|\tau_A)$, and that it will converge to $\frac12$ if $R = I(\rho_{AB}\|\tau_A)$.)
Furthermore, if $R < I_0(\rho_{AB}\|\tau_A)$, we find that the right hand side of~\eqref{eq:hoeffding-thm} diverges to $+\infty$ indicating that the decay is faster than exponential in $n$. This includes the case where the error of the first kind is identically zero for sufficiently large $n$, e.g.\ in zero-error channel coding.

We also consider the following two special cases:
\begin{corollary}
    Let $\rho_{AB} \in \cS(AB)$. Then, for any $R > 0$, we have
  \begin{align}
    \lim_{n \to \infty} \left\{ - \frac{1}{n} \log \hat{\alpha}\Big(\!\exp(-n R); \rho_{AB}^{\otimes n} \Big\| \rho_A^{\otimes n}\Big) \right\} &= \sup_{s \in (0, 1)} \left\{ \frac{1-s}{s} \big( I_{s}(A\!:\!B^{\downarrow})_{\rho} - R \big) \right\}, \\
        \lim_{n \to \infty} \left\{ - \frac{1}{n} \log \hat{\alpha}\Big(\!\exp(-n R); \rho_{AB}^{\otimes n} \Big\| \pi_A^{\otimes n}\Big) \right\} &= \sup_{s \in (0, 1)} \left\{ \frac{1-s}{s} \big( \log |A| - H_{s}^{\uparrow}(A|B)_{\rho} - R \big) \right\} .
  \end{align}
\end{corollary}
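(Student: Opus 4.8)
The statement is an immediate corollary of Theorem~\ref{th:hoeffding}, obtained by instantiating the free auxiliary state $\tau_A$ in two ways, so the plan is essentially to substitute definitions into~\eqref{eq:hoeffding-thm}. For the first identity I would set $\tau_A = \rho_A$: the left-hand side of~\eqref{eq:hoeffding-thm} then coincides verbatim with the left-hand side of the first claim, and on the right-hand side $I_s(\rho_{AB}\|\rho_A)$ is, by the definition given in Section~\ref{sc:mui}, precisely the R\'enyi mutual information $I_s(A\!:\!B)_\rho$. Nothing further is required.

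For the second identity I would set $\tau_A = \pi_A = \id_A/|A|$ and note first that $\pi_A^{\otimes n} = \pi_{A^n}$, so the left-hand side of~\eqref{eq:hoeffding-thm} with this choice is exactly the quantity appearing in the corollary. It then remains to rewrite the objective on the right-hand side in terms of the conditional R\'enyi entropy. For this I would invoke the elementary scaling property of the Petz--R\'enyi relative entropy, $D_s(\rho\|c\,\sigma) = D_s(\rho\|\sigma) - \log c$ for every $c > 0$, applied with $c = 1/|A|$ to each term $D_s(\rho_{AB}\|\pi_A \otimes \sigma_B)$ in the infimum defining $I_s(\rho_{AB}\|\pi_A)$. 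Since $\log|A|$ does not depend on $\sigma_B$, it pulls out of the optimization, yielding $I_s(\rho_{AB}\|\pi_A) = \log|A| + I_s(\rho_{AB}\|\id_A) = \log|A| - H_s^{\downarrow}(A|B)_\rho$ by the definition of the (optimized) conditional R\'enyi entropy. Substituting this into~\eqref{eq:hoeffding-thm} converts the bracketed term $\tfrac{1-s}{s}\big(I_s(\rho_{AB}\|\pi_A) - R\big)$ into $\tfrac{1-s}{s}\big(\log|A| - H_s^{\downarrow}(A|B)_\rho - R\big)$, which is the claimed expression.

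I do not expect any genuine obstacle here: once Theorem~\ref{th:hoeffding} is in hand, the entire content of the corollary is the remark that the free parameter $\tau_A$ of that theorem may be taken to be $\rho_A$ or $\pi_A$, together with the trivial behaviour of the relative entropy under rescaling of its second argument. The only mild care needed is to keep the scaling constant outside the infimum over $\sigma_B$, which is legitimate precisely because that constant is independent of $\sigma_B$.
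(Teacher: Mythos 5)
Your proposal is correct and is exactly the (implicit) argument of the paper, which states this corollary as an immediate specialization of Theorem~\ref{th:hoeffding} to $\tau_A=\rho_A$ and $\tau_A=\pi_A$; the scaling identity $D_s(\rho\|c\,\sigma)=D_s(\rho\|\sigma)-\log c$ applied inside the infimum over $\sigma_B$ reproduces the paper's own relation $I_s(\rho_{AB}\|\pi_A)=\log|A|-H_s^{\uparrow}(A|B)_\rho$. The only discrepancy is the arrow decoration ($\downarrow$ versus $\uparrow$) in the corollary's statement, which is a notational inconsistency of the paper itself rather than a gap in your argument.
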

This corollary establishes an operational interpretation of the R\'enyi mutual information, $I_{\alpha}^{\downarrow}(A:B)_{\rho}$, as well as the R\'enyi conditional entropies, $H_{\alpha}^{\uparrow}(A|B)_{\rho}$, for $0 \leq \alpha \leq 1$.

In the following, we treat the proof of the achievability and optimality in Theorem~\ref{th:hoeffding} separately.

\subsection{Proof of Achievability}

The achievability is shown using a quantum Neyman-Pearson test comparing $\rho_{AB}^{\otimes n}$ with $\tau_A^{\otimes n} \otimes \omega_{B^n}^n$, where $\omega_{B^n}^n$ is the universal state defined in Lemma~\ref{lm:universal}. The analysis follows the lines of the proof of the direct part of the quantum Hoeffding bound given in~\cite[Sec.~5.5]{audenaert07-3} and further hinges on the additivity of the mutual information expressed in Lemma~\ref{lm:add-old}.

In the following we show that, for any $0 < R < I(\rho_{AB}\|\tau_A)$,
\begin{align}
\liminf_{n \to \infty} \left\{ - \frac{1}{n} \log \hat{\alpha}\Big(\!\exp(-n R); \rho_{AB}^{\otimes n} \Big\| \tau_A^{\otimes n}\Big) \right\} \geq \sup_{s \in (0, 1)} \left\{ \frac{1-s}{s} \big( I_{s}(\rho_{AB}\| \tau_A) - R \big) \right\}. \label{eq:hoeffding-thm2}
\end{align}
Note that the expression on the right hand side of~\eqref{eq:hoeffding-thm2} is zero if $R \geq I(\rho_{AB}\|\tau_A)$ and the inequality thus holds trivially for that case. 

\begin{proof}[Proof of Achievability in Theorem~\ref{th:hoeffding}]
  Let us fix any $s \in (0,1)$ for the moment. Moreover, let $\{\lambda_n \}_{n \in \mathbb{N}}$ be real numbers to be specified later and define the sequence of tests
  \begin{align}
    Q_{A^nB^n}^n := \left\{ \rho_{AB}^{\otimes n} \geq \exp(\lambda_n) \, \tau_{A}^{\otimes n} \otimes \omega_{B^n}^n \right\} ,
  \end{align}
  where  $\omega_{B^n}^n$ is the universal state introduced in Lemma~\ref{lm:universal}. First, note that the natural representation of $S_n$ decomposes as $U_{A^nB^n}(\pi) = U_{A^n}(\pi) \otimes U_{B^n}(\pi)$ and that $Q_{A^nB^n}^n$ is permutation invariant 
as a direct consequence of Eq.~\eqref{eq:proj-unitary}. Thus, we have
  \begin{align}
    \beta(Q_{A^nB^n}^n; \tau_A^{\otimes n}) &= \max_{\sigma_{B^n} \in \cS(B^n)} \tr \left[ Q_{A^nB^n}^n\, (\tau_A^{\otimes n} \otimes \sigma_{B^n}) \right] \label{eq:ah-first} \\
    &= \max_{\sigma_{B^n} \in \cS(B^n)} \frac{1}{|S_n|} \sum_{\pi \in S_n} \tr \left[ U_{A^nB^n}(\pi)\, Q_{A^nB^n}^n\, (\tau_A^{\otimes n} \otimes \sigma_{B^n}) U_{A^nB^n}(\pi)^{\dagger} \right] \\
    &= \max_{\sigma_{B^n} \in \cS(B^n)} \tr \left[ Q_{A^nB^n}^n\, (\tau_A^{\otimes n} \otimes \widetilde{\sigma}_{B^n}) \right] ,
  \end{align}
  where $\widetilde{\sigma}_{B^n} := \frac{1}{|S_n|} \sum_{\pi \in S_n} U_{B^n}(\pi) \sigma_{B^n} U_{B^n}(\pi)^{\dagger}$ is permutation invariant. Lemma~\ref{lm:universal} then yields
  \begin{align}
    \beta(Q_{A^nB^n}^n; \tau_A^{\otimes n}) \leq g_{n,d} \tr \left[ Q_{A^nB^n}^n (\tau_A^{\otimes n} \otimes \omega_{B^n}^n ) \right] ,
  \end{align}
  where we set $d = |B|$.
  Furthermore, using Audenaert \emph{et al.}'s inequality~\eqref{eq:audenaert} we find
    \begin{align}
    \beta(Q_{A^nB^n}^n; \tau_A^{\otimes n}) &\leq g_{n,d} \exp(-s \lambda_n) \tr \left[ \big( \rho_{AB}^{\otimes n} \big)^{s} \big(\tau_A^{\otimes n} \otimes \omega_{B^n}^n \big)^{1-s} \right]  \\
    &= g_{n,d} \exp\left(-s \lambda_n\right) \exp\Big( -(1-s) D_{s} \big(\rho_{AB}^{\otimes n} \,\big\| \tau_A^{\otimes n} \otimes \omega_{B^n}^n \big) \Big) \label{eq:ah-mid} \\
    &\leq g_{n,d} \exp(-s \lambda_n) \exp \Big( - (1 - s)\, I_{s}\big(\rho_{AB}^{\otimes n} \big\| \tau_A^{\otimes n}\big) \Big) 
  \end{align}
for any $\lambda_n$.
  Observing that $I_{s}\big(\rho_{AB}^{\otimes n} \big\| \tau_A^{\otimes n}\big) = n\,I_{s}(\rho_{AB} \| \tau_A)$ due to the additivity of the mutual information established in Lemma~\ref{lm:add-old}, we find that the choice
  \begin{align}
    \lambda_n = \frac{1}{s} \Big( \log g_{n,d} + n \big( R - (1-s) I_{s}(\rho_{AB} \| \tau_A) \big)\Big) \label{eq:the-lambda-n}
  \end{align}
  achieves the desired bound $\beta(Q_{A^nB^n}^n; \tau_A^{\otimes n}) \leq \exp(-nR)$.
  
  On the other hand, again using~\eqref{eq:audenaert} and Lemma~\ref{lm:add-old}, we find
  \begin{align}
    \alpha(Q_{A^nB^n}^n; \rho_{AB}^{\otimes n}) &= \tr \Big[ \big\{ \rho_{AB}^{\otimes n} < \exp(\lambda_n) \, \tau_{A}^{\otimes n} \otimes \omega_{B^n}^n \big\} \rho_{AB}^{\otimes n} \Big] \\
    &\leq \exp\big( (1-s) \lambda_n \big) \tr \Big[ \big( \rho_{AB}^{\otimes n} \big)^s \big(\tau_{A}^{\otimes n} \otimes \omega_{B^n}^n \big)^{1-s} \Big] \\
    &\leq \exp \Big( (1-s) \lambda_n - n (1-s) I_s(\rho_{AB}\|\tau_A) \Big) 
  \end{align}
  Substituting~\eqref{eq:the-lambda-n} for $\lambda_n$, we thus find
  \begin{align}
    \hat{\alpha}\Big(\!\exp(-n R); \rho_{AB}^{\otimes n} \Big\| \tau_A^{\otimes n} \!\Big) \leq \alpha(Q_{A^nB^n}^n; \rho_{AB}^{\otimes n}) \leq \exp \!\bigg( \frac{1-s}{s} \Big( \log g_{n,d} + n R -
    n I_s(\rho_{AB}\|\tau_A) \! \Big) \!\! \bigg) .
  \end{align}
  Since $\log g_{n,d} = O(\log n)$, taking the limit $n \to \infty$ yields
    \begin{align}
     \liminf_{n \to \infty} \left\{ - \frac{1}{n} \log \hat{\alpha}\Big(\exp(-n R); \rho_{AB}^{\otimes n}\Big\| \tau_A^{\otimes n}\Big) \right\} \geq \frac{1-s}{s} \left( I_{s}(\rho_{AB}, \tau_A) - R \right) .
  \end{align}
  
  Finally, since this derivation holds for all $s \in (0,1)$, we established the direct part.
\end{proof}

\subsection{Proof of Optimality}

To show optimality, we will directly employ the converse of the quantum Hoeffding bound established in~\cite{nagaoka06} together with a minimax theorem derived in Appendix~\ref{app:minimax}.

Recall that it remains to show that for any $R > 0$, we have
\begin{align}
\limsup_{n \to \infty} \left\{ - \frac{1}{n} \log \hat{\alpha}\Big(\!\exp(-n R); \rho_{AB}^{\otimes n} \Big\| \tau_A^{\otimes n}\Big) \right\} \leq \sup_{s \in (0, 1)} \left\{ \frac{1-s}{s} \big( I_{s}(\rho_{AB}\| \tau_A) - R \big) \right\}. \label{eq:hoeffding-thm3}
\end{align}

\begin{proof}[Proof of Optimality  in Theorem~\ref{th:hoeffding}]
We fix $\sigma_B \in \cS(B)$ and note that
\begin{align}
  \hat{\alpha}\Big(\exp(-n R); \rho_{AB}^{\otimes n} \Big\| \tau_A^{\otimes n}\Big) \geq \hat{\alpha}\Big(\exp(-n R); \rho_{AB}^{\otimes n} \Big\| \tau_A^{\otimes n} \otimes \sigma_B^{\otimes n} \Big)
\end{align} 
  At this point we can apply the converse of the quantum Hoeffding bound~\cite{nagaoka06} to the expression on the right-hand side, which yields
  \begin{align}
    &\limsup_{n \to \infty} \left\{ - \frac{1}{n} \log \hat{\alpha}\Big(\exp(-n R); \rho_{AB}^{\otimes n}\Big\| \tau_A^{\otimes n}\Big) \right\} \label{eq:lhs1} \\
    &\qquad \qquad \leq \limsup_{n \to \infty} \left\{ - \frac{1}{n} \log \hat{\alpha}\Big(\exp(-n R); \rho_{AB}^{\otimes n} \Big\| \big(\tau_A \otimes \sigma_B)^{\otimes n}\Big) \right\}\\
    &\qquad \qquad \leq \sup_{s \in (0,1)} \left\{ \frac{1-s}{s} \left( D_{s}(\rho_{AB} \| \tau_A \otimes \sigma_B) - R \right) \right\} .
  \end{align}
  Since this holds for all $\sigma_B \in \cS(B)$, the limit in~\eqref{eq:lhs1} is in fact upper bounded by
  \begin{align}
    \inf_{\sigma_B \in \cS(B)} \sup_{s \in (0,1)} \left\{ \frac{1-s}{s} \left( D_{s}(\rho_{AB} \| \tau_A \otimes \sigma_B) - R \right) \right\} \label{eq:minimax-missing}
  \end{align}
  The infimum can be restricted to the convex subset $\cS_{\rho}(B)$ of $\cS(B)$ of operators in the support of $\rho_B$. 
  It remains to observe that the functional
  \begin{align}
    f: (s, \sigma_B) \mapsto (1-s) D_s(\rho_{AB}\|\tau_A \otimes \sigma_B) = -\log \tr \big[\rho_{AB}^s (\tau_A \otimes \sigma_B)^{1-s}\big]
  \end{align}
  is convex in $\sigma_B$ for $s \in (0, 1)$ due to the operator concavity of $t \mapsto t^{1-s}$
  and concave in $s$ as was shown in~\cite[Lem.~2.1]{audenaert12}. 
Since $\cS_{\rho}(B)$ is convex, $f$ is also $\frac{1}{2}$-convexlike. Finally, note that $f(s, \sigma_B)$ is finite for $s \in[0,1]$.
   Hence, the minimax theorem (Proposition~\ref{pr:minimax}) in Appendix~\ref{app:minimax} applies to~\eqref{eq:minimax-missing}. This, together with the definition of $I_{\alpha}$ in~\eqref{eq:ialpha}, concludes the proof.
\end{proof}


\section{Strong Converse Exponent}
\label{sc:sc}


Our second result considers the case where the error of the second kind goes to zero exponentially with a rate exceeding the mutual information $I(\rho_{AB}\|\tau_A)$. In this case, we find that the error of the first kind converges to $1$ exponentially fast, and the exponent is determined by the sandwiched R\'enyi mutual information, $\widetilde{I}_{s}(\rho_{AB}\|\tau_A)$, with $s > 1$.

\begin{theorem}\label{th:han}
  Let $\rho_{AB} \in \cS(AB)$ and $\tau_{A} \in \cS(A)$ with $\tau_A \gg \rho_A$. Then, for any $0 < R < \widetilde{I}_{\infty}(\rho_{AB}\|\tau_A)$, we have
  \begin{equation}
    \lim_{n \to \infty} \left\{ - \frac{1}{n} \log \bigg( 1 - \hat{\alpha}\Big(\!\exp(-n R); \rho_{AB}^{\otimes n} \Big\| \tau_A^{\otimes n}\Big) \bigg) \right\} = \sup_{s > 1} \left\{ \frac{s-1}{s} \left( R - \widetilde{I}_{s}(\rho_{AB}\| \tau_A) \right) \right\}. \label{eq:scexp}
  \end{equation}
\end{theorem}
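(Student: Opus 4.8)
## Proof Proposal for Theorem~\ref{th:han}

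The plan is to prove the two inequalities (achievability and optimality) separately, mirroring the structure of the Hoeffding bound proof but now using the sandwiched R\'enyi quantities and the pinching/universal-state machinery developed in Section~\ref{sc:mui}.

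For the \textbf{achievability} direction---i.e.\ that the left-hand side of~\eqref{eq:scexp} is \emph{at least} the right-hand side---I would argue as follows. Fix $s>1$. The key reduction is that, since any test $Q_{A^nB^n}$ with small type-II error must have $1-\hat\alpha$ close to its value on a permutation-invariant alternative, we can pass to the pinched state. Concretely, I would apply the converse bound for the \emph{non-composite} strong converse exponent (the analogue of the Han--Nagaoka--Ogawa strong converse) with the single alternative hypothesis $\tau_A^{\otimes n}\otimes\omega_{B^n}^n$, where $\omega_{B^n}^n$ is the universal state of Lemma~\ref{lm:universal}. Since $\beta(Q;\tau_A^{\otimes n})\ge g_{n,d}^{-1}\tr[Q(\tau_A^{\otimes n}\otimes\omega_{B^n}^n)]$ by~\eqref{eq:universal-state} applied to the symmetrized $\widetilde\sigma_{B^n}$ (exactly as in the achievability proof of Theorem~\ref{th:hoeffding}), a bound of the form $\beta\le\exp(-nR)$ forces $\tr[Q(\tau_A^{\otimes n}\otimes\omega_{B^n}^n)]\le g_{n,d}\exp(-nR)$. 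I would then invoke the strong converse exponent formula for testing $\rho_{AB}^{\otimes n}$ against the commuting (hence effectively classical, after pinching) operator $\tau_A^{\otimes n}\otimes\omega_{B^n}^n$; since this operator commutes with $\rho_{AB}^{\otimes n}$ is \emph{not} guaranteed, so here I would instead use Corollary~\ref{cor:achieve}/Proposition~\ref{pr:universal}: pinch $\rho_{AB}^{\otimes n}$ by $\cP_{\tau_A^{\otimes n}\otimes\omega_{B^n}^n}$, which only helps the tester (data processing on the type-I side, as $\omega$ commutes with itself), and apply the \emph{classical} strong converse exponent for correlated states to the pinched distributions. By Proposition~\ref{pr:universal}, $\frac1n D_\alpha(\cP(\rho_{AB}^{\otimes n})\|\tau_A^{\otimes n}\otimes\omega_{B^n}^n)\to\widetilde I_\alpha(\rho_{AB}\|\tau_A)$ uniformly, and the classical strong converse exponent theorem then yields $\liminf -\frac1n\log(1-\hat\alpha)\ge\frac{s-1}{s}(R-\widetilde I_s(\rho_{AB}\|\tau_A))$. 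Taking the supremum over $s>1$ completes this direction.

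For the \textbf{optimality} direction---that the left-hand side is \emph{at most} the right-hand side---I would exhibit, for each $s>1$, an explicit test whose type-II error is $\le\exp(-nR)$ and whose type-I error goes to $1$ at the prescribed rate. Fix the optimal $\sigma_B^*=\sigma_B^*(s)$ in the definition of $\widetilde I_s(\rho_{AB}\|\tau_A)$ (this exists by the discussion after~\eqref{eq:ialpha}, and additivity from Lemma~\ref{lm:add-new} gives $\widetilde I_s(\rho_{AB}^{\otimes n}\|\tau_A^{\otimes n})=n\widetilde I_s(\rho_{AB}\|\tau_A)$). I would take the Neyman--Pearson test between $\rho_{AB}^{\otimes n}$ and $(\tau_A\otimes\sigma_B^*)^{\otimes n}$ at an appropriate threshold, for which the non-composite strong converse exponent result of~\cite{mosonyi14} (or~\cite{mosonyiogawa13}) gives that its type-I error tends to $1$ with exponent $\sup_{s>1}\frac{s-1}{s}(R-\widetilde D_s(\rho_{AB}\|\tau_A\otimes\sigma_B^*))=\sup_{s>1}\frac{s-1}{s}(R-\widetilde I_s(\rho_{AB}\|\tau_A))$, using that $\widetilde D_s(\rho_{AB}\|\tau_A\otimes\sigma_B^*)=\widetilde I_s(\rho_{AB}\|\tau_A)$ at the optimizing $s$ together with continuity/convexity (Corollary~\ref{corr:point-wise}) to handle the interchange of suprema. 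The only subtlety is that the type-II error of this test against the \emph{composite} alternative equals $\max_{\sigma_{B^n}}\tr[Q(\tau_A^{\otimes n}\otimes\sigma_{B^n})]$, which is potentially larger than $\tr[Q(\tau_A\otimes\sigma_B^*)^{\otimes n}]$; to control it I would again symmetrize $Q$ (it can be chosen permutation invariant) so that the maximizing $\sigma_{B^n}$ may be taken permutation invariant, bound it above by $g_{n,d}\,\omega_{B^n}^n$ via Lemma~\ref{lm:universal}, and absorb the polynomial factor $g_{n,d}=\exp(O(\log n))$ into the threshold---this shifts $R$ by $o(1)$ and hence does not affect the exponent.

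The \textbf{main obstacle} is the interplay between the composite nature of the alternative and the non-commutativity inherent in the sandwiched R\'enyi divergence: unlike the Hoeffding case, we cannot simply apply a known converse for a single alternative, because the optimal $\sigma_B^*(s)$ depends on $s$, and the strong converse exponent involves an optimization over $s>1$ that must be interchanged with the minimization over $\sigma_B$. The resolution is precisely Proposition~\ref{pr:universal} (uniform asymptotic achievability of $\widetilde I_\alpha$ via pinching against the universal state), which reduces the quantum composite problem to a \emph{classical} hypothesis testing problem with correlated distributions, where the G\"artner--Ellis theorem of large deviations applies; combined with the differentiability and convexity established in Proposition~\ref{pr:diff} and Corollary~\ref{corr:point-wise}, this lets the $\sup_{s>1}$ and $\inf_{\sigma_B}$ be swapped legitimately. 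A secondary technical point, needed to make the classical large-deviation input applicable, is that the logarithmic moment generating function of the pinched problem converges (after normalization by $n$) to a differentiable limit, which is exactly what Proposition~\ref{pr:universal} and Proposition~\ref{pr:diff} together supply.
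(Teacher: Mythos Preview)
You have the two directions swapped relative to the paper, and more importantly the arguments you assign to each direction do not match what each direction actually requires.

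For the inequality $\liminf_n -\tfrac1n\log(1-\hat\alpha)\ge \sup_{s>1}\tfrac{s-1}{s}(R-\widetilde I_s)$ (what the paper calls \emph{optimality}), the paper's proof is one line: fix any $\sigma_B\in\cS(B)$, observe that the composite $\hat\alpha$ dominates the simple $\hat\alpha$ against $(\tau_A\otimes\sigma_B)^{\otimes n}$, apply the i.i.d.\ strong-converse lower bound of Mosonyi--Ogawa, and then take the supremum over $\sigma_B$ (which commutes with $\sup_{s>1}$). Your proposed route through the universal state and pinching is unnecessary here, and the pinching step you describe (``pinch $\rho_{AB}^{\otimes n}$ \ldots\ which only helps the tester'') is not a converse argument: pinching the null hypothesis changes the problem and does not bound $\tr[Q\rho^{\otimes n}]$ for an \emph{arbitrary} test $Q$.

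The genuine difficulty is the other inequality, $\limsup_n -\tfrac1n\log(1-\hat\alpha)\le \sup_{s>1}\tfrac{s-1}{s}(R-\widetilde I_s)$ (the paper's \emph{achievability}), where you must \emph{exhibit} a test. Your plan of using the Neyman--Pearson test for $\rho_{AB}^{\otimes n}$ versus $(\tau_A\otimes\sigma_B^*)^{\otimes n}$ has a gap in the type-II control: after symmetrizing, you correctly get $\beta(Q)\le g_{n,d}\tr[Q(\tau_A^{\otimes n}\otimes\omega_{B^n}^n)]$, but you have no handle on $\tr[Q(\tau_A^{\otimes n}\otimes\omega_{B^n}^n)]$\,---\,the universal-state inequality $(\sigma_B^*)^{\otimes n}\le g_{n,d}\,\omega_{B^n}^n$ goes the \emph{wrong way} for this purpose (it lower-bounds, not upper-bounds, $\tr[Q(\tau_A^{\otimes n}\otimes\omega_{B^n}^n)]$ in terms of the quantity you control). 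The paper's fix is to build the test \emph{directly} against the universal alternative: take $Q_{A^nB^n}^n=\{\cP_{\tau_A^{\otimes n}\otimes\omega_{B^n}^n}(\rho_{AB}^{\otimes n})\ge e^{\mu_n}\tau_A^{\otimes n}\otimes\omega_{B^n}^n\}$, so that the type-II bound against $\omega_{B^n}^n$ is automatic, and then analyze $1-\alpha$ via the classical random variables $P_n,Q_n$ of Lemma~\ref{lm:test}. This is precisely where Proposition~\ref{pr:universal}, Proposition~\ref{pr:diff}, and the G\"artner--Ellis theorem enter\,---\,tools you correctly flag in your final paragraph, but which belong to this (test-construction) direction, not to the converse. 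The interchange of $\sup_{s}$ and $\inf_{\sigma_B}$ you worry about is handled not by choosing $\sigma_B^*(s)$ but by a minimax argument on the auxiliary function $f(s,t)=t\big(R-s\widetilde I_{1+t}+(s-1)\widetilde I_s\big)$, cf.\ Proposition~\ref{pr:minimax}.
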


Note that the range of $R$ for which this result is valid can be extended, but here we restrict our attention to the range where $R$ is sufficiently close to $I(\rho_{AB}\|\tau_A)$. For general $R > 0$ we refer the reader to a recent analysis of the strong converse exponent by Mosonyi and Ogawa~\cite{mosonyi14} that can be adapted to cover the situation at hand here. (See also~\cite{nakagawa93} for an earlier discussion of this issue in classical hypothesis testing.)


Again, we are interested in the following two special cases:
\begin{corollary}
    Let $\rho_{AB} \in \cS(AB)$. Then, for suitable $R > 0$ (see Theorem~\ref{th:han}), we have
  \begin{align}
    \lim_{n \to \infty} \left\{ - \frac{1}{n} \log \bigg( 1 - \hat{\alpha}\Big(\!\exp(-n R); \rho_{AB}^{\otimes n} \Big\| \rho_A^{\otimes n}\Big) \bigg) \right\} &= \sup_{s > 1} \left\{ \frac{s-1}{s} \big( R - \widetilde{I}_{s}(A\!:\!B^{\downarrow})_{\rho}  \big) \right\}, \\
        \lim_{n \to \infty} \left\{ - \frac{1}{n} \log \bigg( 1 - \hat{\alpha}\Big(\!\exp(-n R); \rho_{AB}^{\otimes n} \Big\| \pi_A^{\otimes n}\Big) \bigg) \right\} &= \sup_{s > 1} \left\{ \frac{s-1}{s} \big( R - \log |A| + \widetilde{H}_{s}^{\uparrow}(A|B)_{\rho} \big) \right\} .
  \end{align}
\end{corollary}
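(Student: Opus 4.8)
This corollary is exactly the $\tau_A := \rho_A$ and $\tau_A := \pi_A$ specializations of Theorem~\ref{th:han}: for the first identity one uses the definition $\widetilde{I}_s(A\!:\!B)_\rho = \widetilde{I}_s(\rho_{AB}\|\rho_A)$, and for the second one rewrites $\widetilde{I}_s(\rho_{AB}\|\pi_A) = \log|A| - \widetilde{H}_s^{\uparrow}(A|B)_\rho$ (which follows from $\widetilde{D}_s(\rho\|c\,\sigma) = \widetilde{D}_s(\rho\|\sigma) - \log c$ with $c = 1/|A|$, cf.\ the definitions in Section~\ref{sc:mui}), turning $R - \widetilde{I}_s(\rho_{AB}\|\pi_A)$ into $R - \log|A| + \widetilde{H}_s^{\uparrow}(A|B)_\rho$. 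So the whole content is Theorem~\ref{th:han}, which I sketch now. The plan is to prove the two inequalities in~\eqref{eq:scexp} separately, in each case reducing the quantum composite test to a classical binary test through the universal state $\omega_{B^n}^n$ and the pinching $\cP_{\sigma_n}$, where I abbreviate $\sigma_n := \tau_A^{\otimes n} \otimes \omega_{B^n}^n$. One may assume $R > I(\rho_{AB}\|\tau_A)$; otherwise the right-hand side of~\eqref{eq:scexp} vanishes and the statement reduces to $\hat{\alpha}_n \to 0$ (for $R < I(\rho_{AB}\|\tau_A)$, contained in Theorem~\ref{th:hoeffding}) or $\hat{\alpha}_n \to \tfrac12$ (for $R = I(\rho_{AB}\|\tau_A)$, the second-order regime of Section~\ref{sc:second}).

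\emph{Optimality ($\ge$).} Fix $s > 1$ and any test $Q_n$ with $\beta(Q_n;\tau_A^{\otimes n}) \le \exp(-nR)$. Plugging $\sigma_{B^n} = \omega_{B^n}^n$ into the maximization defining $\beta$ gives $\tr[Q_n\sigma_n] \le \exp(-nR)$ directly, with no symmetrization needed. Applying the data-processing inequality for $\widetilde{D}_s$ to the two-outcome measurement $\{Q_n, \id - Q_n\}$ and discarding one nonnegative term of the resulting binary divergence yields the standard estimate $1 - \alpha(Q_n;\rho_{AB}^{\otimes n}) = \tr[Q_n\rho_{AB}^{\otimes n}] \le \tr[Q_n\sigma_n]^{\frac{s-1}{s}}\exp\!\big(\tfrac{s-1}{s}\widetilde{D}_s(\rho_{AB}^{\otimes n}\|\sigma_n)\big)$. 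The divergence is controlled exactly as in the upper-bound half of Proposition~\ref{pr:universal}: via~\cite[Prop.~5]{lennert13}, the bound $\sigma_B^{\otimes n} \le g_{n,d}\,\omega_{B^n}^n$ for permutation-invariant $\sigma_B^{\otimes n}$ from Lemma~\ref{lm:universal}, and additivity (Lemma~\ref{lm:add-new}), giving $\widetilde{D}_s(\rho_{AB}^{\otimes n}\|\sigma_n) \le n\,\widetilde{I}_s(\rho_{AB}\|\tau_A) + \log g_{n,d}$. Substituting, taking $-\tfrac1n\log$, then the infimum over feasible $Q_n$, then $n\to\infty$, and finally the supremum over $s > 1$, proves ``$\ge$'' (the $\log g_{n,d}$ term is $O(\log n)$).

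\emph{Achievability ($\le$).} Put $P_n := \cP_{\sigma_n}(\rho_{AB}^{\otimes n})$, which commutes with $\sigma_n$, so in a joint eigenbasis we face a classical binary test between a \emph{correlated} (non-i.i.d., since $\omega_{B^n}^n$ is highly degenerate) pair of distributions. Use the Neyman--Pearson test $T_n := \{P_n \ge e^{n\gamma}\sigma_n\}$ with $\gamma$ to be chosen. Since $\rho_{AB}^{\otimes n}$ and $\sigma_n$ are permutation invariant under $U_{A^nB^n}(\pi) = U_{A^n}(\pi)\otimes U_{B^n}(\pi)$ and $\cP_{\sigma_n}$ commutes with that action, $P_n$ and hence $T_n$ (by~\eqref{eq:proj-unitary}) is permutation invariant, so the symmetrization step of the Hoeffding proof gives $\beta(T_n;\tau_A^{\otimes n}) \le g_{n,d}\,\tr[T_n\sigma_n]$, while $1 - \alpha(T_n;\rho_{AB}^{\otimes n}) = \tr[T_n\cP_{\sigma_n}(\rho_{AB}^{\otimes n})] = \tr[T_nP_n]$. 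Writing $W_n := \tfrac1n\log\tfrac{dP_n}{d\sigma_n}$, the decisive input is the convergence of normalized cumulant generating functions: Proposition~\ref{pr:universal} gives $\tfrac1n\log\Exp_{\sigma_n}[e^{nsW_n}] = \tfrac1n(s-1)D_s(P_n\|\sigma_n) \to (s-1)\widetilde{I}_s(\rho_{AB}\|\tau_A) =: \varphi(s)$ uniformly for $s \ge \tfrac12$, and $\tfrac1n\log\Exp_{P_n}[e^{ntW_n}] \to \varphi(1+t)$ for $t \ge -\tfrac12$; by Corollary~\ref{corr:point-wise} $\varphi$ is convex and by Proposition~\ref{pr:diff} differentiable with $\varphi'(1) = I(\rho_{AB}\|\tau_A)$, so the Gärtner--Ellis theorem applies to $(W_n)$ under both $\sigma_n$ and $P_n$. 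Let $\varphi^*$ be the Legendre transform of $\varphi$ and define $\gamma^*$ by $\varphi^*(\gamma^*) = R$; since $R > \varphi'(1)$ and $\varphi$ is convex, the maximizer in $\varphi^*(\gamma^*)$ exceeds $1$ (and is finite when $R < \widetilde{I}_\infty(\rho_{AB}\|\tau_A) := \lim_{s\to\infty}\widetilde{I}_s(\rho_{AB}\|\tau_A)$), Legendre duality gives $\gamma^* = \inf_{s>1}\tfrac{(s-1)\widetilde{I}_s(\rho_{AB}\|\tau_A) + R}{s}$, and elementary algebra gives $R - \gamma^* = \sup_{s>1}\tfrac{s-1}{s}\big(R - \widetilde{I}_s(\rho_{AB}\|\tau_A)\big)$. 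Taking $\gamma = \gamma^* + \delta$ for small $\delta > 0$: a Chernoff bound (the upper half of Gärtner--Ellis) under $\sigma_n$ gives $\tr[T_n\sigma_n] = \sigma_n[W_n \ge \gamma^*{+}\delta] \le \exp(-n(\varphi^*(\gamma^*{+}\delta) - o(1)))$ with $\varphi^*(\gamma^*{+}\delta) > R$, so $\beta(T_n;\tau_A^{\otimes n}) \le g_{n,d}\exp(-n(\varphi^*(\gamma^*{+}\delta) - o(1))) \le \exp(-nR)$ for all large $n$; and the Gärtner--Ellis lower bound under $P_n$---whose limiting cumulant generating function $t \mapsto \varphi(1+t)$ has rate function $\gamma \mapsto \varphi^*(\gamma) - \gamma$---gives $\tr[T_nP_n] = P_n[W_n \ge \gamma^*{+}\delta] \ge \exp(-n(\varphi^*(\gamma^*{+}\delta) - (\gamma^*{+}\delta) + o(1)))$. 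Hence $\limsup_n -\tfrac1n\log(1-\hat{\alpha}_n) \le \varphi^*(\gamma^*{+}\delta) - (\gamma^*{+}\delta)$, and sending $\delta \downarrow 0$ (continuity of $\varphi^*$) yields $\le R - \gamma^* = \sup_{s>1}\tfrac{s-1}{s}(R - \widetilde{I}_s(\rho_{AB}\|\tau_A))$, as required.

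\emph{Main obstacle.} The optimality direction is routine; the real work is the achievability, and the crux there is verifying the hypotheses of Gärtner--Ellis. Essential smoothness of the limiting cumulant generating function is exactly the content of Proposition~\ref{pr:diff}, whose proof must cope with the optimization over $\sigma_B$ buried inside $\widetilde{I}_\alpha$ (which is precisely why that differentiability is singled out and proved separately); and one needs the convergence in Proposition~\ref{pr:universal} to be \emph{uniform} in $s$, so that the $o(1)$ corrections do not corrupt the exponent at the critical parameter (hence the uniformity clause there). One must also dispatch the usual large-deviation technicalities---the cumulant generating function of $W_n$ has a jump at the origin unless $P_n$ and $\sigma_n$ have equal support, which is harmless here since $\gamma^* > I(\rho_{AB}\|\tau_A)$ lies in the interior of the relevant interval---and the degenerate regimes $R \le I(\rho_{AB}\|\tau_A)$ and $R \ge \widetilde{I}_\infty(\rho_{AB}\|\tau_A)$ each need a short, separate argument.
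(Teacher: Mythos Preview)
Your reduction of the corollary to Theorem~\ref{th:han} is correct (and you tacitly repair the $\uparrow/\downarrow$ typo in the conditional-entropy version). Your sketch of Theorem~\ref{th:han} is also sound, but it takes a genuinely different route from the paper in both directions, and the comparison is worth recording.

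For optimality, the paper fixes a product alternative $(\tau_A\otimes\sigma_B)^{\otimes n}$, invokes the Mosonyi--Ogawa strong-converse bound for \emph{simple} hypotheses, and only afterwards optimizes over $\sigma_B$. You instead plug the universal state $\omega_{B^n}^n$ into $\beta$ once, apply data-processing for $\widetilde D_s$ to the binary measurement, and bound $\widetilde D_s(\rho_{AB}^{\otimes n}\|\tau_A^{\otimes n}\otimes\omega_{B^n}^n)$ by $n\widetilde I_s+\log g_{n,d}$ via the upper-bound half of Proposition~\ref{pr:universal}. Your argument is more self-contained (no external citation) at the cost of one extra line; the endpoints coincide.

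For achievability the difference is more substantial. The paper keeps an auxiliary parameter $s$, chooses the threshold $\mu_n$ so that $\beta\le e^{-nR}$ holds \emph{exactly} for each $n$, applies G\"artner--Ellis in a second variable $t$, and is then left with a two-parameter optimization $\inf_{s>1}\sup_{t>0} f(s,t)/s$ that requires the minimax theorem of Appendix~\ref{app:minimax} and the ad hoc choice $s=t{+}1$. You bypass this entirely: working with the bare log-likelihood $W_n$, you read off its limiting cumulant generating functions under $P_n$ and $\sigma_n$ as $\varphi(1{+}t)$ and $\varphi(s)$, pick the threshold $\gamma^*$ directly from the Legendre transform, and the identity $R-\gamma^*=\sup_{s>1}\tfrac{s-1}{s}(R-\widetilde I_s)$ drops out of one line of algebra. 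This is the classical Blahut/Nakagawa--Kanaya packaging; it is cleaner and needs no minimax, while the paper's version has the minor advantage that the constraint $\beta\le e^{-nR}$ is met for every $n$ rather than eventually. One small correction to your ``main obstacle'' paragraph: your own argument uses only \emph{pointwise} convergence of the cumulant generating functions (that is all Proposition~\ref{pr:mosonyi} asks for), so the uniformity clause in Proposition~\ref{pr:universal}, while true, is not actually load-bearing in your route.
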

This corollary establishes an operational interpretation of the sandwiched R\'enyi mutual information, $\widetilde{I}_{\alpha}^{\downarrow}(A\!:\!B)_{\rho}$, as well as the sandwiched R\'enyi conditional entropy, $\widetilde{H}_{\alpha}^{\uparrow}(A|B)_{\rho}$, for $\alpha > 1$.

Before we commence with the proof, we will discuss the classical Neyman-Pearson test we use and some results from classical large deviation theory. Following this, we treat the proof of the achievability and optimality in Theorem~\ref{th:han} separately. 

\subsection{Classical Neyman-Pearson Test}

To show the direct part we employ a classical Neyman-Pearson test for the pinched state $\cP_{\tau_A^{\otimes n} \otimes\, \omega_{B^n}^n} ( \rho_{AB}^{\otimes n} )$ and the state $\tau_{A}^{\otimes n} \otimes \omega_{B^n}^n$. The idea to use a classical Neyman-Pearson test on the pinched state goes back to~\cite{hayashi02b}.
%
We start by discussing some properties of this test.

\begin{lemma}
  \label{lm:test}
  Let $\rho_{AB} \in \cS(AB)$, $\tau_{A} \in \cS(A)$, $n \in \mathbb{N}$ and $\mu_{n} \in \mathbb{R}$. Consider the test
\begin{align}
     Q_{A^nB^n}^n := \left\{ \cP_{\tau_A^{\otimes n} \otimes\, \omega_{B^n}^n} \big( \rho_{AB}^{\otimes n} \big) \geq \exp(\mu_{n}) \, \tau_{A}^{\otimes n} \otimes \omega_{B^n}^n \right\} . \label{eq:test-tilde}
\end{align}
Let $\{ |\phi_{x_n}\rangle \}_{x_n}$ be a common orthonormal eigenbasis of $ \cP_{\tau_A^{\otimes n} \otimes\, \omega_{B^n}^n} ( \rho_{AB}^{\otimes n} )$ and $\tau_A^{\otimes n} \otimes \omega_{B^n}^n$ and define the probability distributions 
 \begin{align}
   P_n(x_n) = \big\langle \phi_{x_n} \big| \cP_{\tau_A^{\otimes n} \otimes\, \omega_{B^n}^n} \big( \rho_{AB}^{\otimes n} \big)
   \big| \phi_{x_n} \big\rangle 
   , \quad \textrm{and} \quad
   Q_n(x_n) = \big\langle \phi_{x_n} \big|\tau_A^{\otimes n} \otimes \omega_{B^n}^n \big| \phi_{x_n} \big\rangle \,. \label{eq:probs2}
 \end{align}
  Then, with $X_n$ distributed according to the law $P_n$ and $X_n'$ according to the law $Q_n$, we have
  \begin{align}
    \alpha(Q_{A^nB^n}^n; \rho_{AB}^{\otimes n}) &= \Pr \left[ P_n(X_n) < \exp(\mu_n) Q_n(X_n) \right] \,, \quad \qquad \textrm{and} \label{eq:alphabound} \\
    \beta(Q_{A^nB^n}^n; \tau_A^{\otimes n}) &\leq g_{n,d} \Pr \left[ P_n(X_n') \geq \exp(\mu_n) Q_n(X_n') \right] \label{eq:betabound} ,
  \end{align}
  where $d = |B|$.
\end{lemma}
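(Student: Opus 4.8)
The plan is to set up the pinched classical test and relate its error probabilities to the quantum quantities via the pinching bound from Lemma~\ref{lm:universal}. First I would observe that the test $Q_{A^nB^n}^n$ in~\eqref{eq:test-tilde} is a projector that commutes with both $\cP_{\tau_A^{\otimes n} \otimes \omega_{B^n}^n}(\rho_{AB}^{\otimes n})$ and $\tau_A^{\otimes n} \otimes \omega_{B^n}^n$: the pinched state commutes with $\tau_A^{\otimes n} \otimes \omega_{B^n}^n$ by construction of the pinching map (since $\omega_{B^n}^n$ is permutation invariant and commutes with all permutation invariant states, and $\tau_A^{\otimes n}$ is also in the pinched algebra), so all three operators are simultaneously diagonal in the eigenbasis $\{|\phi_{x_n}\rangle\}$. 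Hence $Q_{A^nB^n}^n = \sum_{x_n : P_n(x_n) \geq \exp(\mu_n) Q_n(x_n)} |\phi_{x_n}\rangle\langle\phi_{x_n}|$, which is exactly the classical Neyman--Pearson test for distinguishing $P_n$ from $Q_n$.

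For the type-I error: $\alpha(Q_{A^nB^n}^n; \rho_{AB}^{\otimes n}) = \tr[(\id - Q_{A^nB^n}^n)\rho_{AB}^{\otimes n}]$. Since $\id - Q_{A^nB^n}^n$ is diagonal in the basis $\{|\phi_{x_n}\rangle\}$, the trace only picks up the diagonal entries of $\rho_{AB}^{\otimes n}$ in that basis, and $\langle\phi_{x_n}|\rho_{AB}^{\otimes n}|\phi_{x_n}\rangle = \langle\phi_{x_n}|\cP_{\tau_A^{\otimes n}\otimes\omega_{B^n}^n}(\rho_{AB}^{\otimes n})|\phi_{x_n}\rangle = P_n(x_n)$ because the pinching map does not change diagonal entries in an eigenbasis of the pinching operator. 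Thus $\alpha(Q_{A^nB^n}^n;\rho_{AB}^{\otimes n}) = \sum_{x_n : P_n(x_n) < \exp(\mu_n) Q_n(x_n)} P_n(x_n) = \Pr[P_n(X_n) < \exp(\mu_n) Q_n(X_n)]$, giving~\eqref{eq:alphabound}.

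For the type-II error: we need $\beta(Q_{A^nB^n}^n;\tau_A^{\otimes n}) = \max_{\sigma_{B^n}} \tr[Q_{A^nB^n}^n(\tau_A^{\otimes n}\otimes\sigma_{B^n})]$. Here I would run the same symmetrization argument as in the achievability proof of Theorem~\ref{th:hoeffding}: since $\cP_{\tau_A^{\otimes n}\otimes\omega_{B^n}^n}(\rho_{AB}^{\otimes n})$ and $\tau_A^{\otimes n}\otimes\omega_{B^n}^n$ are both permutation invariant (the pinching operator is, and pinching preserves permutation invariance of $\rho_{AB}^{\otimes n}$), the test $Q_{A^nB^n}^n$ is permutation invariant by~\eqref{eq:proj-unitary}, so one may replace $\sigma_{B^n}$ by its symmetrization $\widetilde\sigma_{B^n}$ without changing the value. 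Then Lemma~\ref{lm:universal} gives $\widetilde\sigma_{B^n} \leq g_{n,d}\,\omega_{B^n}^n$ with $d = |B|$, hence $\tr[Q_{A^nB^n}^n(\tau_A^{\otimes n}\otimes\widetilde\sigma_{B^n})] \leq g_{n,d}\tr[Q_{A^nB^n}^n(\tau_A^{\otimes n}\otimes\omega_{B^n}^n)]$. The right-hand side is again diagonal in $\{|\phi_{x_n}\rangle\}$, equalling $g_{n,d}\sum_{x_n : P_n(x_n)\geq\exp(\mu_n) Q_n(x_n)} Q_n(x_n) = g_{n,d}\Pr[P_n(X_n')\geq\exp(\mu_n) Q_n(X_n')]$, which is~\eqref{eq:betabound}.

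The main obstacle, such as it is, is bookkeeping around the commutation/permutation-invariance claims: one must check carefully that pinching by $\tau_A^{\otimes n}\otimes\omega_{B^n}^n$ maps $\rho_{AB}^{\otimes n}$ to an operator that (i) commutes with $\tau_A^{\otimes n}\otimes\omega_{B^n}^n$ (immediate from the definition of the pinching map) and (ii) remains permutation invariant (because $\rho_{AB}^{\otimes n}$ is permutation invariant and the pinching operator is too, so the pinching map commutes with conjugation by $U_{A^nB^n}(\pi)$). Everything else is the standard reduction of a commuting quantum hypothesis test to a classical one plus the universal-state bound; no genuinely hard estimate is involved at this stage, since the asymptotic large-deviation analysis is deferred to the subsequent lemmas.
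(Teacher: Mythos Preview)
Your proposal is correct and follows essentially the same route as the paper: permutation invariance of the test via~\eqref{eq:proj-unitary}, symmetrization of $\sigma_{B^n}$ and the universal-state bound from Lemma~\ref{lm:universal} for~\eqref{eq:betabound}, and reduction to the classical Neyman--Pearson test in the common eigenbasis for both error expressions. Your explicit remark that the diagonal entries of $\rho_{AB}^{\otimes n}$ coincide with those of its pinching (so that $\alpha$ can be computed directly from $P_n$) is the key observation for~\eqref{eq:alphabound}, and the paper uses exactly this.
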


\begin{proof}
  It is easy to verify that the pinched quantity is permutation invariant, and thus $Q_{A^nB^n}^n$ is permutation invariant as well.
Let us evaluate
  \begin{align}
    \beta(Q_{A^nB^n}^n; \tau_A^{\otimes n}) &= \max_{\sigma_{B^n} \in \cS(B^n)} \tr \left[ Q_{A^nB^n}^n\, (\tau_A^{\otimes n} \otimes \sigma_{B^n}) \right] \\
    &= \max_{\sigma_{B^n} \in \cS(B^n)} \frac{1}{|S_n|} \sum_{\pi \in S_n} \tr \left[ U_{A^nB^n}(\pi)\, Q_{A^nB^n}^n\, (\tau_A^{\otimes n} \otimes \sigma_{B^n}) U_{A^nB^n}(\pi)^{\dagger} \right] \\
    &= \max_{\sigma_{B^n} \in \cS(B^n)} \tr \left[ Q_{A^nB^n}^n\, (\tau_A^{\otimes n} \otimes \widetilde{\sigma}_{B^n}) \right] ,
  \end{align}
  where $\widetilde{\sigma}_{B^n} := \frac{1}{|S_n|} \sum_{\pi \in S_n} U_{B^n}(\pi) \sigma_{B^n} U_{B^n}(\pi)^{\dagger}$ is permutation invariant. Lemma~\ref{lm:universal} then yields
  \begin{align}
    \beta(Q_{A^nB^n}^n; \tau_A^{\otimes n}) &\leq g_{n,d} \tr \left[ Q_{A^nB^n}^n (\tau_A^{\otimes n} \otimes \omega_{B^n}^n ) \right] \\
    &= g_{n,d} \tr \left[ \left\{ \cP_{\tau_A^{\otimes n} \otimes\, \omega_{B^n}^n} \big( \rho_{AB}^{\otimes n} \big) \geq \exp(\mu_{n}) \, \tau_{A}^{\otimes n} \otimes \omega_{B^n}^n \right\} (\tau_A^{\otimes n} \otimes \omega_{B^n}^n ) \right] .
    \label{eq:commute}
  \end{align}
  
 The two operators $ \cP_{\tau_A^{\otimes n} \otimes\, \omega_{B^n}^n} ( \rho_{AB}^{\otimes n} )$ and $\tau_A^{\otimes n} \otimes \omega_{B^n}^n$ in~\eqref{eq:commute} commute. Let $\{ |\phi_{x_n}\rangle \}_{x_n}$ be a common orthonormal eigenbasis for these operators and define the probability distributions in~\eqref{eq:probs2} as well as the corresponding random variables.
 Then we can simplify~\eqref{eq:commute} by noting that
 \begin{align}
   \tr \left[ \left\{ \cP_{\tau_A^{\otimes n} \otimes\, \omega_{B^n}^n} \big( \rho_{AB}^{\otimes n} \big) \geq \exp(\mu_{n}) \, \tau_{A}^{\otimes n} \otimes \omega_{B^n}^n \right\} (\tau_A^{\otimes n} \otimes \omega_{B^n}^n ) \right] = \Pr \left[ P_n(X_n') \geq \exp(\mu_n) Q_n(X_n') \right] ,
 \end{align}
  which yields~\eqref{eq:betabound}.
  Finally, it is easy to verify that 
  \begin{align}
    \alpha(Q_{A^nB^n}^n; \rho_{AB}^{\otimes n}) &= \tr \Big[ \left\{ \cP_{\tau_A^{\otimes n} \otimes\, \omega_{B^n}^n} \big( \rho_{AB}^{\otimes n} \big) < \exp(\mu_n) \, \tau_{A}^{\otimes n} \otimes \omega_{B^n}^n \right\} \cP_{\tau_A^{\otimes n} \otimes\, \omega_{B^n}^n} \big( \rho_{AB}^{\otimes n} \big) \Big] \\
   &= \Pr \left[ P_n(X_n) < \exp(\mu_n) Q_n(X_n) \right] .
   \end{align}
\end{proof}

\subsection{Classical Large Deviation Theory}

Our proof will rely on a variant of the G\"artner-Ellis theorem of large deviation theory (see, e.g.,~\cite[Sec.~2 and Sec.~3.4]{dembo98} for an overview), which we recall here. Given a sequence of random variables $\{ Z_n \}_{n \in \mathbb{N}}$ we
introduce its asymptotic \emph{cumulant generating function} as
\begin{align}
   \Lambda_Z(t) &:= \lim_{n \to \infty} \left\{ \frac{1}{n} \log \big( \Exp \left[ \exp ( n t Z_n ) \right] \big) \right\} ,
   \label{eq:cum}
\end{align}
if it exists.
For our purposes it is sufficient to use the following variant of the G\"artner-Ellis theorem due to Chen~\cite[Thm.~3.6]{chen00} (see also~\cite[Lem.~B.2]{mosonyi14}).
\begin{proposition}
  \label{pr:mosonyi}
  Let us assume that $t \mapsto \Lambda_Z(t)$ as defined in~\eqref{eq:cum} exists and is differentiable in some interval $(a, b)$. Then, for any $z \in \big( \lim_{t \searrow a} \Lambda_Z'(t), \lim_{t \nearrow b} \Lambda_Z'(t) \big)$, we have
  \begin{align}
    \limsup_{n \to \infty} \left\{ - \frac{1}{n} \log \Pr [ Z_n \geq z] \right\} \leq \sup_{t \in (a, b)} \left\{ z t - \Lambda_Z(t) \right\} \,.
  \end{align}
\end{proposition}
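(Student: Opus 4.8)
The plan is to prove this lower-bound direction of the G\"artner--Ellis theorem by the standard exponential change-of-measure (tilting) argument. I first record two structural facts about $\Lambda_Z$ that are all we will need: (i) $\Lambda_Z$ is convex, being the pointwise limit of the convex functions $t\mapsto\frac1n\log\Exp[\exp(ntZ_n)]$ --- convexity of $s\mapsto\log\Exp[\exp(sZ_n)]$ follows from H\"older's inequality and is preserved under composition with $t\mapsto nt$ and division by $n>0$; and (ii) a convex function that is differentiable on an open interval is automatically continuously differentiable there, so $\Lambda_Z'$ is continuous on $(a,b)$. Writing $I(z):=\sup_{t\in(a,b)}\{zt-\Lambda_Z(t)\}$, continuity of $\Lambda_Z'$ together with the hypothesis $z\in\big(\lim_{t\searrow a}\Lambda_Z'(t),\lim_{t\nearrow b}\Lambda_Z'(t)\big)$ and the intermediate value theorem produce, for every sufficiently small $\eps>0$, a point $t_\eps\in(a,b)$ with $\Lambda_Z'(t_\eps)=z+\eps$; moreover the family $\{t_\eps\}$ stays bounded and bounded away from $a$ and $b$.

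Next I set up the tilt. Let $\psi_n(t):=\frac1n\log\Exp[\exp(ntZ_n)]$, so that $\psi_n(t)\to\Lambda_Z(t)$ for $t\in(a,b)$ and, since this limit is finite, $\Exp[\exp(nt_\eps Z_n)]<\infty$ for all large $n$. Denoting by $P_n$ the law of $Z_n$ on $\mathbb R$, define the tilted probability measure $\tilde P_n$ by $\mathrm d\tilde P_n/\mathrm dP_n(x)=\exp\big(nt_\eps x-n\psi_n(t_\eps)\big)$. A direct computation gives, for $t$ with $t_\eps+t\in(a,b)$, the tilted cumulant generating function $\frac1n\log\Exp_{\tilde P_n}[\exp(ntZ_n)]=\psi_n(t_\eps+t)-\psi_n(t_\eps)\to\Lambda_Z(t_\eps+t)-\Lambda_Z(t_\eps)=:\tilde\Lambda(t)$, which is defined on an open neighbourhood of $0$, vanishes at $0$, and is differentiable there with $\tilde\Lambda'(0)=\Lambda_Z'(t_\eps)=z+\eps$.

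The key step is that tilting forces $Z_n$ to concentrate around $z+\eps$ under $\tilde P_n$. By the elementary Chernoff bound, $\tilde P_n[Z_n\ge z+2\eps]\le\exp(-n(z+2\eps)t)\,\Exp_{\tilde P_n}[\exp(ntZ_n)]$ for small $t>0$, and $\tilde P_n[Z_n\le z]\le\exp(-nzt)\,\Exp_{\tilde P_n}[\exp(ntZ_n)]$ for small $t<0$; taking $n\to\infty$ the exponents converge to $-(z+2\eps)t+\tilde\Lambda(t)$ and $-zt+\tilde\Lambda(t)$, whose derivatives at $0$ are $-\eps<0$ and $+\eps>0$, so both probabilities decay exponentially. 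Hence $\tilde P_n\big([z,z+2\eps]\big)\to1$, in particular $\ge\frac12$ for $n$ large. Undoing the tilt on this event,
\begin{align}
  \Pr[Z_n\ge z]\ \ge\ P_n\big([z,z+2\eps]\big)\ =\ \int_{[z,z+2\eps]}\!\!\exp\big(n\psi_n(t_\eps)-nt_\eps x\big)\,\mathrm d\tilde P_n(x)\ \ge\ \exp\big(n\psi_n(t_\eps)-nt_\eps(z+2\eps)\big)\,\tilde P_n\big([z,z+2\eps]\big),
\end{align}
where for concreteness I took $t_\eps\ge0$ (the sign of $t_\eps$ merely selects the endpoint of $[z,z+2\eps]$ at which $x\mapsto\exp(-nt_\eps x)$ is minimized, changing nothing). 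Taking $-\frac1n\log(\cdot)$, using $\tilde P_n([z,z+2\eps])\ge\frac12$ and $\psi_n(t_\eps)\to\Lambda_Z(t_\eps)$, and letting $n\to\infty$ yields $\limsup_{n}\{-\frac1n\log\Pr[Z_n\ge z]\}\le t_\eps(z+2\eps)-\Lambda_Z(t_\eps)\le I(z)+2\eps|t_\eps|$; finally $\eps\to0$ with $\{t_\eps\}$ bounded gives the claim.

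The only point requiring genuine care is the concentration estimate of the third paragraph: one must know that the tilted measure $\tilde P_n$ again has a well-defined asymptotic cumulant generating function that is differentiable at $0$, which is exactly why $t_\eps$ must be chosen in the \emph{open} interval $(a,b)$ (so that $\psi_n(t_\eps\pm t)\to\Lambda_Z(t_\eps\pm t)$ for small $t$). The finiteness of the moment generating functions for large $n$ and the sign bookkeeping of $t_\eps$ are routine.
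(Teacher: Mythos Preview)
Your proof is correct: it is precisely the standard exponential-tilting (change-of-measure) argument used to establish the lower bound in the G\"artner--Ellis theorem, executed with appropriate care. The paper, however, does not prove this proposition at all; it simply records it as a known result from large deviation theory, citing Chen~\cite[Thm.~3.6]{chen00} and Mosonyi--Ogawa~\cite[Lem.~B.2]{mosonyi14}. So you are supplying a self-contained argument where the paper defers to the literature. What your approach buys is independence from external references; what the paper's approach buys is brevity, since the result is standard and not the paper's focus.

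One cosmetic point worth tightening is the sign of $t_\eps$: as you note, if $t_\eps<0$ the minimum of $x\mapsto\exp(-nt_\eps x)$ over $[z,z+2\eps]$ is attained at $x=z$ rather than $x=z+2\eps$, which in fact yields the cleaner bound $t_\eps z-\Lambda_Z(t_\eps)\le I(z)$ directly, without the $2\eps|t_\eps|$ correction. Either way the correction vanishes as $\eps\to0$ because, as you argue, the preimage under the monotone continuous $\Lambda_Z'$ of a small closed neighbourhood of $z$ is a compact subinterval of $(a,b)$, so $\{t_\eps\}$ stays bounded.
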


Finally, in order to evaluate the asymptotic cumulant generating function, we will employ the asymptotic achievability in Proposition~\ref{pr:universal}, namely the fact that
\begin{align}
  \lim_{n\to\infty} \frac{1}{n} D_{\alpha}(P_n\|Q_n) = \widetilde{I}_{\alpha}(\rho_{AB}\|\tau_A) \label{eq:universal-pnqn} .
\end{align}

\subsection{Proof of Achievability}

We are now ready to present the proof of achievability, namely we show that
  \begin{equation}
    \limsup_{n \to \infty} \left\{ - \frac{1}{n} \log \bigg( 1 - \hat{\alpha}\Big(\!\exp(-n R); \rho_{AB}^{\otimes n} \Big\| \tau_A^{\otimes n}\Big) \bigg) \right\} \leq \sup_{s > 1} \left\{ \frac{s-1}{s} \left( R - \widetilde{I}_{s}(\rho_{AB}\| \tau_A) \right) \right\}. \label{eq:scproofthis}
  \end{equation}
We restrict our attention to the case where $I(\rho_{AB}\|\tau_A) < R < \widetilde{I}_{\infty}(\rho_{AB}\|\tau_A)$, for which we provide a novel proof. 

\begin{proof}[Proof of Achievability in Theorem~\ref{th:han}]
Given an arbitrary fixed $s \in (1, \infty)$,
we choose a sequence $\{\mu_{n} \}_{n \in \mathbb{N}}$ of real numbers 
as
\begin{align}
    \mu_n = \frac{1}{s} \Big( \log g_{n,d} + n R + (s - 1) D_{s} (P_n \| Q_n) \Big) . \label{eq:sc-mu22}
 \end{align}
Consider the sequence of tests given by Lemma~\ref{lm:test}. Then, due to~\eqref{eq:betabound}, we have
\begin{align}
    \beta(Q_{A^nB^n}^n; \tau_A^{\otimes n}) &\leq g_{n,d} \Pr \left[ P_n(X_n') \geq \exp(\mu_n) Q_n(X_n') \right] 
    \\
    &\leq g_{n,d} \exp(-s \mu_n) \sum P_n(x)^{s} Q_n(x)^{1-s} \\
    &= g_{n,d} \exp\left(-s \mu_n\right) \exp\left( (s-1) D_{s}( P_n \| Q_n ) \right) . 
\end{align}
Hence, the requirement that $\beta(Q_{A^nB^n}^n; \tau_A^{\otimes n}) \leq \exp(-n R)$ can then be satisfied by the choice \eqref{eq:sc-mu22}.
Let us now take a closer look at the error of the first kind in~\eqref{eq:alphabound}. We find
 \begin{align}
   1 - \alpha(Q_{A^nB^n}^n; \rho_{AB}^{\otimes n}) &= \Pr \left[ P_n(X_n) \geq \exp(\mu_n) Q_n(X_n) \right] = \Pr \left[ Z_n \geq 0 \right] , 
\label{H2}
 \end{align}
 where we introduced the sequence of random variables $\{Z_n \}_{n \in \mathbb{N}}$ with 
 \begin{align}
   Z_n(X_n) :&= \frac{1}{n} \Big( \log P_n(X_n) - \log Q_n(X_n) - \mu_n \Big)  \\
   &= \frac{1}{n} \Big( \log P_n(X_n) - \log Q_n(X_n) - \frac{\log g_{n,d}}{s} - \frac{s-1}{s} D_s(P_n\|Q_n) \Big) - \frac{R}{s} \,.
   \end{align}
Since $\beta(Q_{A^nB^n}^n; \tau_A^{\otimes n}) \leq \exp(-n R)$ holds for our test,
\eqref{H2} yields 
  \begin{align}
 1- \hat{\alpha}\Big(\exp(-n R); \rho_{AB}^{\otimes n}\Big\| \tau_A^{\otimes n}\Big) 
&\geq  \Pr \left[ Z_n \geq 0\right] ,
\label{H6}
   \end{align}
which implies that
 \begin{align}
    \limsup_{n \to \infty} \left\{ - \frac{1}{n} \log \bigg( 1- \hat{\alpha}\Big(\exp(-n R); \rho_{AB}^{\otimes n}\Big\| \tau_A^{\otimes n}\Big) \bigg) \right\}  &\leq \limsup_{n\to\infty} \left\{ - \frac{1}{n} \log \big( \Pr \left[ Z_n \geq 0 \right] \big) \right\} .
\label{H4}
\end{align}

Next, let us introduce the function
 \begin{align}
   f: (s, t) \mapsto t \big( R - s \widetilde{I}_{1+t}(\rho_{AB}\|\tau_A) + (s-1) \widetilde{I}_s(\rho_{AB}\|\tau_A) \big) .
  \end{align}
We want to tackle the asymptotics of~\eqref{H4} using the G\"artner-Ellis theorem. We therefore calculate the asymptotic cumulant generating function, as in~\eqref{eq:cum}, for $t \geq -\frac12$ as follows:
  \begin{align}
   \Lambda_Z(t) &= \lim_{n \to \infty} \left\{ \frac{1}{n} \log \big( \Exp \left[ \exp ( n t Z_n ) \right] \big) \right\} \\
   &=
     \lim_{n \to \infty} \left\{ \frac{1}{n} \log \Exp \left[ \frac{P_n(X_n)^t}{Q_n(X_n)^t} \right]  - \frac{t \log g_{n,d}}{sn} - \frac{t(s - 1)}{sn} D_s(P_n\|Q_n) \right\} - t \frac{R}{s} \\
   &= t \lim_{n \to \infty} \left\{ \frac{1}{n} D_{1+t}(P_n \| Q_n) - \frac{s - 1}{ns} D_s(P_n\|Q_n) \right\} - t \frac{R}{s} \\
   &= t \left( \widetilde{I}_{1+t}(\rho_{AB}\|\tau_A) - \frac{s-1}{s} \widetilde{I}_s(\rho_{AB}\|\tau_A) - \frac{R}{s} \right) \\
   &= - \frac{f(s,t)}{s} \label{eq:use-universal-here} \,.
 \end{align}
 We used Proposition~\ref{pr:universal} in the form of~\eqref{eq:universal-pnqn} twice to establish~\eqref{eq:use-universal-here}. Now, Proposition~\ref{pr:diff} ensures that $\Lambda_Z(t)$ is continuously differentiable for $t \geq -\frac12$. Furthermore, we have
 \begin{align}
   \lim_{t \to 0} \Lambda_Z'(t)  &= I(\rho_{AB}\|\tau_A) - \frac{s-1}{s} \widetilde{I}_{s}(\rho_{AB}\|\tau_A)  - \frac{R}{s} \\
   &\leq I(\rho_{AB}\|\tau_A) - \frac{s-1}{s} I(\rho_{AB}\|\tau_A)  - \frac{R}{s} = \frac{1}{s} \left( I(\rho_{AB}\|\tau_A) - R \right) < 0 \,,
 \end{align}
 where we used that $R > I(\rho_{AB}\|\tau_A)$ in the last step. 

 On the other hand, using the convexity of $t \mapsto \phi(t) := t \widetilde{I}_{1+t}(\rho_{AB}\|\tau_A)$ (cf.~Corollary~\ref{corr:point-wise}) and $\phi(0) = 0$, we find $\phi(\lambda t) \leq \lambda \phi(t)$ for all $\lambda \in (0, 1)$. And, thus, $\phi'(t) = \lim_{\lambda \to 1} \frac{\phi(t) - \phi(\lambda t)}{t (1-\lambda)} \geq \frac{\phi(t)}{t}$. Let now $t_0$ be such that $R < \widetilde{I}_{t_0+1}(\rho_{AB}\|\tau_A)$. Then, for any $s \leq t_0 + 1$, we find
 \begin{align}
   \lim_{t \to t_0} \Lambda_Z'(t) &\geq \widetilde{I}_{t_0+1}(\rho_{AB}\|\tau_A) - \frac{s-1}{s} \widetilde{I}_{s}(\rho_{AB}\|\tau_A) - \frac{R}{s} \\
    &\geq \widetilde{I}_{t_0+1}(\rho_{AB}\|\tau_A) - \frac{s-1}{s} \widetilde{I}_{t_0+1}(\rho_{AB}\|\tau_A) - \frac{R}{s} = \frac{1}{s}\big( \widetilde{I}_{t_0+1}(\rho_{AB}\|\tau_A) - R ) > 0 \,, 
\end{align}
where we used that $R < \widetilde{I}_{t_0+1}(\rho_{AB}\|\tau_A)$.
Hence, we may apply Proposition~\ref{pr:mosonyi}, which yields
 \begin{align}
   \limsup_{n\to\infty} \left\{ - \frac{1}{n} \log \big( \Pr \left[ Z_n \geq 0 \right] \big) \right\} &\leq \sup_{0 < t < t_0} \left\{ - \Lambda_Z(t) \right\} 
\leq \sup_{0 \leq t \leq t_0} \, \frac{f(s, t)}{s}\,.
\label{H3}
 \end{align}  
Since the above holds for all $s \in (1, t_0+1]$, we indeed find
\begin{align}
&\limsup_{n \to \infty} \left\{ - \frac{1}{n} \log \bigg( 1- \hat{\alpha}\Big(\exp(-n R); \rho_{AB}^{\otimes n}\Big\| \tau_A^{\otimes n}\Big) \bigg) \right\} \leq \inf_{1 < s \leq t_0+1} \sup_{0 \leq t \leq t_0} \, \frac{f(s, t)}{s} ,
 \end{align}
 It is straightforward to verify that $f(s, t)$ is concave in $t$ and convex in $s$ since $t \mapsto t \widetilde{I}_{1+t}(\rho_{AB}\|\tau_A)$ is convex (cf.~Corollary~\ref{corr:point-wise}). Moreover, we optimize $t$ over a compact convex set and $s$ over a convex set. Thus, by Proposition~\ref{pr:minimax} in Appendix~\ref{app:minimax} applied to $-f$, we have
 \begin{align}
    \inf_{1 < s \leq t_0+1} \sup_{0 \leq t \leq t_0} \frac{f(s, t)}{s} &=  \sup_{0 \leq t \leq t_0} \inf_{1 < s \leq t_0+1} \frac{f(s, t)}{s} \\
    &= \sup_{0 < t \leq t_0} \inf_{1 < s \leq t_0+1} \frac{f(s, t)}{s} \\
    &\leq \sup_{0 < t \leq t_0} \frac{f(t+1,t)}{t+1} \leq \sup_{0 < t} \frac{f(t+1,t)}{t+1} \,,
 \end{align}
 where we restricted the optimization to strictly positive $t$ since the inner term vanishes for $t = 0$ and we simply chose $s = t+1$ in the penultimate step. 
 The resulting term corresponds to the right hand side of~\eqref{eq:scproofthis} by a suitable change of variable,
which concludes the proof.
\end{proof}

\subsection{Proof of Optimality}

It remains to show that, for all $R > 0$,
  \begin{equation}
    \liminf_{n \to \infty} \left\{ - \frac{1}{n} \log \bigg( 1 - \hat{\alpha}\Big(\!\exp(-n R); \rho_{AB}^{\otimes n} \Big\| \tau_A^{\otimes n}\Big) \bigg) \right\} \geq \sup_{s > 1} \left\{ \frac{s-1}{s} \left( R - \widetilde{I}_{s}(\rho_{AB}\| \tau_A) \right) \right\}. \label{eq:scproofthis2}
  \end{equation}

\begin{proof}[Proof of Optimality in Theorem~\ref{th:han}]
  Analogous to the optimality proof for Theorem~\ref{th:hoeffding}, we first fix $\sigma_B \in \cS(B)$ and this time apply the converse bound in~\cite[Thm.~IV.9]{mosonyiogawa13}. This yields
  \begin{align}
    &\liminf_{n \to \infty} \left\{ - \frac{1}{n} \log \bigg( 1- \hat{\alpha}\Big(\exp(-n R); \rho_{AB}^{\otimes n}\Big\| \tau_A^{\otimes n}\Big) \bigg) \right\}  \\
    &\qquad \qquad \geq \liminf_{n \to \infty} \left\{ - \frac{1}{n} \log \bigg( 1- \hat{\alpha}\Big(\exp(-n R); \rho_{AB}^{\otimes n} \Big\| \tau_A^{\otimes n} \otimes \sigma_B^{\otimes n} \Big) \bigg) \right\}\\
    &\qquad \qquad \geq \sup_{s > 1} \left\{ \frac{s-1}{s} \big( R - \widetilde{D}_{s}(\rho_{AB} \| \tau_A \otimes \sigma_B)  \big) \right\} \label{eq:rhs2}.
  \end{align}
  Since this holds for all $\sigma_B \in \cS(B)$, we may maximize the expression in~\eqref{eq:rhs2} with regards to $\sigma_B$, yielding the desired result.
\end{proof}


\section{Stein's Lemma and Second Order}
\label{sc:second}

As a direct consequence of our results on the error exponents (Hoeffding bound) and strong converse exponents, we find the following variant of Stein's lemma and its strong converse.

\begin{corollary}
Let $\rho_{AB} \in \cS(AB)$ and $\tau_{A} \in \cS(A)$ with $\tau_A \gg \rho_A$. Then,
\begin{align}
 \sup \bigg\{ R \in \mathbb{R} : \lim_{n \to \infty} \hat{\alpha}\Big( \exp(-n R); \rho_{AB}^{\otimes n} \Big\| \tau_A^{\otimes n} \Big) = 0 \bigg\} &= \\
 \inf \bigg\{ R \in \mathbb{R} : \lim_{n \to \infty} \hat{\alpha}\Big( \exp(-n R); \rho_{AB}^{\otimes n} \Big\| \tau_A^{\otimes n} \Big) = 1 \bigg\} &= I(\rho_{AB}\|\tau_A) \,.
\end{align}
\end{corollary}

For completeness, we also investigate the second order behavior, namely we investigate the error of the first kind when the error of the second kind vanishes as $\exp\big(- n I(\rho_{AB}\|\tau_A) - \sqrt{n}\,r\big)$.
This analysis takes us a step beyond quantum Stein's lemma~\cite{hiai91,ogawa00}.
Paralleling the results in~\cite{tomamichel12,li12} for simple hypothesis tests, we find that the error of the first kind converges to a constant.

\begin{theorem}
  \label{th:second}
Let $\rho_{AB} \in \cS(AB)$ and $\tau_{A} \in \cS(A)$ with $\tau_A \gg \rho_A$. Then, for any $r \in \mathbb{R}$, we have
  \begin{align}
    \lim_{n \to \infty} \left\{ \hat{\alpha}\Big( \exp \big( -n I(\rho_{AB}\|\tau_A) - \sqrt{n} \,r\log e\big); \rho_{AB}^{\otimes n} \Big\| \tau_A^{\otimes n} \Big) \right\} = \Phi \left( \frac{r\log e}{\sqrt{V(\rho_{AB}\|\tau_A)}} \right) ,
  \end{align}
  where $\Phi$ is the cumulative standard normal (Gaussian) distribution.
\end{theorem}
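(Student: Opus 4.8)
The plan is to prove the matching lower and upper bounds for the limit separately; throughout, write $I := I(\rho_{AB}\|\tau_A)$ and $V := V(\rho_{AB}\|\tau_A)$.

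For the converse (lower bound) nothing new is needed. Applying the bound~\eqref{eq:hatbound2} to the $n$-fold problem with $\sigma_B = \rho_B$ shows that $\hat{\alpha}\big(\exp(-nI - \sqrt{n}\,r);\, \rho_{AB}^{\otimes n}\,\big\|\,\tau_A^{\otimes n}\big)$ is bounded below by the minimum type-I error of an ordinary (non-composite) test distinguishing $\rho_{AB}^{\otimes n}$ from the single state $(\tau_A \otimes \rho_B)^{\otimes n}$ under the constraint that the type-II error be at most $\exp(-nI - \sqrt{n}\,r)$. Since $I = D(\rho_{AB}\|\tau_A \otimes \rho_B)$ and $V = V(\rho_{AB}\|\tau_A \otimes \rho_B)$ by definition, the second-order refinement of quantum Stein's lemma of~\cite{tomamichel12,li12} applies verbatim to this non-composite quantity and shows that it tends to $\Phi\big(r/\sqrt{V}\big)$; taking $\liminf_{n\to\infty}$ yields the lower bound.

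For achievability (upper bound) I would use the pinched Neyman--Pearson test of Lemma~\ref{lm:test} with the second-order threshold $\mu_n := nI + \sqrt{n}\,r + \log g_{n,d}$, where $d := |B|$. Bounding the type-II error through~\eqref{eq:betabound} and Markov's inequality gives $\beta(Q_{A^nB^n}^n;\tau_A^{\otimes n}) \leq g_{n,d}\exp(-\mu_n) = \exp(-nI - \sqrt{n}\,r)$, so the test is feasible. It remains to show that its type-I error converges to $\Phi\big(r/\sqrt V\big)$. With $P_n, Q_n$ as in Lemma~\ref{lm:test}, set
\begin{align}
  W_n := \frac{1}{\sqrt{n}}\Big( \log P_n(X_n) - \log Q_n(X_n) - nI \Big), \qquad X_n \sim P_n ,
\end{align}
so that Eq.~\eqref{eq:alphabound} rewrites the type-I error as $\alpha(Q_{A^nB^n}^n;\rho_{AB}^{\otimes n}) = \Pr\big[ W_n < r + \log g_{n,d}/\sqrt{n}\big]$, a probability whose threshold tends to $r$ since $\log g_{n,d} = O(\log n)$.

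The heart of the argument---and the step I expect to be the main obstacle, because $P_n$ and $Q_n$ are \emph{not} i.i.d.\ (the pinching is performed with respect to the non-product universal state $\omega_{B^n}^n$, so the ordinary central limit theorem is unavailable)---is to establish $W_n \Rightarrow \cN(0, V)$ in distribution. I would do this via moment generating functions. A short computation gives, for every $t \in \mathbb{R}$,
\begin{align}
  \log \Exp_{P_n}\!\big[ e^{tW_n} \big]
  &= -t\sqrt{n}\,I + \log \sum_{x_n} P_n(x_n)^{1+\frac{t}{\sqrt n}}\, Q_n(x_n)^{-\frac{t}{\sqrt n}} \\
  &= \frac{t}{\sqrt n}\Big( D_{1+\frac{t}{\sqrt n}}(P_n\|Q_n) - nI \Big) ,
\end{align}
and since $D_{1+t/\sqrt n}(P_n\|Q_n) = D_{1+t/\sqrt n}\big( \cP_{\tau_A^{\otimes n} \otimes\, \omega_{B^n}^n}(\rho_{AB}^{\otimes n}) \,\big\|\, \tau_A^{\otimes n} \otimes \omega_{B^n}^n \big)$, Corollary~\ref{corr:2nd-order} gives $\log \Exp_{P_n}[ e^{tW_n} ] \to \tfrac{t^2}{2}\,V$. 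This is the cumulant generating function of $\cN(0,V)$, which is finite for all real $t$; by the standard fact that pointwise convergence of moment generating functions on a neighborhood of the origin forces convergence in distribution, $W_n \Rightarrow \cN(0,V)$. Together with the continuity of $\Phi$ and the fact that the threshold in the type-I error tends to $r$, this gives $\alpha(Q_{A^nB^n}^n;\rho_{AB}^{\otimes n}) \to \Phi\big(r/\sqrt V\big)$, hence $\limsup_{n\to\infty} \hat{\alpha}\big(\exp(-nI-\sqrt n\,r);\, \rho_{AB}^{\otimes n}\,\big\|\,\tau_A^{\otimes n}\big) \leq \Phi(r/\sqrt V)$. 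Combined with the converse, this proves the theorem. (The degenerate case $V = 0$, where $W_n \to 0$ in probability, is treated separately.) As a closing remark, in the non-composite setting one simply replaces $\omega_{B^n}^n$ by $\sigma_B^{\otimes n}$ in the pinching, and the identical computation---now invoking Corollary~\ref{cor:achieve} in place of Corollary~\ref{corr:2nd-order}---reproves the achievability half of the second-order quantum Stein's lemma.
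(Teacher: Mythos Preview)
Your proof is correct and follows essentially the same route as the paper: the converse via~\eqref{eq:hatbound2} with $\sigma_B=\rho_B$ and the known second-order result, and achievability via the pinched Neyman--Pearson test of Lemma~\ref{lm:test} with threshold $\mu_n = nI + \sqrt{n}\,r + \log g_{n,d}$, where the type-I error is controlled by showing that the cumulant generating function of the centered log-likelihood ratio converges to $t^2 V/2$ using Corollary~\ref{corr:2nd-order}, hence weak convergence to $\cN(0,V)$. The only cosmetic differences are that the paper absorbs $\log g_{n,d}/\sqrt{n}$ into the random variable rather than the threshold, and phrases the weak-convergence step as L\'evy's continuity theorem rather than MGF convergence.
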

Clearly one can simplify the expression by substituting $r' = r \log e$, but the advantage of the above representation is that we see that both sides of the equality are independent of the choice of base of the logarithm.

\subsection{Proof of Achievability}

Let $M_{X}(t) := \mathbb{E}\big[ e^{t X} \big]$ denote the moment generating function of a real random variable $X$. We need the following property of moment generating functions, a variant of L\'evi's continuity theorem~\cite[Thm.~2]{mukherjea06}.
\begin{lemma}\label{lm:moments}
   Let $0 < a < b$. If a sequence of random variables $\{ X_n \}_{n \in \mathbb{N}}$ satisfies $\lim_{n \to \infty} M_{X_n}(t) = M_X(t)$ for some random variable $X$ and all $t \in (a, b)$, then $\lim_{n \to \infty} \Pr[X_n \leq k] = \Pr[X \leq k]$ for all $k \in \mathbb{R}$.
\end{lemma}

Second order achievability is now proven using the hypothesis test of Section~\ref{sc:sc} together with Corollary~\ref{corr:2nd-order} in the following.

\begin{proof}[Proof of Achievability in Theorem~\ref{th:second}]
We again use the test in Lemma~\ref{lm:test} and set $\mu_n = n I(\rho_{AB}\|\tau_A) + \sqrt{n}\,r \log e + \log g_{n,d}$.
Then, Eq.~\eqref{eq:betabound} yields
\begin{align}
  \beta(Q_{A^nB^n}^n; \tau_A^{\otimes n}) &\leq g_{n, d_B} \Pr \left[ P_n(X_n') \geq \exp(\mu_n) Q_n(X_n') \right] \\
  &\leq g_{n, d_B} \exp(-\mu_n) \Pr \left[ P_n(X_n) \geq \exp(\mu_n) Q_n(X_n) \right] \\
  &\leq 
  \exp \big( - n I(\rho_{AB}\|\tau_A) - \sqrt{n}\,r \log e \big) \,. \label{eq:second-kind-ok}
\end{align}
Moreover, using~\eqref{eq:alphabound}, we find
\begin{align}
   \alpha(Q_{A^nB^n}^n; \rho_{AB}^{\otimes n}) &= \Pr \left[ \log P_n(X_n)  - \log Q_n(X_n) < n I(\rho_{AB}\|\tau_A) + \sqrt{n}\,r\log e + \log g_{n,d} \right] \\
   &= \Pr \left[ Y_n(X_n) < r  \right] . \label{eq:sec1}
\end{align}
where we defined the following sequence of random variables:
\begin{align}
  Y_n :=
   \frac{1}{\sqrt{n}} \left( \ln P_n(X_n) - \ln Q_n(X_n) - n \frac{I(\rho_{AB}\|\tau_A)}{\log e} - \ln g_{n, d_B}\right) \,.
\end{align}
with $X_n$ distributed according to the law $P_n$ as usual.

Now, note that the moment generating function of the sequence $\{ Y_n \}_n$, denoted $M_Y(t)$, can be evaluated using the fact that
\begin{align}
  \ln M_Y(t) =& \lim_{n \to \infty} \big\{  \ln \Exp \left[ e^{ t Y_n } \right] \big\} \\
=& \lim_{n \to \infty} \left\{   \frac{t}{\sqrt{n} \log e} \big(  D_ {1 + \frac{t}{\sqrt{n}}}(P_n \| Q_n) - n I(\rho_{AB}\|\tau_A) \big)\right\} - \lim_{n \to \infty} \left\{ \frac{t}{\sqrt{n}} \ln g_{n,d} \right\} \\
=& \frac{t^2}{2 (\log e)^2} V(\rho_{AB}\|\tau_A) \,.
\end{align}
In the last step we used Corollary~\ref{corr:2nd-order} to evaluate the first term and the fact that $\log g_{n,d} = O(\log n)$ to evaluate the second term. Hence, by Lemma~\ref{lm:moments}, the sequence of random variable $\{ Y_n \}_n$ converges in distribution to a random variable $Y$ with cumulant generating function $\ln M_Y(t)$, i.e.,
a Gaussian random variable with zero mean and variance~$\frac{1}{(\log e)^2} V(\rho_{AB}\|\tau_A)$. In particular, this yields
\begin{align}
   \lim_{n \to \infty} \Pr \left[ Y_n < r \right] = \Pr \left[ Y < r \right] = \Phi \left( \frac{r \log e}{\sqrt{V(\rho_{AB}\|\tau_A)}} \right) . \label{eq:sec2}
\end{align}

Finally, due to~\eqref{eq:second-kind-ok} we have
\begin{align}
  \limsup_{n \to \infty} \left\{ \hat{\alpha}\Big( \exp \big( -n I(\rho_{AB}\|\tau_A) - \sqrt{n}\,r \big); \rho_{AB}^{\otimes n} \Big\| \tau_A^{\otimes n} \Big) \right\} &\leq \lim_{n \to \infty} \alpha(Q_{A^nB^n}^n; \rho_{AB}^{\otimes n}) \,. 
\end{align}
Combining this with~\eqref{eq:sec1} and~\eqref{eq:sec2} concludes the proof.
\end{proof}


\subsection{Proof of Optimality}

The proof of the optimality follows directly from the bound in Eq.~\eqref{eq:hatbound2} and the second order expansion for binary quantum hypothesis testing independently established in~\cite{li12} and~\cite{tomamichel12} as follows.

\begin{proof}[Proof of Optimality in Theorem~\ref{th:second}]
The papers \cite{li12} and~\cite{tomamichel12} showed that
  \begin{align}
    \lim_{n \to \infty} \left\{ \hat{\alpha}\Big( \exp \big( -n I(\rho_{AB}\|\tau_A) - \sqrt{n}\,r \big); \rho_{AB}^{\otimes n} \Big\| \tau_A^{\otimes n}
\otimes \rho_B^{\otimes n} \Big) 
\right\} = \Phi \left( \frac{r}{\sqrt{V(\rho_{AB}\|\tau_A)}} \right) .
  \end{align}
Since 
$\hat{\alpha}\big( \exp ( -n I(\rho_{AB}\|\tau_A) - \sqrt{n}\,r ); \rho_{AB}^{\otimes n} \big\| \tau_A^{\otimes n} \big) 
\ge
\hat{\alpha}\big( \exp ( -n I(\rho_{AB}\|\tau_A) - \sqrt{n}\,r ); \rho_{AB}^{\otimes n} \big\| \tau_A^{\otimes n}
\otimes \rho_B^{\otimes n} \big)$,
we obtain the optimality.
\end{proof}





\paragraph*{Acknowledgements:}
MT thanks Mil\'an Mosonyi for enlightening discussions throughout this project, for many comments that helped improve the presentation, and for sharing his notes on the G\"artner-Ellis theorem. We also thank Mark M.~Wilde for comments. The current version of the manuscript is much improved following comments by an anonymous referee.
MH is partially supported by a MEXT Grant-in-Aid for Scientific Research (A) No. 23246071, and by the National Institute of Information and Communication Technology (NICT), Japan.
MT acknowledges support from the MOE Tier 3 Grant ``Random numbers from quantum processes'' (MOE2012-T3-1-009).
The Centre for Quantum Technologies is funded by the Singapore Ministry of Education and the National Research Foundation as part of the Research Centers of Excellence program.

\appendix

\section{A Minimax Theorem}
\label{app:minimax}

Here we show a useful minimax theorem, that is essentially a corollary of K\"onig's minimax theorem~\cite{koenig68}. First, we need to introduce a weaker notion of concavity and convexity. A function $f: \mathcal{X} \times \mathcal{Y} \to \mathbb{R}$ is \emph{$\frac{1}{2}$-concavelike} on $\mathcal{X}$ if, for every $x_1, x_2 \in \mathcal{X}$, there exists $x_3 \in \mathcal{X}$ such that
\begin{align}
  f(x_3, y) \geq \frac12 \big( f(x_1,y) + f(x_2, y) \big) \qquad \textrm{for every} \quad y \in \mathcal{Y} . \label{eq:concavelike}
\end{align}
Analogously, $f$ is \emph{$\frac{1}{2}$-convexlike} on $\mathcal{Y}$ if, for every $y_1, y_2 \in \mathcal{Y}$, there exists $y_3 \in \mathcal{Y}$ such that
\begin{align}
  f(x, y_3) \leq \frac12 \big( f(x,y_1) + f(x, y_2) \big) \qquad \textrm{for every} \quad x \in \mathcal{X} .
\end{align} 
K\"onig's minimax theorem now reads as follows~\cite{koenig68} (see also~\cite{kassay94}). Let $\mathcal{Y}$ be a compact Hausdorff space and let $f(x, \cdot)$ be lower-semicontinuous for every $x \in \mathcal{X}$. Moreover, let $f$ be $\frac{1}{2}$-concavelike on $\mathcal{X}$ and $\frac{1}{2}$-convexlike on $\mathcal{Y}$. Then, we have
\begin{align}
  \sup_{x \in \mathcal{X}} \inf_{y \in \mathcal{Y}} f(x,y) = \inf_{y \in \mathcal{Y}} \sup_{x \in \mathcal{X}} f(x,y)
  \label{eq:koenig-minimax} .
\end{align}
From this we derive the following result. 

\begin{proposition}
  \label{pr:minimax}
  Let $\mathcal{X} \subset \mathbb{R}^+$ be convex and let $\mathcal{Y}$ be a compact Hausdorff space. Further, let $f: \mathcal{X} \times \mathcal{Y} \to \mathbb{R}$ be concave on $\mathcal{X}$ as well as $\frac{1}{2}$-convexlike and lower semi-continuous on $\mathcal{Y}$. Then,
  \begin{align}
    \sup_{x \in \mathcal{X}} \inf_{y \in \mathcal{Y}} \frac{f(x,y)}{x} = \inf_{y \in \mathcal{Y}} \sup_{x \in \mathcal{X}} \frac{f(x,y)}{x}  \,.
  \end{align}
\end{proposition}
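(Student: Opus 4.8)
The plan is to deduce the statement from König's minimax theorem~\eqref{eq:koenig-minimax} applied not to $f$ directly but to the rescaled function $g : \mathcal{X} \times \mathcal{Y} \to \mathbb{R}$ defined by $g(x,y) := f(x,y)/x$. One direction, $\sup_x \inf_y g(x,y) \le \inf_y \sup_x g(x,y)$, holds trivially by weak duality, so the content lies in the reverse inequality, and König's theorem supplies it as soon as we check its hypotheses for $g$: that $\mathcal{Y}$ is compact Hausdorff (immediate, being a compact subset of a Hilbert space), that $g(x,\cdot)$ is lower semicontinuous for each $x \in \mathcal{X}$, that $g$ is $1/2$ convexlike in $\mathcal{Y}$, and that $g$ is $1/2$ concavelike in $\mathcal{X}$.

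The first three are routine. For fixed $x > 0$, convexity of $f(x,\cdot)$ immediately gives convexity of $g(x,\cdot)$, so one may take $y_3 = \tfrac12(y_1+y_2) \in \mathcal{Y}$ in the definition of $1/2$ convexlike; and lower semicontinuity of $g(x,\cdot)$ is inherited from that of $f(x,\cdot)$ since $x > 0$ (in the intended applications $f(x,\cdot)$ is in fact continuous with values in $\mathbb{R}\cup\{+\infty\}$, so no issue arises, and one may harmlessly restrict $\mathcal{Y}$ to the convex compact subset on which $f$ is finite).

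The crux, and the one place where rescaling by $1/x$ genuinely matters, is the $1/2$ concavelike property in $\mathcal{X}$: dividing a concave function of $x$ by $x$ need not produce a concave function, but it does preserve the weaker $1/2$ concavelike property \emph{provided one interpolates with the harmonic mean rather than the arithmetic one}. Concretely, given $x_1, x_2 \in \mathcal{X}$ I would set $x_3 := \tfrac{2 x_1 x_2}{x_1 + x_2}$, which lies between $x_1$ and $x_2$ and hence in $\mathcal{X}$ (using convexity of $\mathcal{X} \subset \mathbb{R}^+$). Writing $x_3 = \lambda x_1 + (1-\lambda) x_2$ with $\lambda = \tfrac{x_2}{x_1+x_2}$ and invoking concavity of $f$ in its first argument, for every $y \in \mathcal{Y}$ we get
\begin{align}
  g(x_3,y) = \frac{f(x_3,y)}{x_3} \ge \frac{\lambda f(x_1,y) + (1-\lambda) f(x_2,y)}{x_3} = \frac{x_2 f(x_1,y) + x_1 f(x_2,y)}{2 x_1 x_2} = \frac12 \big( g(x_1,y) + g(x_2,y) \big) ,
\end{align}
and, importantly, the interpolation weights $\lambda, 1-\lambda$ depend only on $x_1, x_2$, so this bound is uniform in $y$, which is exactly what~\eqref{eq:concavelike} demands.

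With all hypotheses in place, König's theorem yields $\sup_x \inf_y g(x,y) = \inf_y \sup_x g(x,y)$, which is the claimed identity. The main obstacle is really just recognizing that the naive approach—hoping that $f/x$ inherits concavity in $x$—fails, and must be replaced by the harmonic-mean interpolation above; the compactness and semicontinuity bookkeeping needed to apply König's theorem is then entirely standard.
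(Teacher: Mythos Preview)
Your proposal is correct and follows essentially the same approach as the paper: apply K\"onig's minimax theorem to $g(x,y)=f(x,y)/x$, verify the convexlike/semicontinuity conditions in $\mathcal{Y}$ directly from convexity, and establish the $1/2$ concavelike property in $\mathcal{X}$ via the harmonic-mean interpolation $x_3=\tfrac{2x_1x_2}{x_1+x_2}$ together with concavity of $f(\cdot,y)$. Your added remark that the interpolation weights are independent of $y$ makes explicit the uniformity that the paper leaves implicit, but otherwise the arguments coincide.
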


\begin{proof}
   We just need to show that $g(x,y) = \frac{f(x,y)}{x}$ satisfies the conditions required for~\eqref{eq:koenig-minimax} to hold. First, $g(x,y)$ is $\frac{1}{2}$-convexlike and lower-semicontinuous in $y$ by assumption. It is also $\frac{1}{2}$-concavelike in $x$ due to the following argument. Let $x_1, x_2 \in \mathcal{X}$ with $x_1 < x_2$ and $y \in \mathcal{Y}$ be arbitrary. We have
   \begin{align}
     \frac12 \bigg( \frac{f(x_1, y)}{x_1} + \frac{f(x_2, y)}{x_2} \bigg) &= \frac{x_1 + x_2}{2 x_1 x_2} \bigg( \frac{x_2}{x_1+x_2} f(x_1,y) + \frac{x_1}{x_1+x_2} f(x_2, y) \bigg) \\
     &\leq \frac{x_1 + x_2}{2 x_1 x_2} f \bigg( \frac{2 x_1 x_2}{x_1 + x_2}, y \bigg) 
   \end{align}
   by the concavity of $f(\cdot,y)$. Thus, choosing $x_3 = \frac{2 x_1 x_2}{x_1 + x_2} \in [x_1, x_2] \subset \mathcal{X}$, we see that $g(x,y)$ is indeed $\frac{1}{2}$-concavelike according to~\eqref{eq:concavelike}.
\end{proof}

\section{Proof of Lemma~\ref{lm:duality}}
\label{app:dual}

Let us first show that $\widetilde{I}_{\alpha}(\rho_{AB}\|\tau_A) = -\widetilde{I}_{\beta}(\rho_{AC}\|\tau_A^{-1})$ for pure states $\rho_{ABC}$ and $\frac{1}{\alpha} + \frac{1}{\beta} = 2$. By symmetry it is sufficient to prove the statement for $\alpha \in \big[\frac12, 1\big)$ guaranteeing that $\beta > 1$.

\begin{proof}[{Proof of Eq.~\eqref{eq:duality1} in Lemma~\ref{lm:duality}}]
   First, recall that
  \begin{align}
     \exp \big( -\gamma\, \widetilde{I}_{\alpha}(\rho_{AB}\|\tau_A) \big) &= 
     \sup_{\sigma_B \in \cS(B)} \Big\| \big( \tau_A^{\gamma} \otimes \sigma_B^{\gamma} \big) \times \rho_{AB}  \Big\|_{\alpha} \label{eq:new-rep1}
  \end{align}
  where we set $\gamma := \frac{1-\alpha}{\alpha} \in (0, 1]$ and we use the shorthand $\|X\|_{\alpha} := (\tr (X^{\alpha}))^{\frac{1}{\alpha}}$ for Hermitian operators $X$. Also recall that we use $L \times R$ to denote the Hermitian operator $\sqrt{L} R \sqrt{L}$.  
  
  By introducing the purification $\rho_{ABC}$, we see that
  \begin{align}
    \sup_{\sigma_B \in \cS(B)} \Big\| \big( \tau_A^{\gamma} \otimes \sigma_B^{\gamma} \big) \times \rho_{AB}  \Big\|_{\alpha} 
    &= \sup_{\sigma_B \in \cS(B)}  \Big\| \tr_C \Big[ \big( \tau_A^{\gamma} \otimes \sigma_B^{\gamma} \otimes \id_C \big) \times \rho_{ABC} \Big] \Big\|_{\alpha} \\
    &= \sup_{\sigma_B \in \cS(B)}  \Big\| \tr_{AB} \Big[ \big( \tau_A^{\gamma} \otimes \sigma_B^{\gamma} \otimes \id_C \big) \times \rho_{ABC} \Big] \Big\|_{\alpha} \\
    &= \sup_{\sigma_B \in \cS(B)} \inf_{\sigma_C \in \cS(C) \atop \sigma_C > 0} \tr \Big[ \sigma_C^{-\gamma} \, \tr_{AB} \Big[ \big( \tau_A^{\gamma} \otimes \sigma_B^{\gamma} \otimes \id_C \big) \times \rho_{ABC} \Big] \Big] \label{eq:use-freds-lemma} \\ 
    &= \sup_{\sigma_B \in \cS(B)} \inf_{\sigma_C \in \cS(C) \atop \sigma_C > 0} \tr \Big[ \big( \tau_A^{\gamma} \otimes \sigma_B^{\gamma} \otimes \sigma_C^{-\gamma} \big) \rho_{ABC} \Big] ,
  \end{align} 
  where we employed~\cite[Lm.~12]{lennert13} to establish~\eqref{eq:use-freds-lemma}. Now, it is easy to verify that $\cS(B)$ is convex compact, the set of strictly positive elements of $\cS(C)$ is convex and the function $\tr [ ( \tau_A^{\gamma} \otimes \sigma_B^{\gamma} \otimes \sigma_C^{-\gamma} ) \rho_{ABC} ]$ is concave in $\sigma_B$ and convex in $\sigma_C$. Thus, Sion's minimax theorem~\cite{sion58} applies and yields the following alternative expression:
  \begin{align}
    \exp \big( -\gamma\, \widetilde{I}_{\alpha}(\rho_{AB}\|\tau_A) \big) &=  \inf_{\sigma_C \in \cS(C) \atop \sigma_C > 0} \sup_{\sigma_B \in \cS(B)} \tr \Big[ \big( \tau_A^{\gamma} \otimes \sigma_B^{\gamma} \otimes \sigma_C^{-\gamma} \big) \rho_{ABC} \Big] \\
    &= \inf_{\sigma_C \in \cS(C) \atop \sigma_C > 0} \sup_{\sigma_B \in \cS(B)} \tr \Big[ \sigma_B^{\gamma}\, \tr_{AC} \Big[ \big( \tau_{A}^{\gamma} \otimes \sigma_C^{-\gamma} \big) \times \rho_{ABC}
    \Big] \Big] \\
    &= \inf_{\sigma_C \in \cS(C) \atop \sigma_C > 0} \Big\| \tr_{AC} \Big[ \big( \tau_{A}^{\gamma} \otimes \sigma_C^{-\gamma} \big) \times \rho_{ABC} \Big] \Big\|_{\beta} \label{eq:use-freds-lemma2} \\
    &= \inf_{\sigma_C \in \cS(C) \atop \sigma_C > 0} \Big\| \big( \tau_{A}^{\gamma} \otimes \sigma_C^{-\gamma} \big) \times \rho_{AC} \Big\|_{\beta} \label{eq:new-rep2} .
  \end{align}
  We again used~\cite[Lm.~12]{lennert13} to establish~\eqref{eq:use-freds-lemma2} and note that $\beta = \frac{\alpha}{2\alpha -1} = \frac{1}{1-\gamma}$.
  Substituting for $\gamma = - \frac{1-\beta}{\beta}$ in~\eqref{eq:new-rep2} establishes the desired equality.  
\end{proof}

Let us now show that $I_{\alpha}(\rho_{AB}\|\tau_A) = - \widetilde{D}_{\beta}(\rho_{AC} \| \tau_A^{-1} \otimes \rho_C)$ for pure states $\rho_{ABC}$ and $\alpha \beta = 1$.

\begin{proof}[Proof of Eq.~\eqref{eq:duality2} in Lemma~\ref{lm:duality}]
The following quantum Sibson's identity is adapted from~\cite[Lem.~3 in Suppl.~Mat.]{sharma13}.
Let $\rho_{AB} \in \cS(AB)$, $\tau_A \in \cS(A)$, and $\sigma_B \in \cS(B)$. For any $\alpha > 0$, we have
   \begin{align}
     D_{\alpha}(\rho_{AB}\|\tau_A \otimes \sigma_B) = D_{\alpha}(\rho_{AB}\|\tau_A \otimes \sigma_B^*(\alpha)) + D_{\alpha}(\sigma_B^*(\alpha)\|\sigma_B), \qquad \textrm{where} \label{eq:sibson-def2a} \\
   \sigma_B^*(\alpha) := \frac{ \left( \tr_A \left[ \tau_A^{{1-\alpha}}  \rho_{AB}^{\alpha} \right] \right)^{\frac{1}{\alpha}}}{\tr\Big[\left( \tr_A \left[ \tau_A^{{1-\alpha}}  \rho_{AB}^{\alpha} \right] \right)^{\frac{1}{\alpha}} \Big]} \ . \label{eq:sibson-def2}
   \end{align}

Furthermore, as an immediate consequence of the positive definiteness of $D_{\alpha}(\sigma_B^*(\alpha)\|\sigma_B)$, we find that $\argmin_{\sigma_B \in \cS(B)} D_{\alpha}(\rho_{AB} \| \tau_A \otimes \sigma_B) = \sigma_B^*(\alpha)$ is unique.

In particular, the following chain of equalities holds for any purification $\rho_{ABC}$ of $\rho_{AB}$:
\begin{align}
  I_{\alpha}(\rho_{AB}\|\tau_A) &= \frac{\alpha}{\alpha-1} \log \tr \Big[ \big( \tr_A \big[ \tau_A^{{1-\alpha}} \times \rho_{AB}^{\alpha}  \big] \big)^{\frac{1}{\alpha}} \Big] \\
  &= \frac{\alpha}{\alpha-1} \log \tr \Big[ \big( \tr_{AC} \big[ ( \tau_A^{{1-\alpha}} \otimes \rho_{AB}^{{\alpha-1}}) \times \rho_{ABC} \big] \big)^{\frac{1}{\alpha}} \Big] \\
  &= \frac{\alpha}{\alpha-1} \log \tr \Big[ \big( \tr_{AC} \big[ ( \tau_A^{{1-\alpha}} \otimes \rho_C^{{\alpha-1}}) \times \rho_{ABC} \big] \big)^{\frac{1}{\alpha}} \Big] \\
 &= \frac{\alpha}{\alpha-1} \log \tr \Big[ \big( ( \tau_A^{{1-\alpha}} \otimes \rho_C^{{\alpha-1}}) \times \rho_{AC} \big)^{\frac{1}{\alpha}} \Big] \label{eq:dual2-spectral} \\ 
  &= -\frac{1}{\beta-1} \log \tr \bigg[ \bigg( \Big( \tau_A^{-\frac{1-\beta}{\beta}} \otimes \rho_C^{\frac{1-\beta}{\beta}}\Big) \times \rho_{AC} \bigg)^{\beta} \bigg] = - \widetilde{D}_{\beta}(\rho_{AC} \| \tau_A^{-1} \otimes \rho_C) \,. \label{eq:dual2-last}
\end{align}
 To establish~\eqref{eq:dual2-spectral} we used that the marginals on $AC$ and $B$ of the pure state $\big( \tau_A^{{1-\alpha}} \otimes \rho_C^{{\alpha-1}}\big) \times \rho_{ABC}$ have the same eigenvalues. Finally, we substituted $\beta = \frac1{\alpha}$ in~\eqref{eq:dual2-last}.
\end{proof}

\begin{proof}[Proof of Eq.~\eqref{eq:duality3} in Lemma~\ref{lm:duality}]

We choose a vector $|\psi\rangle$ on the joint system such that $\rho_{ABC}=|\psi\rangle\langle \psi|$. 
Then, using $s:=\alpha-1$, we find the following chain of equalities:
\begin{align*}
s D_{1+s}(\rho_{AB} \| \tau_A \otimes\rho_B)
&= \log \tr \big[\rho_{AB}^{1+s}(\tau_A^{-s} \otimes \rho_B^{-s}) \big]\\
&=\log \tr \big[ \rho_{AB} \rho_{AB}^{s}(\tau_A^{-s} \otimes \rho_B^{-s}) \big]\\
&=\log\, \langle \psi| (\rho_{AB}^{s}\otimes 1_C) (\tau_A^{-s}\otimes I_{BC}) (1_{AC} \otimes \rho_B^{-s}) |\psi\rangle \\
&= \log\, \langle \psi| (1_{AB} \otimes \rho_{C}^{s}) (\tau_A^{-s}\otimes 1_{BC})(\rho_{AC}^{-s} \otimes 1_B)
|\psi\rangle \\
&= \log \tr \big[\rho_{AC} (1_{A} \otimes \rho_{C}^{s}) (\tau_A^{-s}\otimes 1_{C}) \rho_{AC}^{-s} \big]\\
&= \log \tr \big[\rho_{AC}^{1-s} (\tau_A^{-s} \otimes \rho_{C}^{s}) \big] \\
&= -s D_{1-s} \big(\rho_{AC} \,\big\| \tau_A^{-1} \otimes\rho_C \big) \,.
\end{align*}
\end{proof}

\section{Characterization of Minimizers}
\label{app:diff}

\subsection{Fr\'echet Derivatives}

We use the following Fr\'echet derivatives, which seem to be useful for our purposes because they keep us inside the space of normalized density operators. 
For two density operators $\sigma, \omega \in \cS(A)$ and a map $F$ from $\cS(A)$ to $\cP(B)$, define
\begin{align}
  \partial_{\omega} F(\sigma) := D F(\sigma)[\omega - \sigma]  \quad \textrm{and} \quad \partial_{\omega} \partial_{\omega'} F(\sigma) := D^2 F(\sigma)[\omega - \sigma, \omega' - \sigma]
\end{align}
if the first Fr\'echet derivative, $DF(\sigma)$, and the second Fr\'echet derivative, $D^2 F(\sigma)$, exist. In this case we call $F$ differentiable and twice differentiable, respectively.
By linearity of the Fr\'echet derivative we have $\partial_{\mu \omega_1 + (1-\mu) \omega_2} = \mu \partial_{\omega_1} + (1-\mu) \partial_{\omega_2}$ for $\mu \in [0,1]$. Moreover, the second Fr\'echet derivative is symmetric, and hence $\partial_{\omega} \partial_{\omega'} = \partial_{\omega'} \partial_{\omega}$.

We will now summarize some properties of these derivatives. (See~\cite[Sec.~V.3 and Sec.~X.4]{bhatia97} for definitions and an introduction to matrix Fr\'echet derivatives.)
First, if the Fr\'echet derivatives exist, we can also write them as directional derivatives
\begin{align}
\partial_{\omega} F(\sigma) = \frac{\partial}{\partial s} \Big|_{s=0} F\big( (1-s) \sigma + s \omega\big)  \quad \textrm{and} \quad
\partial_{\omega}^2 F(\sigma) = \frac{\partial^2}{\partial s^2} \Big|_{s=0} F\big( (1-s) \sigma + s \omega\big) \,. 
\label{eq:second-deriv-def}
\end{align}
For example, $\partial_{\omega} \sigma = \omega - \sigma$ and $\partial_{\omega}^2 \sigma = 0$.
The derivatives satisfy the usual rules of differentiation. In particular, if both $F$ and $G$ are twice differentiable, we have the chain rules (see, e.g.,~\cite[p.~316]{bhatia97})
\begin{align}
  \partial_{\omega}(G \circ F)(\sigma) &= D G (F(\sigma)) \big[\partial_{\omega}F(\sigma)\big], \label{eq:chain1} \qquad \textrm{and}\\
  \partial_{\omega}^2(G \circ F)(\sigma) &= D G (F(\sigma))\big[\partial_{\omega}^2 F(\sigma) \big] + D^2 G (F(\sigma)) \big[\partial_{\omega}F(\sigma), \partial_{\omega}F(\sigma) \big]\,. \label{eq:chain2}
\end{align}
Hence, in particular $\partial_{\omega}(\tr \circ\, F)(\sigma) = \tr(\partial_{\omega} F(\sigma))$ and $\partial_{\omega}^2(\tr \circ\, F)(\sigma) = \tr(\partial_{\omega}^2 F(\sigma))$.

We often consider smooth functions $g: \mathbb{R}^+ \setminus \{0\} \to \mathbb{R}^+$ defined on the positive real axis that are lifted to positive operators. As a consequence of~\cite[Lem.~V.3.1]{bhatia97} we can write the first Fr\'echet derivative as a Hadamard product, i.e. $Dg(\sigma)(A) = g^{[1]}(\sigma) \odot A$, where $g^{[1]}$ is the matrix of divided differences and $\odot$ denotes the Hadamard product in an eigenbasis of $\sigma > 0$. Here we only need to know that the diagonal elements of $g^{[1]}(\sigma)$ correspond to the eigenvalues of $g'(\sigma)$.
Taking the trace it is then easy to verify that
\begin{align}
  \tr\big[Dg(\sigma)[A]\big] = \tr\big[g'(\sigma) A\big] \quad \textrm{and, in fact,} \quad \tr\big[B\,Dg(\sigma)(A)\big] = \tr\big[B g'(\sigma) A\big] \,. \label{eq:tiawn}
\end{align}
for any $B$ that commutes with $\sigma$. The first statement and the chain rules~\eqref{eq:chain1}--\eqref{eq:chain2} imply that
\begin{align}
  \partial_{\omega}  (\tr \circ\, g \circ F)(\sigma) &= \tr \big[ g'(F(\sigma))\, \partial_{\omega} F(\sigma) \big] 
  \label{eq:diff-trace} \,,\\
    \partial_{\omega}^2 (\tr \circ\, g \circ F)(\sigma)  &= \tr \big[ g'(F(\sigma))\, \partial_{\omega}^2 F(\sigma) \big] + D^2 (\tr \circ g)(F(\sigma))\big[ \partial_{\omega}F(\sigma), \partial_{\omega}F(\sigma) \big] 
\,, \label{eq:diff2-trace}
\end{align}
as long as $F(\sigma) > 0$ or $g$ is differentiable at zero. In case $g'(0)$ is undefined but $F(\sigma) \gg \partial_{\omega}F(\sigma)$ the equality~\eqref{eq:diff-trace} still holds by a continuity argument if we set $g'(0) = 0$.

Finally, we recall that the trace functional $\tr \circ\, g$ from $\cS(A)$ to $\mathbb{R}$ is (strictly) convex if $g$ is (strictly) convex and (strictly) concave if $g$ is (strictly) concave (see, e.g.,~\cite[Thm.~2.10]{carlen09}). Furthermore, if $g$ is twice continuously differentiable with $g'' > 0$, then we deduce from~\eqref{eq:diff2-trace} that
\begin{align}
  \partial_{\omega}^2 (\tr \circ\,g \circ F)(\sigma))  &> \tr \big[ g'(F(\sigma)) \partial_{\omega}^2 F(\sigma) \big] \,, \label{eq:strictly}
\end{align}
unless $\partial_{\omega}F(\sigma) = 0$, 
and the same holds with opposite sign if $g'' < 0$. Finally, due to the representation as a directional derivative in~\eqref{eq:second-deriv-def}, it is evident that if $g$ is operator convex, then $\partial_{\omega}^2 g(\sigma) \geq 0$ and similarly if $g$ is operator concave, then $\partial_{\omega}^2 g(\sigma) \leq 0$.

\subsection{Characterization of the Optimal Marginal State} 
\label{sc:opt-marginal}

In this section we provide the proof of Lemma~\ref{pr:fixed-point-lemma}. The argument is split into two parts, presented in Lemmas~\ref{lm:unique-max1} and~\ref{lm:unique-max2}, which together imply Lemma~\ref{pr:fixed-point-lemma}. 

Before we start, note that Lemma~\ref{lm:iso} allows us to restrict our attention to $\rho_{AB} \in \cS(AB)$ is such that both $\rho_A$ and $\rho_B$ and, thus, $\tau_A \in \cS(A)$ have full support. 
Consider the functional
  \begin{align}
   \chi_{\alpha}: \sigma_B \mapsto \tr \Big[ \big( (\tau_A \otimes \sigma_B)^{\frac{1-\alpha}{2\alpha}} \rho_{AB} (\tau_A \otimes \sigma_B)^{\frac{1-\alpha}{2\alpha}}\big)^{\alpha} \Big] 
  \end{align}
for $\alpha \geq \frac12$ which corresponds to the trace functionals used in the definition of $\tilde{I}_{\alpha}(\rho_{AB}\|\tau_A)$. Now let us first consider the case $\alpha \in \big[\frac12, 1)$. We see that the directional derivative on the boundary of $\cS(B)$ where at least one eigenvalue is zero in a direction that increases the rank diverges to positive infinity. Since $\chi_{\alpha}$ is continuous on any compact subset and the boundary of $\cS(B)$ is compact, this allows us to conclude that
\begin{align}
  \sup_{\sigma_B \in \cS(B)} \chi_{\alpha}(\sigma_B) = \max_{\sigma_B \in \cS_{\nu}(B)} \chi_{\alpha}(\sigma_B)
\end{align}
where $\cS_{\nu}(B) := \{ \sigma_B \in \cS(B) : \sigma_B \geq \nu 1_B \}$ for some small constant $\nu > 0$ that depends on the states $\rho_{AB}$ and $\tau_B$ as well as $\alpha$. This compact set contains states with eigenvalues bounded away from zero.
  Next, we consider $\alpha > 1$. Since $\chi_{\alpha}(\sigma_B)$ diverges to positive infinity when an eigenvalue of $\sigma_B$ approaches zero, we can again write
\begin{align}
  \inf_{\sigma_B \in \cS(B)} \chi_{\alpha}(\sigma_B) = \min_{\sigma_B \in \cS_{\nu}(B)} \chi_{\alpha}(\sigma_B)
\end{align}
for an appropriate choice of $\nu > 0$.
 Using the above relations, we define the set of states that achieve the maximum (for $\alpha < 1$) or minimum (for $\alpha > 1$) as follows:
\begin{align}
  \mathcal{M}_{\alpha}(B) := \begin{cases}  \argmax_{\sigma_B \in \cS(B)} \chi_{\alpha}(\sigma_B) & \textrm{if } \alpha \in \big[\frac12, 1\big) \\ 
    \argmin_{\sigma_B \in \cS(B)} \chi_{\alpha}(\sigma_B) & \textrm{if } \alpha \in (1, \infty) \end{cases}
    \label{eq:theopt}
\end{align}
and note that they are non-empty.

Next we characterize the states that achieve the optimum. 
\begin{lemma}
\label{lm:unique-max1}
Let $\rho_{AB} \in \cS(AB)$ and $\tau_A \in \cS(A)$ such that $\tau_A \gg \rho_A$, and $\alpha \in [\frac12,1) \cup (1,\infty)$.   Moreover, let $\cF_{\alpha}(B) \subset 
\{ \sigma_B \in \cS(B) : \sigma_B \gg \rho_B \}$ be the fixed-points of the following non-linear map:
  \begin{align}
  \mathcal{X}_{\alpha} : \sigma_B \mapsto \frac{1}{\chi_{\alpha}(\sigma_B)} \tr_A \Big[ \Big( (\tau_A \otimes \sigma_B)^{\frac{1-\alpha}{2\alpha}} \rho_{AB} (\tau_A \otimes \sigma_B)^{\frac{1-\alpha}{2\alpha}}\Big)^{\alpha} \Big]  \label{eq:themap}
  \end{align}
  Then, $\mathcal{M}_{\alpha}(B) = \cF_{\alpha}(B)$.
\end{lemma}

We stated the lemma for general states but note that in the proof we can readily restrict our attention to states $\rho_{AB}$ such that $\rho_A$ and $\rho_B$ have full support.


\begin{proof}
We derive a necessary and sufficient condition for a state $\sigma_B > 0$ to be an extremum of the optimizations in~\eqref{eq:theopt} as follows. Let $\alpha \in [\frac12,1) \cup (1, \infty)$ and set $\gamma = \frac{1-\alpha}{\alpha} \in (-1, 1] \setminus \{0\}$. Moreover, 
\begin{align}
  X = \tau^{\gamma/2} \rho^{1/2} , \qquad \textrm{such that} \qquad \chi_{\alpha}(\sigma) = \tr \big[ \big( X^{\dagger} \sigma^{\gamma} X \big)^{\alpha} \big] ,
\end{align}
where we omitted the identity symbol and dropped the subscripts to make the presentation more concise in the following. 
Since $\chi_{\alpha}$ is either concave (for $\alpha < 1$) or convex (for $\alpha > 1$)~\cite{frank13} and the optimum is taken in the interior, a necessary and sufficient condition for $\sigma$ to be an optimizer (in either case) is given if $\partial_{\omega}\chi_{\alpha}(\sigma) = 0$ for all $\omega \in \cS(B)$.

Using the relation in~\eqref{eq:diff-trace}, we find the following chain of equalities:
\begin{align}
  \partial_{\omega}\chi_{\alpha}(\sigma) &= \alpha \tr \Big[ \big( X^{\dagger} \sigma^{\gamma} X \big)^{\alpha-1} \cdot \partial_{\omega} \big( X^{\dagger} \sigma^{\gamma} X \big) \Big] \\
  &= \alpha \tr \Big[\sigma^{\gamma/2} X \big( X^{\dagger} \sigma^{\gamma} X \big)^{\alpha-1} X^{\dagger} \sigma^{\gamma/2} \cdot \sigma^{-\gamma/2}  \big( \partial_{\omega}  \sigma^{\gamma} \big) \sigma^{-\gamma/2}\Big] \\
  &= \alpha \tr \Big[ \big( \sigma^{\gamma/2} X X^{\dagger} \sigma^{\gamma/2} \big)^{\alpha} \cdot \sigma^{-\gamma/2}  \big( \partial_{\omega}  \sigma^{\gamma} \big) \sigma^{-\gamma/2} \Big] \label{eq:use-identity1}\\
  &= \alpha \tr \Big[  \sigma^{-1/2} \tr_A \Big[ \big( \sigma^{\gamma/2} X X^{\dagger} \sigma^{\gamma/2} \big)^{\alpha} \Big] \sigma^{-1/2} \cdot \sigma^{(1-\gamma)/2}  \big( \partial_{\omega}  \sigma^{\gamma} \big) \sigma^{(1-\gamma)/2} \Big] \,.\label{eq:simplify-this}
\end{align}
To derive~\eqref{eq:use-identity1} we used the identity $f(Y^{\dag} Y) Y^{\dag} = Y^{\dag} f(Y Y^{\dag})$ for $Y = \sigma^{\gamma/2} X$, which can be verified using the polar decomposition for any linear operator $Y$.

Next we show that the operators
\begin{align}
  \Big\{ \Delta_{\omega} = \sigma^{(1-\gamma)/2}  \big( \partial_{\omega}  \sigma^{\gamma} \big) \sigma^{(1-\gamma)/2} : \omega \in \cS(B) \Big\}
\end{align}
span the space of traceless Hermitian operators on $\mathcal{H}_B$. Introducing the eigenvalue decomposition $\sigma = \sum_x \lambda_x |x\rangle\!\langle x|$ with $\lambda_x > 0$, we can easily verify using~\cite[Thm.~3.25]{hiaipetz14} that 
\begin{align}
  \big\langle x \big| \Delta_{\omega} \big| y \big\rangle = \begin{cases}
  ( \lambda_x \lambda_y )^{\frac{1-\gamma}{2}} \frac{\lambda_x^{\gamma} - \lambda_y^{\gamma}}{\lambda_x - \lambda_y} \langle x | \omega - \sigma | y \rangle & \textrm{if $\lambda_x \neq \lambda_y$} \\
   \gamma\, \langle x | \omega - \sigma | y \rangle & \textrm{if $\lambda_x = \lambda_y$}
  \end{cases} \,.
\end{align}
Hence, $\Delta_{\omega}$ is Hermitian and $\tr[\Delta_{\omega}] = 0$ for all $\omega \in \cS(B)$. Next, note that a basis of the traceless Hermitian operators is given by the operators 
\begin{align}
 \big\{ \Gamma_{xy} = |x \rangle\!\langle y| + |y \rangle\!\langle x|,\, \Gamma_{xy}' = i|x \rangle\!\langle y| -i|y \rangle\!\langle x|,\, \Gamma_{xy}'' = |x \rangle\!\langle x| - |y \rangle\!\langle y| \big\}_{x \neq y} \,.
\end{align}

Furthermore, for every tuple $(x,y)$ with $x \neq y$ there exists an $\eps > 0$ such that the state $\omega = \sigma + \eps \Gamma_{xy}$ is still in $\cS(B)$. For this state, we find that $\Delta_{\omega} = \eta \Gamma_{xy}$ for some real $\eta > 0$. An analogous argument applies to $\Gamma_{xy}'$ and $\Gamma_{xy}''$. Hence, we have verified that
the operators $\{ \Delta_{\omega} \}_{\omega \in \cS(B)}$ span the space of traceless Hermitian operators.

Now let us return to~\eqref{eq:simplify-this}. In light of the above, the condition that $\partial_{\omega}\chi_{\alpha}(\sigma) = 0$ for all $\omega \in \cS(B)$ translates to the condition that the operator
\begin{align}
  \sigma^{-1/2} \tr_A \Big[ \big( \sigma^{\gamma/2} X X^{\dagger} \sigma^{\gamma/2} \big)^{\alpha} \Big] \sigma^{-1/2}
\end{align}
must be proportional to the identity. Hence, the optimum must indeed be a fixed point of the map given in~\eqref{eq:themap}.
\end{proof}

The following lemma implies that $\mathcal{M}_{\alpha}(B)$ contains exactly one element for $\alpha \geq \frac12$.
In light of Lemma~\ref{lm:unique-max1}, this then implies that if such a fixed point exists then we must have $\mathcal{M}_{\alpha}(B) = \cF_{\alpha}(B)$ and, hence, both sets must contain exactly one element.

\begin{lemma}
  \label{lm:unique-max2}
  The map $\sigma_B \mapsto \chi_{\alpha}(\sigma_B)$ is strictly concave with negative definite Hessian for $\alpha \in [\frac12,1)$ and strictly convex with positive definite Hessian for $\alpha > 1$.
\end{lemma}

\begin{proof}

Let us first focus on the case $\alpha \in [\frac12, 1)$ such that $\gamma = \frac{1-\alpha}{\alpha} \in (0,1]$. It suffices to show that $\partial_{\omega}^2 \chi_{\alpha}(\sigma) < 0$ for all $\omega \neq \sigma$ where $\sigma > 0$. We use the notation of Lemma~\ref{lm:unique-max1}. Due to~\eqref{eq:strictly} and the fact that $t \mapsto t^{\alpha}$ has strictly negative second derivative, we have
\begin{align}
  \partial_{\omega}^2 \chi_{\alpha}(\sigma) &< \alpha \tr\big[ \big( X^{\dagger} \sigma^{\gamma} X \big)^{\alpha-1} \cdot \partial_{\omega}^2 \big( X^{\dagger} \sigma^{\gamma} X \big) \big] \,, \label{eq:plugin}
\end{align}
unless $\partial_{\omega}(X^{\dag} \sigma^{\gamma} X ) = 0$. Since we assume that $\rho_{B}$ and $\tau_A$ have full support, it suffices to convince oneself that 
$\partial_{\omega} \sigma^{\gamma} \neq 0$ for all $\omega \neq \sigma$ in order to verify that~\eqref{eq:plugin} holds. Now note that
\begin{align}
 \partial_{\omega}^2 \big( X^{\dagger} \sigma^{\gamma} X \big) = X^{\dag} \big( \partial_{\omega}^2 \sigma^{\gamma} \big) X \leq 0
\end{align}
since $t \mapsto t^{\gamma}$ is operator concave. Plugging this into~\eqref{eq:plugin}, we conclude that $\partial_{\omega}^2 \chi_{\alpha}(\sigma) < 0$.

The argument for $\alpha > 1$ and $\gamma \in (-1, 0)$ proceeds analogously by noting that $t \mapsto t^{\alpha}$ has strictly positive second derivative and $t \mapsto t^{\gamma}$ is operator convex. Moreover, we only need to consider states $\sigma_B$ with full support in this case.
\end{proof}

\section{Differentiability of the R\'enyi Mutual Information}
\label{app:diff2}

 We will need the following Lemma:

\begin{lemma} \label{lm:contin}
  Let $\Omega \subset \mathcal{X}$ for a Banach space $\mathcal{X}$ and let $\Gamma$ be a convex open set in $\mathcal{Y} =\mathbb{R}^m$ for $m \in \mathbb{N}$. Moreover, let $f: \Omega' \times \Gamma \to \mathbb{R}$ be twice continuously Fr\'echet differentiable in an open set $\Omega'$ containing $\Omega$ and has a strictly positive Hessian with regards to $\Gamma$ for all $x \in \Omega$. 
  If further $\bar{y}(x) := \argmin_{y \in \Gamma} f(x,y)$ exists in $\Gamma$ for each $x \in \Omega$, then $g(x) := f(x, \bar{y}(x))$ is continuously Fr\'echet differentiable on $\Omega$ with $D g(x) = D_{\mathcal{X}} f(x, y) |_{y=\bar{y}(x)}$, where $D_{\mathcal{X}}$ denotes the partial Fr\'echet derivative with regards to $\mathcal{X}$.
\end{lemma}

\begin{proof}
  Let us fix any $x_0 \in \Omega$. Note that the minimum, $\bar{y}_0 := \argmin_{y \in \Gamma} f(x_0,y)$, is unique due to our assumption of strict convexity. Moreover, the derivative at the minimum satisfies
  \begin{align}
    D_{\mathcal{Y}} f(x_0, \bar{y}_0)  = 0
  \end{align}
  where $D_{\mathcal{Y}}$ denotes the partial Fr\'echet derivative with regards to $\mathcal{Y}$. Now pick a basis $\{ a_i \}_{i \in [m]}$ of $\mathcal{Y}$ and define $F_i(x,y) := D_{\mathcal{Y}} f(x, y) [a_i]$. We want to apply the implicit function theorem to the vector valued map $F: \Omega' \times \mathcal{Y} \to \mathcal{Y}$. By assumption $F$ is continuously differentiable in an open neighborhood of $(x_0, \bar{y}_0)$. The derivative with regards to $\mathcal{Y}$ is invertible since the Jacobian of $F$ corresponds to the Hessian of $f$ (with regards to $\mathcal{Y}$), i.e.\
  \begin{align}
    (J)_{i,j} = D_{\mathcal{Y}} F_i(x_0,\bar{y}_0) [a_j]  = D_{\mathcal{Y}}^2 f(x_0, \bar{y}_0) [a_i, a_j]  ,
  \end{align}
  and the latter is positive definite by assumption. Hence, $\bar{y}(x)$ is continuously differentiable in some open neighborhood of $x_0$ and
  \begin{align}
    D g(x) = D_{\mathcal{X}} f(x, \bar{y}(x)) &= D_{\mathcal{X}} f(x, y) \big|_{y = \bar{y}(x)} + D_{\mathcal{Y}} f(x,y) [ D_{\mathcal{X}} \bar{y}(x) ] \Big|_{y = \bar{y}(x)} \\
    &= D_{\mathcal{X}} f(x, y) \big|_{y = \bar{y}(x)}
  \end{align}
  by the law of total differentiation.
  Since this holds for any $x_0 \in \Omega$, we conclude the proof.
\end{proof}

This now allows us to show Proposition~\ref{pr:diff}, which we restate here for the convenience of the reader.

\newtheorem*{prop11}{Proposition 11 (restated)}

\begin{prop11}
  Let $\rho_{AB} \in \cS(AB)$ and $\tau_A \in \cS(A)$ such that $\tau_A \gg \rho_A$. Then, the function $\alpha \mapsto \widetilde{I}_{\alpha}(\rho_{AB}\|\tau_A)$ is continuously differentiable for $\alpha \geq \frac12$ with
  \begin{align}
      \frac{\mathrm{d}}{\mathrm{d} \alpha} \widetilde{I}_{\alpha}(\rho_{AB}\|\tau_A) = \frac{\partial}{\partial \alpha} \widetilde{D}_{\alpha}(\rho_{AB}\|\tau_A \otimes \sigma_B) \Big|_{\sigma_B = \widetilde{\sigma}_B^*(\alpha)} ,
  \end{align}
  where $\widetilde{\sigma}_B^*(\alpha)$ is the optimizer in Lemma~\ref{pr:fixed-point-lemma}.
  In particular,
 $\frac{\mathrm{d} }{\mathrm{d}  \alpha} \widetilde{I}_{\alpha}(\rho_{AB}\|\tau_A) \big|_{\alpha = 1} = \frac{1}{2 \log (e)} V(\rho_{AB} \| \tau_A)$.
\end{prop11}

\begin{proof}
We first show that $\alpha \mapsto \widetilde{I}_{\alpha}(\rho_{AB}\|\tau_A)$ is continuously differentiable for $\alpha \in \big[\frac12, 1) \cup (1, \infty)$. Note that the set of strictly positive definite operators is an open set in the convex set of Hermitian operators with unit trace. 
As we have argued in Section~\ref{sc:opt-marginal}, $\chi_{\alpha}(\sigma_B)$ takes its optimum in this set for all values of $\alpha$ under consideration. Clearly, $(\alpha, \sigma) \mapsto \chi_{\alpha}(\sigma)$ is twice continuously Fr\'echet differentiable. Moreover, according to Lemma~\ref{lm:unique-max2}, the function $\sigma \mapsto \chi_{\alpha}(\sigma)$ has positive definite Hessian (for $\alpha > 1$) or negative definite Hessian (for $\alpha < 1$). Hence, Lemma~\ref{lm:contin} establishes the desired continuous differentiability.

It remains to consider the limit $\alpha \to 1$. Let us first calculate the derivative at $\alpha =1$. Note that
\begin{align}
  &\limsup_{h \to 0} \left\{ \frac{1}{h} \left( \widetilde{I}_{1+h}(\rho_{AB}\|\tau_A) - I(\rho_{AB}\|\tau_A) \right) \right\} \\
  &\qquad \qquad \leq \limsup_{h \to 0} \left\{  \frac{1}{h} \left( \widetilde{D}_{1+h}(\rho_{AB}\|\tau_A \otimes \rho_B) - D(\rho_{AB}\|\tau_A \otimes \rho_B) \right) \right\} \\
  &\qquad \qquad= \frac{\partial}{\partial h} \widetilde{D}_{1+h}(\rho_{AB}\|\tau_A \otimes \rho_B)  \bigg|_{h = 0} = \frac{1}{2 \log (e)} V(\rho_{AB}\|\tau_A \otimes \rho_B) \,.
\end{align}
  On the other hand, using Lemma~\ref{lm:duality}, we find
  \begin{align}
&\widetilde{I}_{1+h}(\rho_{AB}\|\tau_A) - I(\rho_{AB}\|\tau_A) = I(\rho_{AC}\|\tau_A^{-1}) - \widetilde{I}_{1-f(h)}(\rho_{AC}\|\tau_A^{-1}) ,
  \end{align}
  where $f: h \mapsto \frac{h}{1+2h}$ satisfies $f(0) = 0$ and $f'(0) = 1$. Using this, we find
\begin{align}
&\liminf_{h \to 0} \left\{ \frac{1}{h} \left( \widetilde{I}_{1+h}(\rho_{AB}\|\tau_A) - I(\rho_{AB}\|\tau_A) \right) \right\} \\
&\qquad \qquad = \liminf_{h \to 0} \left\{ \frac{1}{h} \left( I(\rho_{AC}\|\tau_A^{-1}) - \widetilde{I}_{1-f(h)}(\rho_{AC}\|\tau_A^{-1}) \right) \right\} \\
&\qquad \qquad \geq \liminf_{h \to 0} \left\{ \frac{1}{h} \left( D(\rho_{AC}\|\tau_A^{-1} \otimes \rho_{C}) - \widetilde{D}_{1-f(h)}(\rho_{AC}\|\tau_A^{-1} \otimes \rho_C) \right) \right\} \\
&\qquad \qquad = - \frac{\partial}{\partial h} \widetilde{D}_{1-f(h)}(\rho_{AC}\|\tau_A^{-1} \otimes \rho_C) \bigg|_{h=0} =  \frac{1}{2 \log (e)} V(\rho_{AC}\|\tau_A^{-1} \otimes \rho_C) ,
\end{align}
where the final equation follows from \eqref{H-28}.
Employing \eqref{eq:duality4} establishes the derivative at $\alpha = 1$.

In order to show that the derivative is continuous at $\alpha = 1$, it remains to show that $\widetilde{\sigma}_B^*(\alpha)$ converges to $\widetilde{\sigma}_B^*(1) = \rho_B$ in the limit $\alpha \to 1$. According to Lemma~\ref{pr:fixed-point-lemma}, the optimizer $\widetilde{\sigma}_B^*(\alpha)$ is the (unique) solution with full rank to
 \begin{align}
   f(\alpha, \sigma_B) = \mathcal{X}_{\alpha}(\sigma_B) - \chi_{\alpha}(\sigma_B) \sigma_B = 0
 \end{align}
 Clearly $\sigma_B \mapsto f(1, \sigma_B) = \rho_B - \sigma_B$ has an invertible Jacobian, and thus the implicit function theorem ensures that $\widetilde{\sigma}_B^*(\alpha)$ is continuous around $\alpha = 1$.
\end{proof}

\bibliographystyle{arxiv}
\bibliography{library}

\begin{thebibliography}{10}

\bibitem{audenaert07}
K.~M.~R. Audenaert, L.~Masanes, A.~Ac{\'{i}}n, and F.~Verstraete.
\newblock {Discriminating States: The Quantum Chernoff Bound}.
\newblock {\em Physical Review Letters}, 98(16), apr 2007.
\newblock
  \texttt{\href{http://dx.doi.org/10.1103/PhysRevLett.98.160501}{DOI:\,10.1103/PhysRevLett.98.160501}}.

\bibitem{audenaert12}
K.~M.~R. Audenaert, M.~Mosonyi, and F.~Verstraete.
\newblock {Quantum State Discrimination Bounds for Finite Sample Size}.
\newblock {\em Journal of Mathematical Physics}, 53(12):122205, apr 2012.
\newblock
  \texttt{\href{http://dx.doi.org/10.1063/1.4768252}{DOI:\,10.1063/1.4768252}}.

\bibitem{audenaert07-3}
K.~M.~R. Audenaert, M.~Nussbaum, A.~Szko{\l}a, and F.~Verstraete.
\newblock {Asymptotic Error Rates in Quantum Hypothesis Testing}.
\newblock {\em Communications in Mathematical Physics}, 279(1):251--283, feb
  2008.
\newblock
  \texttt{\href{http://dx.doi.org/10.1007/s00220-008-0417-5}{DOI:\,10.1007/s00220-008-0417-5}}.

\bibitem{beigi13}
S.~Beigi.
\newblock {Sandwiched R{\'{e}}nyi Divergence Satisfies Data Processing
  Inequality}.
\newblock {\em Journal of Mathematical Physics}, 54(12):122202, jun 2013.
\newblock
  \texttt{\href{http://dx.doi.org/10.1063/1.4838855}{DOI:\,10.1063/1.4838855}}.

\bibitem{bhatia97}
R.~Bhatia.
\newblock {\em {Matrix Analysis}}.
\newblock Graduate Texts in Mathematics. Springer, 1997.

\bibitem{brandao10}
F.~G. S.~L. Brand{\~{a}}o and M.~B. Plenio.
\newblock {A Generalization of Quantum Stein's Lemma}.
\newblock {\em Communications in Mathematical Physics}, 295(3):791--828, feb
  2010.
\newblock
  \texttt{\href{http://dx.doi.org/10.1007/s00220-010-1005-z}{DOI:\,10.1007/s00220-010-1005-z}}.

\bibitem{carlen09}
E.~A. Carlen.
\newblock {Trace Inequalities and Quantum Entropy}.
\newblock In R.~Sims and D.~Ueltschi, editors, {\em Entropy and the Quantum},
  volume 529 of {\em Contemporary Mathematics}, page~73. AMS, 2010.

\bibitem{chen00}
P.-N. Chen.
\newblock {Generalization of G{\"{a}}rtner-Ellis Theorem}.
\newblock {\em IEEE Transactions on Information Theory}, 46(7):2752--2760,
  2000.

\bibitem{matthiasphd}
M.~Christandl.
\newblock {\em {The Structure of Bipartite Quantum States - Insights from Group
  Theory and Cryptography}}.
\newblock PhD thesis, apr 2006.
\newblock Available online: \url{http://arxiv.org/abs/quant-ph/0604183}.

\bibitem{christandl09}
M.~Christandl, R.~K{\"{o}}nig, and R.~Renner.
\newblock {Postselection Technique for Quantum Channels with Applications to
  Quantum Cryptography}.
\newblock {\em Physical Review Letters}, 102(2), jan 2009.
\newblock
  \texttt{\href{http://dx.doi.org/10.1103/PhysRevLett.102.020504}{DOI:\,10.1103/PhysRevLett.102.020504}}.

\bibitem{cooney14}
T.~Cooney, M.~Mosonyi, and M.~M. Wilde.
\newblock {Strong Converse Exponents for a Quantum Channel Discrimination
  Problem and Quantum-Feedback-Assisted Communication}.
\newblock aug 2014.
\newblock \texttt{\href{http://arxiv.org/abs/1408.3373}{arXiv:\,1408.3373}}.

\bibitem{csiszar98}
I.~Csisz{\'{a}}r.
\newblock {The Method of Types}.
\newblock {\em IEEE Transactions on Information Theory}, 44(6):2505--2523, oct
  1998.
\newblock
  \texttt{\href{http://dx.doi.org/10.1109/18.720546}{DOI:\,10.1109/18.720546}}.

\bibitem{datta08}
N.~Datta.
\newblock {Min- and Max- Relative Entropies and a New Entanglement Monotone}.
\newblock {\em IEEE Transactions on Information Theory}, 55(6):2816--2826,
  2009.
\newblock
  \texttt{\href{http://dx.doi.org/10.1109/TIT.2009.2018325}{DOI:\,10.1109/TIT.2009.2018325}}.

\bibitem{datta14}
N.~Datta, M.~Tomamichel, and M.~M. Wilde.
\newblock {Second-Order Coding Rates for Entanglement-Assisted Communication}.
\newblock {\em Quantum Information Processing}, may 2016.
\newblock \texttt{\href{http://arxiv.org/abs/1405.1797}{arXiv:\,1405.1797}}.

\bibitem{dembo98}
A.~Dembo and O.~Zeitouni.
\newblock {\em {Large Deviations Techniques and Applications}}.
\newblock Stochastic Modelling and Applied Probability. Springer, 2 edition,
  1998.

\bibitem{frank13}
R.~L. Frank and E.~H. Lieb.
\newblock {Monotonicity of a Relative R{\'{e}}nyi Entropy}.
\newblock {\em Journal of Mathematical Physics}, 54(12):122201, jun 2013.
\newblock
  \texttt{\href{http://dx.doi.org/10.1063/1.4838835}{DOI:\,10.1063/1.4838835}}.

\bibitem{gupta13}
M.~K. Gupta and M.~M. Wilde.
\newblock {Multiplicativity of Completely Bounded p-Norms Implies a Strong
  Converse for Entanglement-Assisted Capacity}.
\newblock {\em Communications in Mathematical Physics}, 334(2):867--887, mar
  2015.
\newblock
  \texttt{\href{http://dx.doi.org/10.1007/s00220-014-2212-9}{DOI:\,10.1007/s00220-014-2212-9}}.

\bibitem{hayashi97}
M.~Hayashi.
\newblock {Asymptotics of Quantum Relative Entropy From Representation
  Theoretical Viewpoint}.
\newblock {\em Journal of Physics A: Mathematical and Theoretical},
  34(16):3413--3419, 1997.
\newblock
  \texttt{\href{http://dx.doi.org/10.1088/0305-4470/34/16/309}{DOI:\,10.1088/0305-4470/34/16/309}}.

\bibitem{hayashi02b}
M.~Hayashi.
\newblock {Optimal Sequence of Quantum Measurements in the Sense of Stein's
  Lemma in Quantum Hypothesis Testing}.
\newblock {\em Journal of Physics A: Mathematical and Theoretical},
  35(50):10759--10773, dec 2002.
\newblock
  \texttt{\href{http://dx.doi.org/10.1088/0305-4470/35/50/307}{DOI:\,10.1088/0305-4470/35/50/307}}.

\bibitem{hayashi06}
M.~Hayashi.
\newblock {\em {Quantum Information --- An Introduction}}.
\newblock Springer, 2006.

\bibitem{hayashi07}
M.~Hayashi.
\newblock {Error Exponent in Asymmetric Quantum Hypothesis Testing and its
  Application to Classical-Quantum Channel Coding}.
\newblock {\em Physical Review A}, 76(6):062301, dec 2007.
\newblock
  \texttt{\href{http://dx.doi.org/10.1103/PhysRevA.76.062301}{DOI:\,10.1103/PhysRevA.76.062301}}.

\bibitem{hayashi09}
M.~Hayashi.
\newblock {Information Spectrum Approach to Second-Order Coding Rate in Channel
  Coding}.
\newblock {\em IEEE Transactions on Information Theory}, 55(11):4947--4966, nov
  2009.
\newblock
  \texttt{\href{http://dx.doi.org/10.1109/TIT.2009.2030478}{DOI:\,10.1109/TIT.2009.2030478}}.

\bibitem{hayashi09b}
M.~Hayashi.
\newblock {Universal Coding for Classical-Quantum Channel}.
\newblock {\em Communications in Mathematical Physics}, 289(3):1087--1098, may
  2009.
\newblock
  \texttt{\href{http://dx.doi.org/10.1007/s00220-009-0825-1}{DOI:\,10.1007/s00220-009-0825-1}}.

\bibitem{hayashi16}
M.~Hayashi.
\newblock {\em {Group Representations for Quantum Theory}}.
\newblock Springer, 2016.

\bibitem{hayashi03}
M.~Hayashi and H.~Nagaoka.
\newblock {General Formulas for Capacity of Classical-Quantum Channels}.
\newblock {\em IEEE Transactions on Information Theory}, 49(7):1753--1768, jul
  2003.
\newblock
  \texttt{\href{http://dx.doi.org/10.1109/TIT.2003.813556}{DOI:\,10.1109/TIT.2003.813556}}.

\bibitem{hiai91}
F.~Hiai and D.~Petz.
\newblock {The Proper Formula for Relative Entropy and its Asymptotics in
  Quantum Probability}.
\newblock {\em Communications in Mathematical Physics}, 143(1):99--114, dec
  1991.
\newblock
  \texttt{\href{http://dx.doi.org/10.1007/BF02100287}{DOI:\,10.1007/BF02100287}}.

\bibitem{hiaipetz14}
F.~Hiai and D.~Petz.
\newblock {\em {Introduction to Matrix Analysis and Applications}}.
\newblock Springer, 2014.
\newblock
  \texttt{\href{http://dx.doi.org/10.1007/978-3-319-04150-6}{DOI:\,10.1007/978-3-319-04150-6}}.

\bibitem{kassay94}
G.~Kassay.
\newblock {A Simple Proof for K{\"{o}}nig's Minimax Theorem}.
\newblock {\em Acta Mathematica Hungarica}, 63(4):371--374, dec 1994.
\newblock
  \texttt{\href{http://dx.doi.org/10.1007/BF01874462}{DOI:\,10.1007/BF01874462}}.

\bibitem{koenig68}
H.~K{\"{o}}nig.
\newblock {{\"{U}}ber das von Neumannsche Minimax-Theorem}.
\newblock {\em Archiv der Mathematik}, 19(5):482--487, dec 1968.
\newblock
  \texttt{\href{http://dx.doi.org/10.1007/BF01898769}{DOI:\,10.1007/BF01898769}}.

\bibitem{kumagai13}
W.~Kumagai and M.~Hayashi.
\newblock {Second Order Asymptotics of Optimal Approximate Conversion for
  Probability Distributions and Entangled States and Its Application to LOCC
  Cloning}.
\newblock jun 2013.
\newblock \texttt{\href{http://arxiv.org/abs/1306.4166}{arXiv:\,1306.4166}}.

\bibitem{li12}
K.~Li.
\newblock {Second-Order Asymptotics for Quantum Hypothesis Testing}.
\newblock {\em Annals of Statistics}, 42(1):171--189, feb 2014.
\newblock
  \texttt{\href{http://dx.doi.org/10.1214/13-AOS1185}{DOI:\,10.1214/13-AOS1185}}.

\bibitem{lintomamichel14}
S.~M. Lin and M.~Tomamichel.
\newblock {Investigating Properties of a Family of Quantum R{\'{e}}nyi
  Divergences}.
\newblock {\em Quantum Information Processing}, 14(4):1501--1512, apr 2015.
\newblock
  \texttt{\href{http://dx.doi.org/10.1007/s11128-015-0935-y}{DOI:\,10.1007/s11128-015-0935-y}}.

\bibitem{matthews12}
W.~Matthews and S.~Wehner.
\newblock {Finite Blocklength Converse Bounds for Quantum Channels}.
\newblock {\em IEEE Transactions on Information Theory}, 60(11):7317--7329, nov
  2014.
\newblock
  \texttt{\href{http://dx.doi.org/10.1109/TIT.2014.2353614}{DOI:\,10.1109/TIT.2014.2353614}}.

\bibitem{mosonyiogawa13}
M.~Mosonyi and T.~Ogawa.
\newblock {Quantum Hypothesis Testing and the Operational Interpretation of the
  Quantum R{\'{e}}nyi Relative Entropies}.
\newblock {\em Communications in Mathematical Physics}, 334(3):1617--1648, mar
  2015.
\newblock
  \texttt{\href{http://dx.doi.org/10.1007/s00220-014-2248-x}{DOI:\,10.1007/s00220-014-2248-x}}.

\bibitem{mosonyi14}
M.~Mosonyi and T.~Ogawa.
\newblock {Two Approaches to Obtain the Strong Converse Exponent of Quantum
  Hypothesis Testing for General Sequences of Quantum States}.
\newblock {\em IEEE Transactions on Information Theory}, jul 2015.
\newblock
  \texttt{\href{http://dx.doi.org/10.1109/TIT.2015.2489259}{DOI:\,10.1109/TIT.2015.2489259}}.

\bibitem{mukherjea06}
A.~Mukherjea, M.~Rao, and S.~Suen.
\newblock {A Note on Moment Generating Functions}.
\newblock {\em Statistics {\&} Probability Letters}, 76(11):1185--1189, 2006.
\newblock
  \texttt{\href{http://dx.doi.org/10.1016/j.spl.2005.12.026}{DOI:\,10.1016/j.spl.2005.12.026}}.

\bibitem{lennert13}
M.~M{\"{u}}ller-Lennert, F.~Dupuis, O.~Szehr, S.~Fehr, and M.~Tomamichel.
\newblock {On Quantum R{\'{e}}nyi Entropies: A New Generalization and Some
  Properties}.
\newblock {\em Journal of Mathematical Physics}, 54(12):122203, jun 2013.
\newblock
  \texttt{\href{http://dx.doi.org/10.1063/1.4838856}{DOI:\,10.1063/1.4838856}}.

\bibitem{nagaoka01}
H.~Nagaoka.
\newblock {Strong Converse Theorems in Quantum Information Theory}.
\newblock In {\em Proc. ERATO Workshop on Quantum Information Science 2001},
  page~33, 2001.
\newblock Available online:
  \url{http://qci.is.s.u-tokyo.ac.jp/qci/eqis/ad-program.html}.

\bibitem{nagaoka06}
H.~Nagaoka.
\newblock {The Converse Part of The Theorem for Quantum Hoeffding Bound}.
\newblock nov 2006.
\newblock
  \texttt{\href{http://arxiv.org/abs/quant-ph/0611289}{arXiv:\,quant-ph/0611289}}.

\bibitem{nagaoka07}
H.~Nagaoka and M.~Hayashi.
\newblock {An Information-Spectrum Approach to Classical and Quantum Hypothesis
  Testing for Simple Hypotheses}.
\newblock {\em IEEE Transactions on Information Theory}, 53(2):534--549, feb
  2007.
\newblock
  \texttt{\href{http://dx.doi.org/10.1109/TIT.2006.889463}{DOI:\,10.1109/TIT.2006.889463}}.

\bibitem{nakagawa93}
K.~Nakagawa and F.~Kanaya.
\newblock {On the Converse Theorem in Statistical Hypothesis Testing for Markov
  Chains}.
\newblock {\em IEEE Transactions on Information Theory}, 39(2):629--633, 1993.

\bibitem{noetzel13}
J.~N{\"{o}}tzel.
\newblock {Hypothesis Testing on Invariant Subspaces of the Symmetric Group,
  Part I - Quantum Sanov's Theorem and Arbitrarily Varying Sources}.
\newblock oct 2013.
\newblock \texttt{\href{http://arxiv.org/abs/1310.5553}{arXiv:\,1310.5553}}.

\bibitem{nussbaum09}
M.~Nussbaum and A.~Szko{\l}a.
\newblock {The Chernoff Lower Bound for Symmetric Quantum Hypothesis Testing}.
\newblock {\em Annals of Statistics}, 37(2):1040--1057, apr 2009.
\newblock
  \texttt{\href{http://dx.doi.org/10.1214/08-AOS593}{DOI:\,10.1214/08-AOS593}}.

\bibitem{ogawa04}
T.~Ogawa and M.~Hayashi.
\newblock {On Error Exponents in Quantum Hypothesis Testing}.
\newblock {\em IEEE Transactions on Information Theory}, 50(6):1368--1372, jun
  2004.
\newblock
  \texttt{\href{http://dx.doi.org/10.1109/TIT.2004.828155}{DOI:\,10.1109/TIT.2004.828155}}.

\bibitem{ogawa00}
T.~Ogawa and H.~Nagaoka.
\newblock {Strong Converse and Stein's Lemma in Quantum Hypothesis Testing}.
\newblock {\em IEEE Transactions on Information Theory}, 46(7):2428--2433, nov
  2000.
\newblock
  \texttt{\href{http://dx.doi.org/10.1109/18.887855}{DOI:\,10.1109/18.887855}}.

\bibitem{ohya93}
M.~Ohya and D.~Petz.
\newblock {\em {Quantum Entropy and Its Use}}.
\newblock Springer, 1993.

\bibitem{petz86}
D.~Petz.
\newblock {Quasi-entropies for Finite Quantum Systems}.
\newblock {\em Reports on Mathematical Physics}, 23(1):57--65, feb 1986.
\newblock
  \texttt{\href{http://dx.doi.org/10.1016/0034-4877(86)90067-4}{DOI:\,10.1016/0034-4877(86)90067-4}}.

\bibitem{polyanskiy13}
Y.~Polyanskiy.
\newblock {Saddle Point in the Minimax Converse for Channel Coding}.
\newblock {\em IEEE Transactions on Information Theory}, 59(5):2576--2595, may
  2013.
\newblock
  \texttt{\href{http://dx.doi.org/10.1109/TIT.2012.2236382}{DOI:\,10.1109/TIT.2012.2236382}}.

\bibitem{polyanskiy10}
Y.~Polyanskiy, H.~V. Poor, and S.~Verd{\'{u}}.
\newblock {Channel Coding Rate in the Finite Blocklength Regime}.
\newblock {\em IEEE Transactions on Information Theory}, 56(5):2307--2359, may
  2010.
\newblock
  \texttt{\href{http://dx.doi.org/10.1109/TIT.2010.2043769}{DOI:\,10.1109/TIT.2010.2043769}}.

\bibitem{renner05}
R.~Renner.
\newblock {\em {Security of Quantum Key Distribution}}.
\newblock PhD thesis, ETH Zurich, dec 2005.
\newblock
  \texttt{\href{http://arxiv.org/abs/quant-ph/0512258}{arXiv:\,quant-ph/0512258}}.

\bibitem{sharma13}
N.~Sharma and N.~A. Warsi.
\newblock {Fundamental Bound on the Reliability of Quantum Information
  Transmission}.
\newblock {\em Physical Review Letters}, 110(8):080501, feb 2013.
\newblock
  \texttt{\href{http://dx.doi.org/10.1103/PhysRevLett.110.080501}{DOI:\,10.1103/PhysRevLett.110.080501}}.

\bibitem{sion58}
M.~Sion.
\newblock {On General Minimax Theorems}.
\newblock {\em Pacific Journal of Mathematics}, 8:171--176, 1958.

\bibitem{mybook}
M.~Tomamichel.
\newblock {\em {Quantum Information Processing with Finite Resources ---
  Mathematical Foundations}}, volume~5 of {\em SpringerBriefs in Mathematical
  Physics}.
\newblock Springer International Publishing, 2016.
\newblock
  \texttt{\href{http://dx.doi.org/10.1007/978-3-319-21891-5}{DOI:\,10.1007/978-3-319-21891-5}}.

\bibitem{tomamichel13}
M.~Tomamichel, M.~Berta, and M.~Hayashi.
\newblock {Relating Different Quantum Generalizations of the Conditional
  R{\'{e}}nyi Entropy}.
\newblock {\em Journal of Mathematical Physics}, 55(8):082206, aug 2014.
\newblock
  \texttt{\href{http://dx.doi.org/10.1063/1.4892761}{DOI:\,10.1063/1.4892761}}.

\bibitem{tomamichel12}
M.~Tomamichel and M.~Hayashi.
\newblock {A Hierarchy of Information Quantities for Finite Block Length
  Analysis of Quantum Tasks}.
\newblock {\em IEEE Transactions on Information Theory}, 59(11):7693--7710, nov
  2013.
\newblock
  \texttt{\href{http://dx.doi.org/10.1109/TIT.2013.2276628}{DOI:\,10.1109/TIT.2013.2276628}}.

\bibitem{tomamicheltan14}
M.~Tomamichel and V.~Y.~F. Tan.
\newblock {Second-Order Asymptotics for the Classical Capacity of
  Image-Additive Quantum Channels}.
\newblock {\em Communications in Mathematical Physics}, 338(1):103--137, aug
  2015.
\newblock
  \texttt{\href{http://dx.doi.org/10.1007/s00220-015-2382-0}{DOI:\,10.1007/s00220-015-2382-0}}.

\bibitem{fabregas16}
G.~Vazquez-Vilar, A.~{Tauste Campo}, A.~{Guillen i Fabregas}, and A.~Martinez.
\newblock {Bayesian M-Ary Hypothesis Testing: The Meta-Converse and
  Verd{\'{u}}-Han Bounds Are Tight}.
\newblock {\em IEEE Transactions on Information Theory}, 62(5):2324--2333, may
  2016.
\newblock
  \texttt{\href{http://dx.doi.org/10.1109/TIT.2016.2542080}{DOI:\,10.1109/TIT.2016.2542080}}.

\bibitem{wang10}
L.~Wang and R.~Renner.
\newblock {One-Shot Classical-Quantum Capacity and Hypothesis Testing}.
\newblock {\em Physical Review Letters}, 108(20):200501, may 2012.
\newblock
  \texttt{\href{http://dx.doi.org/10.1103/PhysRevLett.108.200501}{DOI:\,10.1103/PhysRevLett.108.200501}}.

\bibitem{wilde13}
M.~M. Wilde, A.~Winter, and D.~Yang.
\newblock {Strong Converse for the Classical Capacity of Entanglement-Breaking
  and Hadamard Channels via a Sandwiched R{\'{e}}nyi Relative Entropy}.
\newblock {\em Communications in Mathematical Physics}, 331(2):593--622, jul
  2014.
\newblock
  \texttt{\href{http://dx.doi.org/10.1007/s00220-014-2122-x}{DOI:\,10.1007/s00220-014-2122-x}}.

\end{thebibliography}

\end{document}